\newcommand{\Eb}{{\mathbb{E}}}
\newcommand{\R}{{\mathbb{R}}}
\newcommand{\N}{{\mathbb{N}}}
\DeclareMathOperator{\argmin}{argmin}
\DeclareMathOperator{\conv}{conv}
\newcommand{\diag}{{\rm{diag}}}
\newtheorem{remark}[theorem]{Remark}
\title{Sparse Recovery for Overcomplete Frames: Sensing Matrices and Recovery Guarantees}
\author{Xuemei Chen\thanks{Department of Mathematics and Statistics, University of North Carolina Wilmington, 601 S. College Rd., Wilmington, NC 28403 (\email{chenxuemei@uncw.edu}).} \and  Christian K\"ummerle\thanks{Department of Computer Science, The University of North Carolina at Charlotte,
9201 University City Blvd, Charlotte, NC 28223 \email{(kuemmerle@charlotte.edu)}.}
 \and Rongrong Wang\thanks{Department of Computational Mathematics, Science and Engineering and Department of Mathematics, Michigan State University, 426 Auditorium Rd.,
East Lansing, MI 48824 (\email{wangron6@msu.edu}).}}
\begin{document}
\maketitle

\begin{abstract}
Signal models formed as  linear combinations of few atoms from an over-complete dictionary or few frame vectors from a redundant frame have become central to many applications in high dimensional signal processing and data analysis. A core question is, by exploiting the intrinsic low dimensional structure of the signal, how to design the sensing process and decoder in a way that the number of measurements is essentially close to the complexity of the signal set. 
This chapter provides a survey of important results in answering this question, with an emphasis on a basis pursuit like convex optimization decoder that admits a wide range of random sensing matrices. The results are quite established in the case signals are sparse in an orthonormal basis, while the case with frame sparse signals is much less explored. In addition to presenting the latest results on recovery guarantee and how few random heavier-tailed measurements fulfill these recovery guarantees, this chapter also aims to provide some insights in proof techniques.
We also take the opportunity of this book chapter to publish an interesting result (\Cref{thm:dripbad}) about a restricted isometry like property related to a frame.
\end{abstract}

\section{Introduction}\label{sec:intro}

In the last two decades, the problem of recovering structured signals from significantly under-sampled linear measurements, often referred to as \emph{compressed sensing} or \emph{compressive sensing}, has been extensively studied by applied mathematicians, signal processing researchers,  information theorists and statisticians since building on and extending the seminal works of Donoho, Cand\`es, Romberg and Tao \cite{CRT06, CT06, D06}, which showed that a generic variant of the problem can be solved efficiently from a number of random linear measurements that is essentially linear in the dimension of the set of structured signals. The ensued line of research has led to the popularization of related data-efficient, optimization-based decoders in seismology and geophysics \cite{Log65,Levy-1981Reconstruction,Santosa86,Donoho-1992Signal,Her10,Baraniuk-2017Compressive}, magnetic resonance tomography and medical imaging \cite{LDP07,KrahmerWard-2014,Jaspan-2015compressed,FWHS16,Don18,Ravishankar2019image,Adcock-2021Compressive}, and machine learning \cite{Tibshirani-1996Lasso,EladAharon-2006ImageDenoising,Mairal-2008Supervised,Hastie-2015statistical,Papyan-2018Theoretical}. 

Formally, in this problem, we wish to recover $z_0\in\R^d$ from its under-sampled and possibly corrupted linear measurement $y=Az_0+w\in\R^m$ where $m$ is typically much less than $d$ and $w$ represents the measurement corruption. Given that $A\in\R^{m\times d}$ is underdetermined, such recovery is only possible if we have more information about $z_0$. Additionally, due to the unavoidable presence of noise $w$, we are interested in a recovery procedure which is able to identify $z_0$ up to an error that is proportional to a norm $\|w\|$ of $w$.

While this problem is ill-posed without further assumption, in the general setting of compressed sensing, the problem is made well-posed by assuming that $z_0$ is an element of a set with an underlying low dimensional structure. Specifically, $z_0$ could be sparse in the standard basis (which means that the majority of its coordinates are zero), in an application-tailored orthogonal basis such as a suitable wavelet \cite{LDP07,Mallat-1999Wavelet} or shearlet basis \cite{Kutyniok-2012Sompactly}, or be written as a linear combination of only a few atoms in a redundant dictionary \cite{Elad07,DMP1,CENR11, MBKW23}. Similarly, $z_0$ can be a matrix of low rank, in which case the $A$ corresponds to a linear operator acting on matrices. In this case, the problem is known under the name \emph{low-rank matrix recovery} \cite{Davenport16,CaiWeiExploiting18,chen_chi18,ChiLuChen19}, and has applications in recommender systems \cite{koren_bell_volinsky,recht2013parallel}, image processing \cite{Jin-2015Annihilating,Sengupta-2017New},  computational physics \cite{Shechtman-2015Phase,Candes-2015Phase} and control theory \cite{Fazel_hindi_boyd04,Dorfler-2022Bridging}.

The theoretical understanding of the compressed sensing problem and of the properties of suitable, efficient reconstruction algorithms has reached some maturity in the literature. We point to the monographs \cite{FR13,Rish-2014Sparse,Zhao-2018Sparse,Adcock-2021Compressive} and references therein for an overview of known results. Conditions for low-rank matrix recovery using tractable algorithms have similarly been established in the last few years; for that problem, there are fewer comprehensive works covering the general theory, but \cite{Gross-2011Recovering,Chen-2015Incoherence,Davenport16,CaiWeiExploiting18,ChiLuChen19} contain the majority of the most relevant results.

However, the vast majority of results for these problems study recovery guarantees if the signal $z_0$ is sparse or low-rank with respect to the standard basis or an orthogonal basis set. In many application domains, this assumption is too rigid to be satisfied, or better reconstruction can be achieved by considering sparsity of the signal with respect with to a transform domain, which is typically chosen using a certain degree of redundancy~\cite{EladAharon-2006ImageDenoising, Hastie-2015statistical, Mairal-2008Supervised, CCS08}. 
Specifically, in this chapter, we present and review results on theoretical guarantees for parsimonious recovery problems under this more realistic modeling.

In particular, we will focus on the case where $z_0$ is sparse or approximately sparse in a dictionary or frame $F=\{f_1,f_2,\cdots, f_n\}$.\footnote{A set of vector $F=\{f_i\}_{i\in I}$ is a \emph{frame} for a Hilbert space $H$ if there exists $0<A\leq B<\infty$ such that $A\|x\|^2\leq\sum_{i\in I}|\langle x, f_i\rangle|^2\leq B\|x\|^2$ for all $x\in H$.}  In the finite dimensional setting we are considering, we can think of $F$ as a spanning set of $\R^d$. Moreover, when appropriate, we will also use $F$ for the matrix $F = [f_1, f_2,\cdots, f_n] \in \R^{d \times n}$ collecting the frame elements of the frame. With this preparation, we can define, for a positive integer $k$ representing a sparsity level, the set of \emph{$F$-$k$-sparse} vectors $z \in \R^d$ as
\begin{equation} \label{eq:SigmaFk}
\Sigma_{F,k}:=\Big\{z\in\R^d: z=\sum_{i=1}^n x_if_i=Fx, \|x\|_0\leq k\Big\}
\end{equation}
with $\|x\|_0$ denoting the number of non-zero entries of $x$.

For the compressed sensing problem of recovering vectors sparse with respect to a frame, the key questions  have been focused on:
\begin{itemize}
\item[Q1.]\label{Q1} \ \ Which tractable algorithm or decoder can be used to recover $z_0\in\Sigma_{F,k}$, given access to $A$ and $y=Az_0$?
\item[Q2.]\label{Q2} \ \ Given $F$, how to design the sensing matrix $A$ so that the decoder found in Q1 is successful?
\item[Q3.] \ \ For the design in Q2, what is the minimum number of measurements $m$?
\item[Q4.]\label{Q4} \ \ With the appropriately designed $A$, is the decoder found in Q1 stable and robust with respect to model misfit and additive noise?
\end{itemize}

In this chapter, we provide a short survey about relevant results addressing these questions.
To quantify the model misfit with regards to the signal space \cref{eq:SigmaFk}, we define
\begin{equation}\label{equ:tail}
\sigma_{F, k}(z):=\min_{v\in\Sigma_{F,k}}\|z-v\|_F,
\end{equation}
the distance of $z$ to the set of $F$-$k$-sparse vectors. Here the $F$-norm is defined using the $\ell_1$-norm 
\begin{equation}\label{equ:zF}
\|z\|_{F}:=\min\{\|x\|_1 : Fx=z\},
\end{equation}
based on coefficients relative to the frame $F$.

To be more specific about Q4, we would like the decoder $\Delta : \R^m \to \R^d$ to perform relatively well even when the signals are not exactly $F$-$k$-sparse (meaning that $\sigma_{F, k}(z)>0$; such a property is known as \emph{stability} \cite[Section 4.2]{FR13}), and also under the presence of measurement noise $w$ (this is known as \emph{robustness} \cite[Section 4.3]{FR13}). Quantifying the latter using a suitable norm $\|w \|$ on $w$, it is desirable to obtain a stable and robust recovery guarantee such that there exist constants $C_1, C_2$ depending on $A$ and possibly the dimensions which allow the inequality
\begin{equation}\label{equ:performance}
\|z - \Delta(A z + w)\|\leq C_1\sigma_{F,k}(z)+C_2\|w\|,
\end{equation}
to hold for all $z \in \R^d$.

Regarding Q1, as common in the compressed sensing literature, we distinguish two types of efficient decoders: iterative solvers such as greedy methods \cite{Giryes-2015Greedy,Tirer-2020GeneralizingCoSaMP} or thresholding based methods \cite{PelegElad-2013} (among which compressive sampling matching pursuit \cite{DNW,Giryes-2014Greedy} and iterative hard thresholding \cite{Foucart-2016Dictionary} are some of the most well-known examples) and decoders based on the solution of an optimization problem. 

Due to their popularity in imaging problems \cite{Adcock-2021Compressive} and the maturity of their theory, we will focus on optimization-based decoders defined as the solution of optimization problems of the form
\begin{equation}\label{equ:f}
\Delta_{f, A, \eta}(y):=\underset{z \in \R^d}{\argmin}~ f(z) \quad\text{ subject to }\quad \|Az - y\|_2\leq\eta,
\end{equation}
where $f: \R^d \to \R$ is a suitable objective function \cite{Decode,CDD09,FR13,ACP11,MBKW23}. It is understood that the objective function $f$ needs to be selected in a way that is simultaneously sparsity-promoting, can be evaluated efficiently, and also such that the respective optimization problem of \cref{equ:f} is easy or at least tractable to solve. While non-convex choices of $f$ \cite{Foucart-2009Lq, Saab-2010Sparse,Wipf-2010Iterative,ACP11,Zheng2017-LpL1}, \cite[Chapter 7]{Zhao-2018Sparse} often lead to improved performance compared to convex sparsity-promoting functions, choosing $f$ as a convex function has the advantage that properties of the solution of \cref{equ:f} can be decoupled from the particular solver to be used and that they can be analyzed using mature tools from convex optimization \cite{FR13,Adcock-2021Compressive}.

For the  frame-based recovery problem \cref{eq:SigmaFk},  the convex $F$-norm $f(z) = \|z\|_{F}$ of \cref{equ:zF} happens to be a suitable objective function, in which case the decoder \cref{equ:f} is denoted as
\begin{equation}\label{equ:F}
\Delta_{F, A, \eta}(y)=\underset{z \in \R^d}{\argmin}~ \|z\|_F \quad\text{ subject to }\quad \|Az - y\|_2\leq\eta.
\end{equation}

\paragraph*{Notations and Organization} For an integer $N$, we let $[N]$ be the index set $\{1, 2, \cdots, N\}$. For a vector $x$, $\|x\|_p=(\sum |x_i|^p)^{1/p}$ is its $p$-norm. If $T$ is an index set, then $x_T$ is the vector that has the same value as $x$ on $T$ and $0$ elsewhere.
For a matrix $A$,  $\ker(A)$ is the kernel (null space) of $A$, and $\|A\|_2$ is its spectral norm. $I_N$ is the $N\times N$ identity matrix.

The rest of the chapter is organized as follows. Section \ref{sec:basis:guarantee} provides recovery guarantees of the $\ell_1$ minimization problem for recovering vectors sparse in an orthonormal basis under various conditions, as listed in Definition \ref{def:basisp}, on the sensing matrix $A$. We make a comparison among these conditions towards the end. Section \ref{sec:recoveryguarantees:frames} explores recovery guarantees of \eqref{equ:F}, i.e., the $\ell_1$ synthesis method with the general frame $F$. Our attention focuses on the null space property like conditions for the sensing matrix $A$. The $\ell_1$ analysis method is also touched upon as a related model. Section \ref{sec:random} focuses on how random sensing matrices fulfill the conditions laid out in \Cref{def:basisp} or \Cref{def:Fp}. Sub-Gaussian matrices have been typical in this regard while we also provide a wider range of random measurements with heavier-tail behavior than sub-Gaussian.

\section{Recovery Guarantees for Signals with Sparsity in Orthonormal Basis}\label{sec:basis:guarantee}
We first present relevant results for the case that the frame matrix $F$ is the identity matrix, which corresponds to the setup of the initial works  of compressed sensing theory \cite{CRT06, CT06, D06,CDD09,FR13}. In this case, \eqref{equ:F} becomes
\begin{equation}\label{equ:bp}
\Delta_{1,A,\eta}(y):=\Delta_{I_d,A,\eta}(y)=\underset{z \in \R^d: \|Az - y\|_2\leq\eta }{\argmin} \|z\|_1,
\end{equation}
which is also known by the name of \emph{quadratically constrained basis pursuit} \cite{CRT06,Brugiapaglia-2018Robustness,Adcock-2021Compressive}, and for $\eta=0$, it can be simplified to the equality-constrained \emph{basis pursuit} \cite{Chen-2001Atomic,FR13} decoder
\begin{equation}\label{equ:bp0}
\Delta_{1,A,0}(y) =\argmin_z \|z\|_1 \quad\text{ subject to } \quad Az =y.
\end{equation}
Equality-constrained \emph{basis pursuit} is a suitable decoder in the case of a non-existent or small noise level, or in the case of unknown noise level \cite{W10, F14,Brugiapaglia-2018Robustness}.

Suitable recovery guarantees for \eqref{equ:bp} and \eqref{equ:bp0} can in most cases easily be extended to $\Delta_{F, A, \eta}(y)$ or $\Delta_{F,A,0}(y)$ and a signal model in which $F$ is not the identity, but an orthonormal basis, by incorporating $F$ into the measurement matrix $A$ as  $\Delta_{I_d, A F, \eta}(y)$. We shorten the notation to $\Sigma_k=\Sigma_{I_d, k}$ and $\sigma_k(z)=\sigma_{I_d, k}(z)$ for this section.

To answer \hyperref[Q4]{Q2}, a series of properties on the sensing matrix have been proposed over the years, the most relevant of which we list below. These properties can be used to characterize the performance of the decoders $\Delta_{1,A,\eta}$ for problems involving such sensing matrices.
\begin{definition}\label{def:basisp}
Let $A\in\R^{m\times d}$.
\begin{enumerate}
\item $A$ is said to have the \emph{null space property} of order $k$~\cite{DE03, GN03, CDD09} if 
\begin{equation}\tag{NSP-$k$}\label{nsp}
\|z_T\|_1<\|z_{T^c}\|_1, \quad \forall z\in\ker(A)\backslash\{0\}, \forall T \text{ with } |T|\leq k.
\end{equation}
\item $A$ is said to satisfy the \emph{restricted isometry property}~\cite{CT06} of order $k$ with constant $0 \leq \delta < 1$  if 
\begin{equation} \tag{{RIP-(k,\ensuremath{\delta})}} \label{rip}
(1 - \delta) \|z\|_2^2 \leq \|Az\|^2_2 \leq (1 + \delta) \|z\|_2^2, \quad \forall  z\in\Sigma_k.
\end{equation}
\item If $A$ has $\ell_2$-normalized columns $a_1, a_2, \cdots, a_d$, then the \emph{coherence} of $A$~\cite{T04} is defined as
\begin{equation} \label{eq:def:coherence}	
\mu(A):=\max_{i\neq j}|\langle a_i, a_j\rangle|.
\end{equation}
\item $A$ is said to have the \emph{$\ell_q$-robust null space property} of order $k$ with $\rho>0, \tau>0$~\cite{F14} if
\begin{equation}\tag{RNSP-\ensuremath{(q,k,\rho,\tau)}} \label{rnsp}
\|z_T\|_q\leq\frac{\rho}{k^{1-1/q}}\|z_{T^c}\|_1+\tau\|Az\|_2, \forall z \in \R^{d} , \forall T \text{ with } |T|\leq k.
\end{equation}
\item $A$ is said to have the \emph{quotient property}~\cite{W10, F14} with constant $\alpha>0$ if 
\begin{equation}\tag{QP-\ensuremath{\alpha}} \label{qp}
\|y\|_A\leq\alpha \sqrt{k_*}\|y\|_2, \forall y\in\R^m,
\end{equation}
where $k^*=m/\log(ed/m)$. 
\item $A$ is said to satisfy the \emph{robust width property} of order $k$ with constants $c_0, c_1$~\cite{CM21} if
\begin{equation}\tag{{RWP-\ensuremath{(k,c_0,c_1)}}} \label{rwp}
\|z\|_2\leq\frac{c_0}{\sqrt{k}}\|z\|_1, \forall z \in \R^d{} \text{ such that }\|Az\|_2<c_1\|z\|_2.
\end{equation}
\end{enumerate}
\end{definition}

Given the definition \eqref{equ:zF}, the quotient property with constant $\alpha$ \eqref{qp} holds if and only if for any $y\in\R^m, \text{ there exists }z\in\R^d\text{ such that }Az=y\text{ and }\|z\|_1\leq\alpha \sqrt{k_*}\|y\|_2
$. \eqref{qp} can be also interpreted as a geometric property \cite{Gluskin-1989Extremal,Guedon-2022Geometry} stating that the $\ell_2$-ball $B_2^d$ is contained in an appropriately scaled centrally symmetric polytope spanend by the columns of $A = \begin{bmatrix}
	a_1, & a_2, & \ldots, & a_d
\end{bmatrix}$, which can be written as $B_2^d\subseteq \alpha\sqrt{k_*}A(B_1^m)$ if $A(B_1^m)$ is the image of the $\ell_1$-ball $B_1^m$ with respect to $A$. 
The \emph{width property}, proposed in  \cite{KT07}, is a weaker version of \eqref{rwp} and closely related to the null space property \eqref{nsp}.

We sample some important results affirmatively answering \hyperref[Q4]{Q4} for suitable parameter choices.
\begin{theorem}[\cite{CZ13}]\label{thm:rip}
If $A$ satisfies the RIP-$(2k,\delta)$ with $\delta<1/\sqrt{2}$, then there exists constants $D_1,D_2 > 0$ such that
\begin{equation}\label{equ:thmrip}
\|z-\Delta_{1,A,\eta}(Az+w)\|_2\leq D_1\frac{\sigma_k(z)}{\sqrt{k}}+D_2\eta
\end{equation}
for all $z\in\R^d$ and $w\in\R^m$ with $\|w\|_2\leq\eta$. Both $D_1, D_2$ only depend on $\delta$.
\end{theorem}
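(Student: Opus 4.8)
The plan is to bound the reconstruction error $h := \Delta_{1,A,\eta}(Az+w) - z$ in $\ell_2$, using two ingredients that are standard in the analysis of \cref{equ:bp} followed by the sharp RIP argument of \cite{CZ13}. Write $\hat z = \Delta_{1,A,\eta}(Az+w)$ and let $T_0$ be the support of a best $k$-term approximation of $z$, so that $\|z_{T_0^c}\|_1 = \sigma_k(z)$ (recall that for $F=I_d$ the $F$-norm is the $\ell_1$-norm, hence $\sigma_k(z)=\min_{v\in\Sigma_k}\|z-v\|_1$). Since $z$ is feasible for the program defining $\hat z$ (because $\|Az - (Az+w)\|_2 = \|w\|_2 \le \eta$), optimality of $\hat z$ gives $\|\hat z\|_1 \le \|z\|_1$; expanding $\|z+h\|_1$ over $T_0$ and $T_0^c$ yields the cone condition $\|h_{T_0^c}\|_1 \le \|h_{T_0}\|_1 + 2\sigma_k(z)$. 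Simultaneously, the triangle inequality gives the fidelity bound $\|Ah\|_2 \le \|A\hat z - (Az+w)\|_2 + \|w\|_2 \le 2\eta$.

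The core of the argument is to convert the RIP-$(2k,\delta)$ hypothesis into $\ell_2$ control of $h_{T_0}$. The naive route splits the tail $h_{T_0^c}$ into consecutive blocks of size $k$ ordered by decreasing magnitude and sums the resulting RIP estimates geometrically; this is lossy and only yields a threshold strictly below $1/\sqrt2$ (e.g. $\delta < 1/2$ or $\delta < \sqrt2 - 1$). To reach the sharp constant I would instead use the sparse-representation-of-a-polytope device of \cite{CZ13}: any vector $v$ with $\|v\|_\infty \le \|v\|_1/k$ lies in the convex hull of $k$-sparse vectors $u_i$ supported in $\mathrm{supp}(v)$ with $\|u_i\|_1 = \|v\|_1$, hence $\|u_i\|_2 \le \|v\|_1/\sqrt k$. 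Applying this to the tail (after the routine bookkeeping needed to place it inside such a polytope) writes $h_{T_0^c} = \sum_i \lambda_i u_i$ as a convex combination of $k$-sparse vectors supported away from $T_0$, so that each $h_{T_0} \pm u_i$ is $2k$-sparse and RIP-$(2k,\delta)$ applies to it directly.

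With this decomposition in hand, I would expand $\|Ah_{T_0}\|_2^2 = \langle Ah_{T_0}, Ah\rangle - \langle Ah_{T_0}, Ah_{T_0^c}\rangle$ and estimate the two terms: the first by Cauchy–Schwarz together with the fidelity bound, and the second by the restricted-orthogonality consequence of the RIP, namely $|\langle Ah_{T_0}, Au_i\rangle| \le \delta\,\|h_{T_0}\|_2\|u_i\|_2$ for the disjointly supported $k$-sparse vectors $h_{T_0}, u_i$, averaged against the weights $\lambda_i$. The decisive point—where the polytope decomposition improves on block-splitting—is to combine these through the $2k$-sparse vectors $h_{T_0}\pm u_i$ so that both the lower RIP bound $(1-\delta)$ and the restricted-orthogonality factor $\delta$ enter in the favorable, $\sqrt2$-balanced way. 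Substituting the cone condition in the form $\|h_{T_0^c}\|_1 \le \sqrt k\,\|h_{T_0}\|_2 + 2\sigma_k(z)$ then produces a scalar inequality of the form $(1 - \sqrt2\,\delta)\|h_{T_0}\|_2 \le c_1\eta + c_2\,\sigma_k(z)/\sqrt k$, solvable for $\|h_{T_0}\|_2$ precisely because $\delta < 1/\sqrt2$ keeps the left-hand coefficient positive.

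Finally I would control the tail $\|h_{T_0^c}\|_2$ by a Stechkin-type estimate (bounding it through $\|h_{T_0^c}\|_1/\sqrt k$ plus the head contribution) and combine with $\|h\|_2 \le \|h_{T_0}\|_2 + \|h_{T_0^c}\|_2$ to obtain \cref{equ:thmrip} with $D_1, D_2$ depending only on $\delta$. I expect the main obstacle to be this sharp step: identifying the convex decomposition of the tail into $k$-sparse atoms and combining the RIP inequalities in exactly the way that yields the coefficient $1 - \sqrt2\,\delta$ rather than the weaker $1 - 2\delta$. Extracting the optimal threshold $1/\sqrt2$ is delicate, whereas the remaining steps—the cone condition, the fidelity bound, and the tail estimate—are routine.
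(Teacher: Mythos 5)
The paper gives no proof of this theorem; it is quoted directly from \cite{CZ13}, and your proposal reproduces exactly the argument of that reference: the cone condition and tube constraint from feasibility and optimality, the sparse-representation-of-a-polytope decomposition of the tail into $k$-sparse atoms, and the balanced combination of RIP bounds on $2k$-sparse vectors that yields the sharp coefficient $1-\sqrt{2}\,\delta$. Your sketch is correct in outline and correctly identifies the polytope step as the one requiring care, so there is nothing to add beyond confirming that this is the intended route.
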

Inequality \eqref{equ:thmrip} provided by \Cref{thm:rip} provides a stable and robust recovery guarantee for the decoder $\Delta_{1,A,\eta}$ as outlined in  \eqref{equ:performance} in the case that $A$ satisfies an RIP of order $2k$ and if the noise $\ell_2$-norm satisfies $\|w\|_2\leq\eta$.

\begin{theorem}[{\cite{DE03, GN03}}]\label{thm:nsp}
$\Delta_{1,A,0}(Az)=z$ for any $z\in\Sigma_k$ if and only if $A$ satisfies \eqref{nsp}.
\end{theorem}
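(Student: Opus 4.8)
The statement is an equivalence, so the plan is to prove the two implications separately, throughout using the characterization of $\Delta_{1,A,0}$ as the \emph{unique} minimizer of the equality-constrained problem \eqref{equ:bp0}.

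For sufficiency (\eqref{nsp} $\Rightarrow$ recovery), I would fix $x \in \Sigma_k$ with support $T$, $|T| \le k$, and let $z$ be any feasible competitor, i.e.\ $Az = Ax$ with $z \ne x$. Setting $v := x - z$ places a nonzero vector in $\ker(A)$, and the goal is to show $\|x\|_1 < \|z\|_1$. The key step is to split the $\ell_1$-norm of $z$ over $T$ and $T^c$: since $x$ vanishes off $T$, one has $z_{T^c} = -v_{T^c}$ and hence $\|z\|_1 = \|x_T - v_T\|_1 + \|v_{T^c}\|_1$, and the reverse triangle inequality gives $\|z\|_1 \ge \|x\|_1 + (\|v_{T^c}\|_1 - \|v_T\|_1)$. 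Applying \eqref{nsp} to $v \in \ker(A)\setminus\{0\}$ makes the parenthesized quantity strictly positive, so $x$ is the strict minimizer and $\Delta_{1,A,0}(Ax)=x$.

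For necessity (recovery $\Rightarrow$ \eqref{nsp}), I would start from an arbitrary $v \in \ker(A)\setminus\{0\}$ and an index set $T$ with $|T| \le k$. The idea is to feed the two natural $k$-sparse pieces of $v$ into the recovery guarantee: the vector $v_T$ lies in $\Sigma_k$, so by hypothesis it is the unique solution of \eqref{equ:bp0} with data $Av_T$. Because $v \in \ker(A)$ forces $A(-v_{T^c}) = Av_T$, the vector $-v_{T^c}$ is a feasible point for the same problem, and it differs from $v_T$ precisely because $v \ne 0$. Uniqueness of the minimizer then yields $\|v_T\|_1 < \|-v_{T^c}\|_1 = \|v_{T^c}\|_1$, which is exactly \eqref{nsp}.

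Both directions are short triangle-inequality arguments, so I do not expect a genuine obstacle; the only point requiring care is bookkeeping the strict versus non-strict inequalities. In particular, the necessity direction hinges on interpreting the equality $\Delta_{1,A,0}(Az)=z$ as asserting that $z$ is the \emph{unique} $\ell_1$-minimizer, so that a distinct feasible competitor must have strictly larger norm, and on verifying $-v_{T^c}\ne v_T$, which is where the assumption $v\ne 0$ is used.
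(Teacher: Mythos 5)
Your proof is correct: both directions are the standard argument (split $\|z\|_1$ over $T$ and $T^c$, apply the reverse triangle inequality, and invoke \eqref{nsp} for sufficiency; compare $v_T$ with the feasible competitor $-v_{T^c}$ for necessity), and you correctly handle the two delicate points, namely interpreting $\Delta_{1,A,0}(Az)=z$ as uniqueness of the minimizer and verifying $v_T\neq -v_{T^c}$ via disjointness of supports. The paper itself gives no proof, citing \cite{DE03, GN03}; your argument coincides with the classical one found there and in \cite[Theorem 4.5]{FR13}.
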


\Cref{thm:nsp} provides a sharp characterization of the success of the basis pursuit decoder $\Delta_{1,A,0}$ in absence of additive noise $w$ for exactly $k$-sparse coefficient vectors $z$.
\begin{theorem}[\cite{GN03}] \label{thm:coherence}
If $A$ has $\ell_2$-normalized columns and its coherence $\mu(A)$ (see \eqref{eq:def:coherence}) satisfies $\mu(A)<\frac{1}{2k-1}$, then $\Delta_{1,A,0}(Az)=z$ for any $z\in\Sigma_k$.
\end{theorem}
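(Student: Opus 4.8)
The plan is to reduce the statement to the null space characterization of \Cref{thm:nsp}: since $\Delta_{1,A,0}(Az)=z$ for all $z\in\Sigma_k$ is \emph{equivalent} to $A$ satisfying \eqref{nsp}, it suffices to show that the coherence bound $\mu(A)<\frac{1}{2k-1}$ implies \eqref{nsp}. This avoids having to construct the decoder's output explicitly and lets the proof rest entirely on a kernel estimate.

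First I would translate the kernel condition into a per-coordinate bound. Fix $z\in\ker(A)\setminus\{0\}$, so that $\sum_{i=1}^d z_i a_i=0$. Taking the inner product with a column $a_j$ and using that the columns are $\ell_2$-normalized (so $\langle a_j,a_j\rangle=1$) gives $z_j=-\sum_{i\neq j}z_i\langle a_j,a_i\rangle$. Bounding each inner product by $\mu:=\mu(A)$ and applying the triangle inequality yields $|z_j|\leq\mu\sum_{i\neq j}|z_i|=\mu(\|z\|_1-|z_j|)$, hence the uniform estimate $|z_j|\leq\frac{\mu}{1+\mu}\|z\|_1$ for every coordinate $j$.

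Next I would sum this estimate over an arbitrary index set $T$ with $|T|\leq k$ to obtain $\|z_T\|_1\leq\frac{k\mu}{1+\mu}\|z\|_1$. Writing $\|z_{T^c}\|_1=\|z\|_1-\|z_T\|_1$, the desired strict inequality $\|z_T\|_1<\|z_{T^c}\|_1$ is equivalent to $2\|z_T\|_1<\|z\|_1$. Combining with the previous bound, it suffices that $\frac{2k\mu}{1+\mu}<1$, which rearranges exactly to $(2k-1)\mu<1$, i.e.\ $\mu<\frac{1}{2k-1}$---precisely the hypothesis; since $z\neq0$ we have $\|z\|_1>0$, so the inequality is strict. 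This establishes \eqref{nsp}, and invoking \Cref{thm:nsp} concludes the proof.

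The computation is elementary, so I do not anticipate a serious obstacle; the only real decision is the choice of route. I would deliberately avoid passing through the restricted isometry property, because a Gershgorin estimate on the Gram matrix of $2k$ columns only yields \eqref{rip} of order $2k$ with $\delta=(2k-1)\mu<1$, which does not meet the sharp threshold $\delta<1/\sqrt{2}$ required by \Cref{thm:rip}. The direct null space argument is what makes the coherence constant $\frac{1}{2k-1}$ come out exactly, so the care lies in tracking the $(1+\mu)$ denominator that sharpens the naive bound $|z_j|\leq\mu\|z\|_1$.
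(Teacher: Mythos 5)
Your argument is correct and is essentially the proof given in the cited reference \cite{GN03} (and reproduced in standard texts such as \cite[Theorem 5.15]{FR13}); the paper itself states the result without proof. The reduction to \eqref{nsp} via \Cref{thm:nsp}, the per-coordinate kernel bound $|z_j|\leq\frac{\mu}{1+\mu}\|z\|_1$, and the rearrangement to $(2k-1)\mu<1$ are exactly the standard route, and your remark on why the RIP/Gershgorin detour would lose the sharp constant is accurate.
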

While the coherence assumption of \Cref{thm:coherence} is strictly stronger than the NSP assumption \eqref{nsp}, the latter is NP-hard to certify computationally given a matrix $A$ \cite{Tillmann-2013Computational}, whereas $\mu(A)$ can be computed easily.

With the generalizations \eqref{rnsp} of the null space property \eqref{nsp}, it is possible to obtain robust recovery guarantees of the type \eqref{equ:performance} as well:
\begin{theorem}[{\cite[Theorem 5]{F14}}] \label{theorem:RNSP:guarantee}
Given $q\geq1$, if $A\in\R^{m\times d}$ satisfies \eqref{rnsp} with $0<\rho<1$, then for all $z\in\R^d$ and $w\in\R^m$ with $\|w\|_2\leq\eta$,
\begin{equation}\label{equ:thm:rnsp:1}
	\|z-\Delta_{1,A,\eta}(Az+w)\|_1\leq\frac{2(1+\rho)}{1-\rho}\sigma_k(z)+\frac{2\tau}{1-\rho}k^{1-\frac{1}{q}}\eta,
\end{equation}
and \begin{equation}\label{equ:thm:rnsp:q}
	\|z-\Delta_{1,A,\eta}(Az+w)\|_p\leq\frac{2(1+\rho)^2}{1-\rho}\frac{\sigma_k(z)}{k^{1-1/p}}+\frac{3+\rho}{1-\rho} \tau k^{\frac{1}{p}-\frac{1}{q}}\eta
\end{equation}
for $1 < p \leq q$. 
\end{theorem}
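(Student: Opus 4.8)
The plan is to analyze the error vector $v := \Delta_{1,A,\eta}(Az+w) - z$ directly, first extracting two structural facts from optimality and feasibility, and then converting $\ell_1$-control on $v$ into $\ell_p$-control. Write $\hat{z} := \Delta_{1,A,\eta}(Az+w)$ and let $S$ be the index set of the $k$ largest-magnitude entries of $z$, so that $\|z_{S^c}\|_1 = \sigma_k(z)$. Since $z$ is feasible for the program defining $\hat{z}$ (because $\|Az - (Az+w)\|_2 = \|w\|_2 \leq \eta$), optimality gives $\|\hat{z}\|_1 \leq \|z\|_1$; splitting $\|z + v\|_1$ across $S$ and $S^c$ and using the reverse triangle inequality yields the cone condition $\|v_{S^c}\|_1 \leq \|v_S\|_1 + 2\sigma_k(z)$. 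Feasibility of both $z$ and $\hat{z}$ gives $\|Av\|_2 \leq \|A\hat{z} - (Az+w)\|_2 + \|w\|_2 \leq 2\eta$.

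For the $\ell_1$ estimate \eqref{equ:thm:rnsp:1}, I would apply the robust null space property \eqref{rnsp} with $T = S$ and then pass from the $\ell_q$- to the $\ell_1$-norm on the (at most $k$-sparse) head via H\"older's inequality, $\|v_S\|_1 \leq k^{1-1/q}\|v_S\|_q$, to obtain $\|v_S\|_1 \leq \rho\|v_{S^c}\|_1 + \tau k^{1-1/q}\|Av\|_2$. Combining this with the cone condition to eliminate $\|v_{S^c}\|_1$ and then $\|v_S\|_1$ (here the assumption $0<\rho<1$ is exactly what keeps the resulting linear system solvable) produces separate bounds on $\|v_S\|_1$ and $\|v_{S^c}\|_1$, whose sum $\|v\|_1$, after inserting the feasibility bound on $\|Av\|_2$, gives an estimate of the form \eqref{equ:thm:rnsp:1}; the coefficient $\frac{2(1+\rho)}{1-\rho}$ of $\sigma_k(z)$ emerges directly from this elimination.

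For the $\ell_p$ estimate \eqref{equ:thm:rnsp:q}, I would split $\|v\|_p \leq \|v_S\|_p + \|v_{S^c}\|_p$. The head is controlled by the sparsity inequality $\|v_S\|_p \leq k^{1/p-1/q}\|v_S\|_q$ followed by \eqref{rnsp} on $T=S$. For the tail I would partition $S^c$ into successive blocks $U_1, U_2, \ldots$ of size $k$ ordered by decreasing magnitude: a Stechkin-type shelling argument, using that each entry of $U_j$ is dominated by the average of the entries of $v_{U_{j-1}}$, bounds $\sum_{j \geq 2}\|v_{U_j}\|_p \leq k^{1/p-1}\|v_{S^c}\|_1$, while the leading block is treated by applying \eqref{rnsp} a \emph{second} time with $T = U_1$, together with $\|v_{U_1}\|_p \leq k^{1/p-1/q}\|v_{U_1}\|_q$. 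Collecting these contributions expresses $k^{1-1/p}\|v\|_p$ as a linear combination of $\|v_{S^c}\|_1$, $\|v\|_1$ and $\|Av\|_2$; substituting the $\ell_1$ bounds from the previous step (and again $\|Av\|_2 \leq 2\eta$) yields \eqref{equ:thm:rnsp:q}, with the factor $(1+\rho)^2$ arising precisely from chaining the two $\ell_1$ estimates into the tail bound.

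I expect the main obstacle to be the tail term $\|v_{S^c}\|_p$. Unlike the pure $\ell_1$ case, it cannot be handled by shelling alone, since the leading block $U_1$ carries the largest tail entries and its $\ell_q$-mass must itself be tamed by the second invocation of \eqref{rnsp}; it is this extra step that distinguishes the $\ell_p$ argument. Arranging the bookkeeping of the two applications of \eqref{rnsp} and the Stechkin blocks so that they combine into the stated constants $\frac{2(1+\rho)^2}{1-\rho}$ and $\frac{3+\rho}{1-\rho}$ is the delicate part, whereas the remaining steps are routine norm manipulations.
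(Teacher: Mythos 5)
Your proposal is correct and follows essentially the same route as the proof of the cited result (\cite[Theorem 5]{F14}, cf.\ also \cite[Theorems 4.20 and 4.25]{FR13}); the paper itself does not reproduce a proof but defers to that reference. The cone condition from optimality, the tube condition $\|Av\|_2\leq 2\eta$ from feasibility, the elimination of $\|v_S\|_1$ and $\|v_{S^c}\|_1$ via \eqref{rnsp}, and the reduction of the $\ell_p$ bound to the $\ell_1$ bound are exactly the standard ingredients. Your block decomposition $U_1,U_2,\ldots$ of $S^c$ with a second application of \eqref{rnsp} to $U_1$ is a mild variant of the reference's bookkeeping (there one sorts the error vector $v$ itself, applies \eqref{rnsp} once to its top-$k$ index set, and handles the remainder by a single Stechkin estimate $\|v_{\bar S}\|_p\leq k^{-(1-1/p)}\|v\|_1$), but a direct computation shows your variant reproduces the same constants $\tfrac{2(1+\rho)^2}{1-\rho}$ and $\tfrac{3+\rho}{1-\rho}$ in front of $\sigma_k(z)$ and $\tau k^{1/p-1/q}\|Av\|_2$, so both are fine. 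One small caveat: since the argument only yields $\|Av\|_2\leq 2\eta$, your derivation (like the one in \cite{F14,FR13}) produces the noise terms $\tfrac{4\tau}{1-\rho}k^{1-1/q}\eta$ and $\tfrac{2(3+\rho)}{1-\rho}\tau k^{1/p-1/q}\eta$; the coefficients as printed in \eqref{equ:thm:rnsp:1} and \eqref{equ:thm:rnsp:q} appear to have dropped this factor of $2$, which is an issue with the statement rather than with your proof.
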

Similar results can be obtained via robust width properties \eqref{rwp} of $A$. 
\begin{theorem}[{\cite[Theorem 3]{CM21}}]\label{thm:rwp}
If $A$ satisfies \eqref{rwp}, then
\begin{equation}\label{equ:thmrwp}
\|z-\Delta_{1,A,\eta}(Az+w)\|_2\leq 4c_0\frac{\sigma_k(z)}{\sqrt{k}}+\frac{2}{c_1}\eta.
\end{equation}
for all $z\in\R^d$ and $w\in\R^m$ with $\|w\|_2\leq\eta$. 
\end{theorem}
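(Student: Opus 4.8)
The plan is to control the error vector $h := \Delta_{1,A,\eta}(Az+w) - z$ directly, via a dichotomy that decides whether $h$ falls inside the cone on which \eqref{rwp} is informative. First I would extract the two standard consequences of the decoder's definition. Since $\|w\|_2\le\eta$, the true signal $z$ is feasible for the program \eqref{equ:bp} that defines $\Delta_{1,A,\eta}$, so the minimizer $\hat z := \Delta_{1,A,\eta}(Az+w)$ satisfies $\|\hat z\|_1\le\|z\|_1$; and because both $z$ and $\hat z$ obey the quadratic constraint, the triangle inequality gives $\|Ah\|_2\le 2\eta$. The $\ell_1$-minimality yields the usual cone inequality: letting $T$ index the $k$ largest-magnitude entries of $z$, so that $\sigma_k(z)=\|z_{T^c}\|_1$ in this section where $\|\cdot\|_F=\|\cdot\|_1$, expanding $\|z\|_1\ge\|z+h\|_1$ over $T$ and $T^c$ and rearranging gives $\|h_{T^c}\|_1\le\|h_T\|_1+2\sigma_k(z)$, whence $\|h\|_1\le 2\|h_T\|_1+2\sigma_k(z)\le 2\sqrt{k}\,\|h\|_2+2\sigma_k(z)$ using $|T|\le k$.

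The core step is the dichotomy on the robust-width cone condition. If $\|Ah\|_2\ge c_1\|h\|_2$, then combining with $\|Ah\|_2\le 2\eta$ immediately gives $\|h\|_2\le\frac{2}{c_1}\eta$, the pure noise term. Otherwise $\|Ah\|_2<c_1\|h\|_2$, so $h$ lies in the set on which \eqref{rwp} applies and $\|h\|_2\le\frac{c_0}{\sqrt{k}}\|h\|_1$. Substituting the cone inequality gives $\|h\|_2\le 2c_0\|h\|_2+\frac{2c_0}{\sqrt{k}}\sigma_k(z)$; absorbing the first term to the left leaves $\|h\|_2\le\frac{2c_0}{1-2c_0}\frac{\sigma_k(z)}{\sqrt{k}}$, which for $c_0$ in the admissible range is at most $4c_0\frac{\sigma_k(z)}{\sqrt{k}}$. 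Since exactly one alternative holds and each bounds $\|h\|_2$ by one of the two nonnegative terms, $\|h\|_2$ is dominated by their sum, which is precisely \eqref{equ:thmrwp}.

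The hard part will be the second alternative, where $\|h\|_2$ reappears on both sides after the cone inequality is substituted: closing the estimate requires absorbing the $2c_0\|h\|_2$ term, and recovering the clean constant $4c_0$ hinges on $c_0$ lying in the regime the robust-width property effectively enforces. Carefully tracking the numerical constants through this absorption, and checking that the factor $2$ in $\|Ah\|_2\le 2\eta$ propagates intact into the $\frac{2}{c_1}\eta$ term, is where the argument demands attention; the remaining pieces (feasibility of $z$ and the cone inequality) are routine. An alternative route would be to first show that \eqref{rwp} implies an $\ell_2$-robust null space property and then invoke a guarantee of the type in \Cref{theorem:RNSP:guarantee}, but the direct dichotomy above is shorter and keeps the constants transparent.
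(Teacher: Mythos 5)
Your argument is correct and is essentially the proof given in the cited source (the paper itself only cites \cite[Theorem 3]{CM21} and gives no proof): feasibility of $z$ yields $\|Ah\|_2\le 2\eta$ and the cone inequality $\|h\|_1\le 2\sqrt{k}\|h\|_2+2\sigma_k(z)$, and the dichotomy on whether $\|Ah\|_2<c_1\|h\|_2$ produces exactly the two terms in \eqref{equ:thmrwp}. The only caveat is the one you already flag: absorbing $2c_0\|h\|_2$ requires $c_0<1/2$, and obtaining the stated constant $4c_0$ requires $c_0\le 1/4$ --- a restriction that is present in the original Cahill--Mixon formulation (their hypothesis $\rho\le 1/(4L)$ specializes to $c_0\le 1/4$) but is omitted from the theorem as restated in this survey, so your proof matches the theorem as intended rather than as literally stated here.
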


Unlike the other properties defined in \Cref{def:basisp}, the quotient property \eqref{qp} alone does not provide any recovery guarantees for compressed sensing decoders. However, if the sensing matrix satisfies \eqref{qp} in addition to another property which guarantees recovery under noiseless measurements, it can be shown that equality-constrained basis pursuit $\Delta_{1,A,0}$ of \eqref{equ:bp0} is likewise a noise-robust decoder, see \cite{W10}, \cite[Section 11.2]{F14}. 

More specifically, $y= Az_0+w$ can be seen as a perturbed version of the true measurement. We note that the programs $\Delta_{1,A,\eta}$ or $\Delta_{f, A, \eta}$ from \eqref{equ:bp} and  \eqref{equ:f} need the noise strength level $\eta$ as an input parameter, which may not be  accessible in many applications. The advantage of $\Delta_{1,A,0}$ or $\Delta_{f,A,0}$ is that they amount to a noise-blind method, where robust reconstruction can still be obtained with reasonable sensing matrices.
We will cite a result from \cite{F14}.
\begin{theorem}[{\cite[Theorem 11]{F14}}]\label{thm:equality}
If $A\in\R^{m\times d}$ satisfies RNSP($q, k,\rho,\tau k_*^{1/q-1/2}$) with $k\leq ck_*$ and \eqref{qp}, then for any $1\leq p\leq q$,
\begin{equation}
\|z-\Delta_{1,A,0}(Az+w)\|_p\leq C\frac{\sigma_k(z)}{k^{1-1/p}}+Dk_*^{\frac{1}{p}-\frac{1}{2}}\|w\|_2,
\end{equation}
for all $z\in\R^d$ and $w\in\R^m$. The constants $C, D>0$ depend only on $\rho, \tau, \alpha, c$.
\end{theorem}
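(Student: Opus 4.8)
The overall strategy is to use the quotient property to convert the noise-blind, equality-constrained problem into a noiseless recovery problem for a slightly perturbed signal, and then to run the robust null space argument behind \Cref{theorem:RNSP:guarantee} with the usual optimality inequality weakened by a controlled amount. Write $\hat z:=\Delta_{1,A,0}(Az+w)$. First I would apply \eqref{qp} to the noise vector: by the reformulation recorded right after \Cref{def:basisp}, there exists $u\in\R^d$ with $Au=w$ and $\|u\|_1\leq\alpha\sqrt{k_*}\|w\|_2$. Since $A(z+u)=Az+w$, the vector $z+u$ is feasible for the equality-constrained program, so minimality of $\hat z$ gives $\|\hat z\|_1\leq\|z+u\|_1\leq\|z\|_1+\|u\|_1$. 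This is the only role of the quotient property: because $z$ itself is \emph{not} feasible (as $Az\neq Az+w$), the usual optimality bound $\|\hat z\|_1\leq\|z\|_1$ is unavailable and is replaced by one carrying a slack of size $\|u\|_1\leq\alpha\sqrt{k_*}\|w\|_2$.

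Next I would invoke the robust null space estimate that underlies \Cref{theorem:RNSP:guarantee}, but in the form that retains the objective-value gap rather than discarding it. Applied to $v=\hat z-z$ (for which $Av=w$) under RNSP$(q,k,\rho,\tau k_*^{1/q-1/2})$, the same cone inequality and Stechkin-type block splitting that prove \Cref{theorem:RNSP:guarantee} give, for every $1\leq p\leq q$,
\[
\|\hat z-z\|_p\leq\frac{C_1}{k^{1-1/p}}\bigl(\|\hat z\|_1-\|z\|_1+2\sigma_k(z)\bigr)+C_2\,\tau\,k_*^{1/q-1/2}\,k^{1/p-1/q}\,\|w\|_2,
\]
with $C_1,C_2$ depending only on $\rho$; the sole difference from \Cref{theorem:RNSP:guarantee} is that the gap $\|\hat z\|_1-\|z\|_1$ is now bounded by $\|u\|_1$ instead of by $0$. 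Substituting $\|\hat z\|_1-\|z\|_1\leq\|u\|_1\leq\alpha\sqrt{k_*}\|w\|_2$ and $\|Av\|_2=\|w\|_2$ splits the right-hand side into the desired model-misfit term $\tfrac{2C_1}{k^{1-1/p}}\sigma_k(z)$, a quotient-slack term $\tfrac{C_1\alpha\sqrt{k_*}}{k^{1-1/p}}\|w\|_2$, and an RNSP-noise term $C_2\tau\, k_*^{1/q-1/2}k^{1/p-1/q}\|w\|_2$.

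The final step, and the main obstacle, is to reconcile these two noise contributions with the single target exponent $k_*^{1/p-1/2}$ while keeping all constants dependent only on $\rho,\tau,\alpha,c$. The RNSP-noise term is immediate: since $p\leq q$ we have $1/p-1/q\geq0$, so $k\leq ck_*$ gives $k^{1/p-1/q}\leq c^{1/p-1/q}k_*^{1/p-1/q}$ and hence $k_*^{1/q-1/2}k^{1/p-1/q}\leq c^{1/p-1/q}k_*^{1/p-1/2}$, which is precisely why the RNSP constant is calibrated to $\tau k_*^{1/q-1/2}$. The quotient-slack term is the delicate one: writing $\tfrac{\sqrt{k_*}}{k^{1-1/p}}=k_*^{1/p-1/2}\,(k_*/k)^{1-1/p}$ shows that the extra factor $(k_*/k)^{1-1/p}$ is harmless at $p=1$ but, for $p>1$, is only absorbed into a constant once $k$ and $k_*$ are of comparable order. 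This is the natural regime here, since an RNSP of order $k$ can hold only when $k$ is at most a constant multiple of $k_*$, so that taking $k$ proportional to $k_*$ — the content intended by the calibration $k\leq ck_*$ — controls $(k_*/k)^{1-1/p}$. Once this exponent bookkeeping is carried out, the two noise terms combine into a single $Dk_*^{1/p-1/2}\|w\|_2$, completing the bound; the structural part of the argument is otherwise a direct composition of the quotient substitution with the robust null space guarantee.
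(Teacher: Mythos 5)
The paper does not prove \Cref{thm:equality}; it is quoted from \cite[Theorem 11]{F14}, so your proposal can only be measured against the standard argument behind that result. Your overall architecture matches it: use \eqref{qp} to produce $u$ with $Au=w$ and $\|u\|_1\leq\alpha\sqrt{k_*}\|w\|_2$, exploit feasibility of $z+u$ to get $\|\hat z\|_1\leq\|z\|_1+\|u\|_1$, and feed the resulting objective gap into the gap-retaining form of the robust null space bound (the analogue of \cite[Theorem 4.25]{FR13}). The intermediate inequality you state for $\|\hat z-z\|_p$ is correct, as is the bookkeeping for the RNSP-noise term via $k^{1/p-1/q}\leq c^{1/p-1/q}k_*^{1/p-1/q}$.

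The genuine gap is in your treatment of the quotient-slack term $\frac{C_1\alpha\sqrt{k_*}}{k^{1-1/p}}\|w\|_2 = C_1\alpha\,(k_*/k)^{1-1/p}\,k_*^{1/p-1/2}\|w\|_2$. You correctly identify that absorbing $(k_*/k)^{1-1/p}$ into a constant requires a \emph{lower} bound $k\gtrsim k_*$, but the hypothesis $k\leq ck_*$ only supplies an upper bound, and your justification is backwards: the fact that an RNSP can only hold for $k\leq ck_*$ constrains $k$ from above, which is exactly the wrong direction for controlling $(k_*/k)^{1-1/p}$ when $p>1$. Declaring $k\propto k_*$ to be ``the content intended by the calibration'' silently strengthens the hypothesis rather than proving the statement for all $k \le ck_*$. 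The way the argument actually closes is to run your entire estimate once, at the maximal order $k_0=\lfloor ck_*\rfloor$ at which the RNSP is available (this is the order at which the random matrices of \cite{F14} satisfy it); there the slack term becomes $\frac{\alpha\sqrt{k_*}}{k_0^{1-1/p}}\|w\|_2\lesssim k_*^{1/p-1/2}\|w\|_2$ with a constant depending only on $c$ and $\alpha$, and the bound for every smaller sparsity level $k\leq k_0$ then follows from the monotonicity $\sigma_{k_0}(z)/k_0^{1-1/p}\leq\sigma_{k}(z)/k^{1-1/p}$. You should either add this descent step explicitly or state clearly that the RNSP hypothesis must be invoked at order $\asymp k_*$, not merely at the order $k$ appearing in the conclusion; as written, your argument does not establish the claimed constant dependence for $p>1$ when $k\ll k_*$.
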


The restricted isometry property \eqref{rip} is considered as the strongest one among the sparse recovery inducing properties of \Cref{def:basisp}. They are related to each other as follows.
\begin{itemize}
\item RIP implies RNSP. It is stated in \cite[Theorem 6.13]{FR13} that if $A$ satisfies RIP-$(2k,\delta)$, then $A$ satisfies the RNSP-$(2,k,\rho,\tau)$ with $\rho = \frac{\delta}{\sqrt{1-\delta^2} - \delta/4}$ and $\tau = \frac{\sqrt{1+\delta}}{\sqrt{1-\delta^2}-\delta/4}$. If  $\delta < \frac{4}{\sqrt{41}} \approx 0.6246$, then it holds that $\rho < 1$, which is necessary for making the robust guarantees \eqref{equ:thm:rnsp:q} work.\footnote{We note that the stable and robust guarantee \eqref{equ:thm:rnsp:q} with $p=2$ for $\Delta_{1,A,\eta}$ is implied by the RIP-$(2k,\delta)$ even for the weaker (and optimal) assumption $\delta < \sqrt{2}/2 \approx 0.7071$ on $\delta$, as shown in \cite[Theorem 2.1]{Cai-2013Sparse}.} 
\item RNSP implies RWP. Theorem \ref{thm:rwp} also goes the other direction. That is, if for all $z\in\R^d$ and $w\in\R^m$ with $\|w\|_2\leq\eta$, it holds $\|z-\Delta_{1,A,\eta}(Az+w)\|_2\leq C_1\frac{\sigma_k(z)}{\sqrt{k}}+C_2\eta$, then $A$ must satisfy RWP-$(k, 2C_0, \frac{1}{2C_1})$. Therefore with Theorem \ref{theorem:RNSP:guarantee}, we see RNSP-$(2,k,\rho,\tau)$ implies RWP of order $k$ with constants derived from \eqref{equ:thm:rnsp:q}.
\item \eqref{nsp} is the weakest of all due to Theorem \ref{thm:nsp}. Moreover, only the NSP is a property which solely depends on the kernel of $A$. In order to have a fair comparison between the reconstruction performance of, say RIP, and NSP, we need to consider sensing matrices that have the same null space as some RIP matrix. See~\cite{CCW16} for more details.
\item QP does not imply RIP, nor vice versa \cite{W10}.
\end{itemize}

NSP is sensitive to column scaling, as shown below.

\begin{lemma}\label{lem:nsp}
If $B$ satisfies \eqref{nsp}, then there exists a diagonal matrix $D$ such that $BD$ does not satisfy \eqref{nsp}.
\end{lemma}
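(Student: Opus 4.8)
The plan is to exploit the fact that \eqref{nsp} is a property of $\ker(B)$ alone and that right-multiplication by a diagonal matrix reshapes kernel vectors coordinate-wise. When $D$ is invertible one has $\ker(BD) = D^{-1}\ker(B)$, so each $z \in \ker(B)$ corresponds to the vector $D^{-1}z \in \ker(BD)$ whose $i$-th entry is $z_i/d_i$. Thus, by choosing the diagonal entries appropriately, I can concentrate the $\ell_1$-mass of a kernel vector onto a single coordinate, which immediately violates the strict inequality required by \eqref{nsp} for a singleton index set $T$.

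Concretely, assume first that $\ker(B) \ne \{0\}$ and fix a nonzero $z \in \ker(B)$ together with an index $i_0$ for which $z_{i_0} \ne 0$. Take $D = \diag(d_1,\dots,d_d)$ with $d_{i_0} = \varepsilon > 0$ and $d_i = 1$ for $i \ne i_0$. Then $z' := D^{-1}z \in \ker(BD)\setminus\{0\}$ satisfies $z'_{i_0} = z_{i_0}/\varepsilon$ while $z'_i = z_i$ for $i \ne i_0$. Hence, with $T = \{i_0\}$ (so $|T| = 1 \le k$),
\[
\|z'_T\|_1 = \frac{|z_{i_0}|}{\varepsilon}, \qquad \|z'_{T^c}\|_1 = \sum_{i \ne i_0} |z_i|,
\]
where the right-hand quantity is fixed while the left-hand one diverges as $\varepsilon \to 0^+$. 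Choosing $\varepsilon$ small enough forces $\|z'_T\|_1 > \|z'_{T^c}\|_1$, so $BD$ fails \eqref{nsp}.

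The one obstacle is the degenerate case $\ker(B) = \{0\}$, in which $B$ satisfies \eqref{nsp} vacuously and no invertible scaling can help, since $\ker(BD) = \{0\}$ remains trivial. To handle this (and, in fact, to obtain a single uniform construction valid for every $B$), I would instead allow $D$ to be singular: take $d_{i_0} = 0$ and $d_i = 1$ otherwise. Then the $i_0$-th column of $BD$ is zero, so $e_{i_0} \in \ker(BD) \setminus \{0\}$, and for $T = \{i_0\}$ one has $\|(e_{i_0})_T\|_1 = 1 > 0 = \|(e_{i_0})_{T^c}\|_1$, again contradicting \eqref{nsp}. Both arguments use only $k \ge 1$, which is implicit in the statement; I retain the concentration argument because it pinpoints that it is the \emph{relative} scaling of columns, not merely the annihilation of one, that destroys the null space property.
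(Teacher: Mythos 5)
Your proof is correct and follows essentially the same route as the paper's: both shrink the columns of $B$ indexed by a small set $T$ so that the corresponding coordinates of a fixed kernel vector are amplified until $\|z'_T\|_1$ exceeds $\|z'_{T^c}\|_1$ — the paper does this with an explicit scaling factor on a set of size exactly $k$, while you use a singleton $T$ and a small-$\varepsilon$ limit. Your additional treatment of the degenerate case $\ker(B)=\{0\}$ via a singular $D$ covers a corner the paper's proof silently skips (it implicitly assumes a nontrivial kernel, as is standard for underdetermined sensing matrices).
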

\begin{proof}
Fix $v\in\ker(B)\backslash\{0\}$ and choose $|T|=k$, such that $0<\|v_T\|_1<\|v_{T^c}\|_1$. 
Let $(\diag(D))_T = \frac{\|v_T\|_1}{\|v_{T^c}\|_1+1}$ and $(\diag(D))_{T^c} =1$.

Define $w$ such that $w_T=\frac{\|v_{T^c}\|+1}{\|v_T\|_1}v_T$ and $w_{T^c}=v_{T^c}$. Clearly $w\in\ker(BD)\backslash\{0\}$, but $\|w_{T}\|_1=\|v_{T^c}\|_1+1>\|v_{T^c}\|_1=\|w_{T^c}\|_1$. This concludes that $BD$ does not satisfy \eqref{nsp}.
\end{proof}

\section{Recovery Guarantees for Sparse Signal With Respect to a Frame} \label{sec:recoveryguarantees:frames}
\phantom \\ It can be proven~\cite{LLMLY12, CWW14} that \eqref{equ:F} is equivalent to the so called $\ell_1$-synthesis method:
\begin{equation}\label{equ:l1s}
\Delta_{F, A,\eta}(y):= F(\Delta_{1, AF, \eta}(y)) = F\left(\argmin_x \|x\|_1, \quad\text{subject to }\|AFx - y\|_2\leq\eta\right).
\end{equation}
The idea is that given $y \approx Az_0$, we first use $\Delta_{1, AF, \eta}(y)$ to recover a (hopefully sparse) coefficient of $z_0$. We then recover the original signal by applying the synthesis operator $F$ to the coefficient.

Recalling \eqref{equ:tail},  let
\begin{equation}
z_k:= \argmin_{v\in\Sigma_{F,k}}\|z-v\|_F,
\end{equation}
and therefore $\sigma_{F,k}(z) = \|z - z_k\|_F$.

\begin{definition}\label{def:Fp}
Fix a frame $F\in\R^{d\times n}$. Let $A\in\R^{m\times d}$.
\begin{enumerate}
\item $A$ is said to have the \emph{$F$ null space property} of order $k$~\cite{CWW14, C24} if
\begin{equation}\tag{$F$-NSP-$k$}\label{fnsp}
\|v\|_F<\|z-v\|_F, \ \forall z\in\ker(A)\backslash\{0\}, \ \forall v\in\Sigma_{F,k}
\end{equation}

\item $A$ is said to have the \emph{robust $F$ null space property} of order $k$~\cite{C24} if there exist  constants $\tau>0$ and $0<\rho<1$ such that 
\begin{equation}\label{frnsp}\tag{\ensuremath{F}-RNSP-\ensuremath{(k,\rho,\tau)}}
\|z_k\|_F\leq\rho\|z-z_k\|_F+\tau \|A z\|_2, \ \forall z\in\R^d.
\end{equation}

\item $A$ is said to have the \emph{strong $F$ null space property} of order $k$ with constant $c>0$~\cite{CWW14, C24} if  
\begin{equation}\label{sfnsp}\tag{$F$-SNSP-($k, c$)}
\|z-v\|_F-\|v\|_F\geq c\|z\|_2,  \  \forall z\in\ker(A), \forall v\in\Sigma_{F,k}.
\end{equation}

\item $A$ is said to have the \emph{$F$ restricted isometry property} of order $k$ with constant $0 \leq \delta < 1$~\cite{CENR11} if
\begin{equation}\label{frip}\tag{\ensuremath{F}-RIP-\ensuremath{(k,\delta)}}
(1 - \delta) \|z\|_2^2 \leq \|Az \|^2_2 \leq (1 + \delta) \|z\|_2^2, \ \forall x\in\Sigma_{F,k}.
\end{equation}
\end{enumerate}
\end{definition}

\begin{remark}\label{rem:fs}
An equivalent, but less intuitive  version of \eqref{fnsp} can be found in \cite[Definition 4.1]{CWW14}. Similarly, an equivalent definition of \eqref{sfnsp} can be found in \cite[Definition 5.1]{CWW14}.
\end{remark}

\begin{remark}\label{rem:sfnsp}
By definition, \ref{sfnsp} implies \ref{fnsp}. A remarkable result~\cite[Theorem 5.5]{CWW14} states that \ref{fnsp} implies \ref{sfnsp} for some $c>0$. 
\end{remark}

\begin{remark}\label{rem:equiv}
\cite[Theorem 7.2]{CWW14} states that the following three conditions are equivalent if $F$ is \emph{full spark} (every $d$ columns of $F$ are linearly independent)~\cite{BCM12}:

 (i) $A$ has \ref{fnsp}.

(ii) $A$ has \ref{sfnsp} for some $c>0$.

(iii) $AF$ has \ref{nsp}.

\end{remark}

\begin{theorem}[\cite{CWW14, C24}]\label{thm:fnsp}
$\Delta_{F,A,0}(Az)=z$ for any $z\in\Sigma_{F,k}$ if and only if $A$ satisfies \ref{fnsp}.
\end{theorem}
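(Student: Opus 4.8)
The plan is to reduce the statement to the signal-domain reformulation of the decoder. By the equivalence \eqref{equ:l1s}, the output $\Delta_{F,A,0}(Az)$ is a minimizer of the $F$-norm over the feasible fiber, that is,
\[
\Delta_{F,A,0}(y)=\argmin_{z'\in\R^d}\|z'\|_F\quad\text{subject to}\quad Az'=y,
\]
so that the assertion $\Delta_{F,A,0}(Az)=z$ is, by definition, the claim that $z$ is the \emph{unique} $F$-norm minimizer among all $z'\in\R^d$ with $Az'=Az$. I would first record two elementary facts. First, $\|\cdot\|_F$ is a genuine norm and its defining minimum in \eqref{equ:zF} is attained, since $F$ spans $\R^d$ and the sublevel sets of $\|\cdot\|_1$ intersected with an affine subspace are compact; this makes the argmin nonempty, so uniqueness is the only issue. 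Second, the negation symmetry $\|{-}z\|_F=\|z\|_F$ holds: substituting $x\mapsto-x$ in \eqref{equ:zF} leaves the feasible set and the $\ell_1$-norm unchanged. This symmetry is exactly what reconciles the asymmetric-looking two sides of \eqref{fnsp}.

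For the direction ``\eqref{fnsp} implies recovery'', fix $z\in\Sigma_{F,k}$ and let $z'\neq z$ be any feasible competitor, $Az'=Az$. Then $h:=z-z'\in\ker(A)\setminus\{0\}$, and applying \eqref{fnsp} with the kernel vector $h$ and the $F$-$k$-sparse vector $v=z$ gives $\|z\|_F<\|h-z\|_F=\|{-}z'\|_F=\|z'\|_F$, using the negation symmetry. Since this strict inequality holds for every feasible $z'\neq z$, the vector $z$ is the unique minimizer, which is precisely $\Delta_{F,A,0}(Az)=z$.

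For the converse, assume exact recovery holds for all $z\in\Sigma_{F,k}$ and fix arbitrary $h\in\ker(A)\setminus\{0\}$ and $v\in\Sigma_{F,k}$. Taking $z=v$ in the recovery hypothesis, $v$ is the unique $F$-norm minimizer over its fiber; since $z':=v-h$ satisfies $Az'=Av$ and $z'\neq v$, strict minimality yields $\|v\|_F<\|v-h\|_F=\|h-v\|_F$, again by symmetry, which is exactly \eqref{fnsp}. The boundary case $v=0$ is consistent because positive-definiteness of the norm gives $\|h\|_F>0$.

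I expect no deep obstacle, as this is the $F$-analogue of \Cref{thm:nsp}; the step requiring genuine care is the bookkeeping that converts the one-sided strict inequality in \eqref{fnsp} into the uniqueness of the minimizer and back. Concretely, one must exploit the negation symmetry of $\|\cdot\|_F$ together with the bijection $z'\leftrightarrow h=z-z'$ between feasible competitors and nonzero kernel elements; getting the signs and the strictness exactly right, and checking that the $v=0$ case is harmless, is where the argument could quietly go wrong.
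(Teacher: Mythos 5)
Your proof is correct and follows the standard argument (the one used in the cited references \cite{CWW14, C24}, as the chapter itself does not reprove the theorem): pass to the signal-domain formulation of $\Delta_{F,A,0}$, identify feasible competitors $z'$ with nonzero kernel elements $h=z-z'$, and use the negation symmetry of $\|\cdot\|_F$ to translate the strict inequality of \eqref{fnsp} into strict minimality of $z$ and back. The points you flag --- attainment of the minimum in \eqref{equ:zF}, uniqueness of the argmin being what $\Delta_{F,A,0}(Az)=z$ means, and the sign bookkeeping $\|h-v\|_F=\|v-h\|_F$ --- are exactly the right ones, and are all handled correctly.
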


Theorem \ref{thm:fnsp} is analogous to Theorem \ref{thm:nsp}. For a result such as Theorem \ref{theorem:RNSP:guarantee}, we need the frame $F$ to be somewhat regular in the following way:
\begin{definition}[{\cite[Definition 2.5]{C24}}]\label{def:split}
We call $F$ $s$-splittable with constant $\beta>0$ if for any $x,y\in\R^d$, 
\begin{equation}\label{equ:split}
\|x+y\|_F\geq\|x_s\|_F-\|y_s\|_F+\beta(\|y-y_s\|_F-\|x-x_s\|_F).
\end{equation}
\end{definition}
Requiring $F$ to be regular in the above way is not surprising as the unit ball of $F$-norm \eqref{equ:zF} is the convex hull of columns in $F$ (together with its negatives). In the case where $F$ is the canonical basis, we have $\beta=1$, which is the optimal value. We conjecture that all full-spark frames are splittable with small enough $\beta$, and it would be interesting work to estimate $\beta$ based on certain frame properties of $F$.

\begin{theorem}[{\cite[Corollary 5.3]{C24}}]\label{thm:frnsp}
Let $F$ be $s$-splittable with constant $\beta$ and $A$ has \ref{frnsp} with $\rho<\beta$,
then for any $z\in\R^d$ and  $\|w\|_2\leq\eta$, we have
\begin{equation*}
\|z - \Delta_{F, A, \eta}(A z+w)\|_F\leq\frac{(1+\beta)(1+\rho)}{\beta(\beta-\rho)}\sigma_{F,s}(z)+\frac{2\tau(1+\beta)}{\beta-\rho}\eta.
\end{equation*}
\end{theorem}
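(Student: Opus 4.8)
The plan is to control the reconstruction error $h := \hat z - z$, where $\hat z := \Delta_{F,A,\eta}(Az+w)$, directly in the $F$-norm, by combining three ingredients: the optimality of $\hat z$, the splittability of $F$ from \eqref{equ:split}, and the robust $F$-null space property \ref{frnsp} applied to $h$. Throughout I take the order of the $F$-RNSP to coincide with the splittability order $s$, and I write $z_s,h_s$ for the best $F$-$s$-sparse approximations in the $F$-norm, so that $\|z-z_s\|_F=\sigma_{F,s}(z)$. I will use freely that $\|\cdot\|_F$ is a genuine norm (it is the gauge of $\conv(\pm f_i)$, finite since $F$ spans $\R^d$), so that the triangle inequality is available.

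First come two elementary feasibility facts. Since $\|Az-(Az+w)\|_2=\|w\|_2\leq\eta$, the true signal $z$ is feasible for the program defining $\hat z$, so optimality yields the crucial comparison $\|\hat z\|_F\leq\|z\|_F$. Moreover,
\[
\|Ah\|_2=\|A\hat z-Az\|_2\leq\|A\hat z-(Az+w)\|_2+\|w\|_2\leq 2\eta,
\]
which will feed the measurement term of \ref{frnsp}.

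The heart of the argument is to apply the splittability inequality \eqref{equ:split} with the choice $x=z$, $y=h$, so that $x+y=\hat z$. Combined with $\|\hat z\|_F\leq\|z\|_F$ and the triangle bound $\|z\|_F\leq\|z_s\|_F+\sigma_{F,s}(z)$, the uncontrolled head terms $\|z_s\|_F$ cancel and one is left with an estimate of the tail $\|h-h_s\|_F$ in terms of the head $\|h_s\|_F$ and the model mismatch,
\[
\beta\,\|h-h_s\|_F\leq\|h_s\|_F+(1+\beta)\,\sigma_{F,s}(z).
\]
I then invoke \ref{frnsp} on $h$, namely $\|h_s\|_F\leq\rho\|h-h_s\|_F+\tau\|Ah\|_2\leq\rho\|h-h_s\|_F+2\tau\eta$, and substitute. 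Because $\rho<\beta$ the coefficient $\beta-\rho$ of $\|h-h_s\|_F$ stays positive, so the tail can be solved for:
\[
\|h-h_s\|_F\leq\frac{1+\beta}{\beta-\rho}\,\sigma_{F,s}(z)+\frac{2\tau}{\beta-\rho}\,\eta.
\]
Finally I bound the whole error via $\|h\|_F\leq\|h_s\|_F+\|h-h_s\|_F\leq(1+\rho)\|h-h_s\|_F+2\tau\eta$ (using \ref{frnsp} once more) and insert the tail estimate; collecting terms produces a stable and robust bound of the stated form, the constants emerging from the final algebra.

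The main obstacle is the correct deployment of the splittability inequality: one must pair it with the optimality relation $\|\hat z\|_F\leq\|z\|_F$ and choose $x=z$, $y=h$ precisely so that the head term $\|z_s\|_F$ cancels, leaving only quantities that the $F$-RNSP and $\sigma_{F,s}(z)$ can absorb. The hypothesis $\rho<\beta$ (rather than $\rho<1$ as in the orthonormal case of \Cref{theorem:RNSP:guarantee}) enters exactly at this point, to keep $\beta-\rho>0$ when solving the resulting scalar inequality; this is where the frame geometry, encoded in $\beta$, trades off against the strength of the null space property. The remaining steps — the two feasibility estimates and the final collection of constants — are routine.
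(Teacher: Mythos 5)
Your argument is correct and is the natural route for this statement (the chapter itself gives no proof, deferring to \cite[Corollary 5.3]{C24}): feasibility of $z$ gives $\|\hat z\|_F\le\|z\|_F$ and $\|Ah\|_2\le 2\eta$, splittability with $x=z$, $y=h$ cancels $\|z_s\|_F$ and yields $\beta\|h-h_s\|_F\le\|h_s\|_F+(1+\beta)\sigma_{F,s}(z)$, and \ref{frnsp} closes the loop using $\beta-\rho>0$. One remark on the constants: carrying out your final algebra gives
\[
\|h\|_F\le\frac{(1+\beta)(1+\rho)}{\beta-\rho}\,\sigma_{F,s}(z)+\Big(\frac{2\tau(1+\rho)}{\beta-\rho}+2\tau\Big)\eta
=\frac{(1+\beta)(1+\rho)}{\beta-\rho}\,\sigma_{F,s}(z)+\frac{2\tau(1+\beta)}{\beta-\rho}\,\eta,
\]
so the $\eta$-coefficient matches the theorem exactly, while your $\sigma_{F,s}$-coefficient lacks the extra factor $1/\beta$ appearing in the stated bound; since $\beta\le 1$ (the paper notes $\beta=1$ is the optimal value), your bound is at least as strong and implies the theorem as stated, but you should say this explicitly rather than asserting that the stated constants ``emerge from the final algebra.''
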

The above theorem is a generalization of \eqref{equ:thm:rnsp:1} in Theorem \ref{theorem:RNSP:guarantee} (with $q=1$). It remains to be future work to establish a performance guarantee like \eqref{equ:thm:rnsp:q}, which should involve a modified version of \ref{frnsp}.

\begin{theorem}[{\cite[Corollary 5.4]{C24}}]
If $A$ has \ref{sfnsp}, then for any $z\in\R^d$ and  $\|w\|_2\leq\eta$, we have
\begin{equation}\label{equ:Fsnsp}
\|z - \Delta_{F, A, \eta}(A z+w)\|_2\leq\frac{2}{\nu_A}\left(\frac{\sqrt{n}}{c}+1\right)\eta+\frac{2}{c}\sigma_{F,k}(z),
\end{equation}
where $\nu_A$ is the smallest nonzero singular value of $A$.

\end{theorem}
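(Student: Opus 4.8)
The plan is to control the reconstruction error $h := \Delta_{F,A,\eta}(Az+w) - z$ by splitting it into a component in $\ker(A)$, where the strong $F$ null space property \ref{sfnsp} can be applied directly, and a component orthogonal to $\ker(A)$, which is forced to be small by the measurement fidelity together with the smallest nonzero singular value $\nu_A$. I would first record the two elementary facts that drive the argument. Writing $\hat z := \Delta_{F,A,\eta}(Az+w) = F x^*$ with $x^*$ the $\ell_1$-minimizer in \eqref{equ:l1s}, any coefficient vector $\bar x$ attaining $\|z\|_F$ (so $F\bar x = z$) is feasible, because $\|AF\bar x - (Az+w)\|_2 = \|w\|_2 \le \eta$; hence $\|\hat z\|_F \le \|x^*\|_1 \le \|\bar x\|_1 = \|z\|_F$, which is the minimality that stands in for the usual $\ell_1$ optimality. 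Second, from $\|A\hat z - (Az+w)\|_2 \le \eta$ and $\|w\|_2 \le \eta$ one gets $\|Ah\|_2 \le 2\eta$.

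Next I would decompose $h = h_0 + h_1$ with $h_0 \in \ker(A)$ and $h_1 \in \ker(A)^{\perp}$. Since $A$ restricted to $\ker(A)^{\perp}$ is injective with smallest singular value $\nu_A$ and $Ah = Ah_1$, this yields $\|h_1\|_2 \le \|Ah\|_2/\nu_A \le 2\eta/\nu_A$, which ultimately produces the ``$+1$'' term in \eqref{equ:Fsnsp}. For the kernel component, I would invoke \ref{sfnsp}: using that $\Sigma_{F,k}$ is symmetric and that $\|\!-\!z_k\|_F = \|z_k\|_F$, applying \ref{sfnsp} with the $F$-$k$-sparse vector $-z_k$ gives $\|h_0 + z_k\|_F - \|z_k\|_F \ge c\|h_0\|_2$.

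I would then chain triangle inequalities against the minimality. From $\|\hat z\|_F = \|z + h_0 + h_1\|_F \ge \|z_k + h_0\|_F - \sigma_{F,k}(z) - \|h_1\|_F \ge \|z_k\|_F + c\|h_0\|_2 - \sigma_{F,k}(z) - \|h_1\|_F$ (the second step using the \ref{sfnsp} bound just derived), combined with $\|\hat z\|_F \le \|z\|_F \le \|z_k\|_F + \sigma_{F,k}(z)$ and cancelling $\|z_k\|_F$, I obtain $c\|h_0\|_2 \le 2\sigma_{F,k}(z) + \|h_1\|_F$.

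The remaining and most delicate step is to convert $\|h_1\|_F$ into $\|h_1\|_2$: I would use $\|v\|_F \le \sqrt{n}\,\|v\|_2$ for every $v$, which follows by representing $v$ through its (dual-)frame analysis coefficients and applying Cauchy–Schwarz. I expect this to be the main obstacle, because the clean factor $\sqrt{n}$ requires the frame to be normalized so that the relevant coefficient energy is bounded by $\|v\|_2^2$ (e.g. a Parseval normalization, or lower frame bound at least one); for a general frame this inequality carries an additional frame-dependent constant that must be either assumed away or tracked. Granting it, $\|h_1\|_F \le \sqrt{n}\cdot 2\eta/\nu_A$, so $\|h_0\|_2 \le \tfrac{2}{c}\sigma_{F,k}(z) + \tfrac{2\sqrt{n}}{c\,\nu_A}\eta$, and finally $\|h\|_2 \le \|h_0\|_2 + \|h_1\|_2 \le \tfrac{2}{c}\sigma_{F,k}(z) + \tfrac{2}{\nu_A}\bigl(\tfrac{\sqrt{n}}{c}+1\bigr)\eta$, which is exactly \eqref{equ:Fsnsp}.
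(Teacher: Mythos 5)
Your argument is correct and is essentially the standard (and, judging by the exact match of the constants $\tfrac{2}{c}$, $\tfrac{2}{\nu_A}$ and $\tfrac{\sqrt{n}}{c}$, the intended) route: split $h$ into $\ker(A)$ and $\ker(A)^{\perp}$ components, control the latter by $2\eta/\nu_A$ via the smallest nonzero singular value, apply \ref{sfnsp} with $v=-z_k$ to the kernel component, and close the loop with the minimality $\|\hat z\|_F\leq\|z\|_F$. The caveat you flag on $\|v\|_F\leq\sqrt{n}\,\|v\|_2$ is genuine but is a feature of the statement rather than of your proof: taking the canonical dual coefficients gives $\|v\|_F\leq\sqrt{n}\,\|v\|_2/\sqrt{\lambda_{\min}(FF^{T})}$, so the clean $\sqrt{n}$ in \eqref{equ:Fsnsp} implicitly requires a lower frame bound at least $1$ (e.g.\ $F$ Parseval), and otherwise that frame-dependent factor should accompany the $\sqrt{n}/c$ term.
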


By \Cref{rem:sfnsp}, \ref{sfnsp} is a minimal condition, which partially explains the weaker performance guarantee than that of Theorem \ref{thm:frnsp} due to the growth term $\sqrt{n}$ in \eqref{equ:Fsnsp}.

It is yet to be researched how the equality-constrained decoder $\Delta_{F, A, 0}$ behaves under reasonable conditions on $A$. As mentioned previously, a result similar to Theorem \ref{thm:equality} is beneficial when noise-level $\eta$ is unknown. 

\vspace{0.75mm}
\paragraph{Analysis Sparsity Models}
Another popular optimization method in the form of \eqref{equ:f} is the $\ell_1$-analysis method:
\begin{equation}\label{equ:analysis}
\argmin_z \|Gz\|_1, \quad\text{subject to }\|Az - y\|_2\leq\eta, 
\end{equation}
where $G$ is sparsifying transformation such that $Gz$ is sparse. Note that this is not necessarily in the general framework \eqref{eq:SigmaFk}, but is a relevant model with wide applications, such as finite difference and wavelet transform for imaging problems~\cite{KrahmerWard-2014}. In particular, if $F$ is a Parseval frame, then $G=F^T$ is a popular choice~\cite{CENR11, ACP12}, and we denote it by 
\begin{equation}\label{equ:l1a}
\Lambda_{F, A,\eta}(y):=\argmin_z \|F^Tz\|_1, \quad\text{subject to }\|Az - y\|_2\leq\eta.
\end{equation}

\begin{theorem}[{\cite[Theorem 1.4]{CENR11}}]\label{thm:frip}
Let $F$ be Parseval\footnote{$FF^T=I$}. If $A$ satisfies $F$-RIP-$(2k,\delta)$ with $\delta<0.08$, then
\begin{equation}\label{equ:thmfrip}
\|z-\Lambda_{F,A,\eta}(Az+w)\|_2\leq D_1\frac{\sigma_k(F^Tz)}{\sqrt{k}}+D_2\eta
\end{equation}
for all $z\in\R^d$ and $w\in\R^m$ with $\|w\|_2\leq\eta$. Both $D_1, D_2$ only depend on $\delta$.
\end{theorem}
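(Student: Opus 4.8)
The plan is to follow the analysis-$\ell_1$ argument of \cite{CENR11}, adapting the classical RIP cone analysis to the frame setting via the Parseval identity $FF^T = I_d$. Write $\hat z = \Lambda_{F,A,\eta}(Az+w)$ and set $h := \hat z - z$. Two facts drive everything. First, \emph{feasibility}: since $\|A\hat z - y\|_2 \le \eta$ and $\|Az - y\|_2 = \|w\|_2 \le \eta$, the triangle inequality gives the tube bound $\|Ah\|_2 \le 2\eta$. Second, \emph{optimality}: $\|F^T\hat z\|_1 \le \|F^T z\|_1$. Letting $T_0$ index the $k$ largest-magnitude entries of $F^T z$ (so that $\|(F^T z)_{T_0^c}\|_1 = \sigma_k(F^T z)$) and expanding $F^T\hat z = F^T z + F^T h$ over $T_0$ and $T_0^c$ yields the cone condition
\[
\|(F^T h)_{T_0^c}\|_1 \le \|(F^T h)_{T_0}\|_1 + 2\sigma_k(F^T z).
\]

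Next I would partition $T_0^c$ into blocks $T_1, T_2, \ldots$ of size $k$ in order of decreasing magnitude of $F^T h$, and set $T_{01} := T_0 \cup T_1$. Because $F$ is Parseval, $h = FF^T h = \sum_{j\ge0} F(F^T h)_{T_j}$, so
\[
\|h\|_2 \le \|F(F^T h)_{T_{01}}\|_2 + \sum_{j\ge2}\|F(F^T h)_{T_j}\|_2.
\]
The tail is handled by the usual shelling estimate: $\|Fx\|_2 \le \|x\|_2$ (Parseval) together with $\|(F^Th)_{T_j}\|_2 \le k^{-1/2}\|(F^T h)_{T_{j-1}}\|_1$ gives $\sum_{j\ge2}\|F(F^Th)_{T_j}\|_2 \le k^{-1/2}\|(F^Th)_{T_0^c}\|_1$, which the cone condition converts into $k^{-1/2}\|(F^Th)_{T_0}\|_1 + 2k^{-1/2}\sigma_k(F^T z)$. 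For the head term, I would invoke \ref{frip}: since $F(F^T h)_{T_{01}} \in \Sigma_{F,2k}$, the lower $F$-RIP bound gives $\|F(F^Th)_{T_{01}}\|_2 \le (1-\delta)^{-1/2}\|AF(F^Th)_{T_{01}}\|_2$; substituting $AF(F^Th)_{T_{01}} = Ah - \sum_{j\ge2}AF(F^Th)_{T_j}$, bounding $\|Ah\|_2 \le 2\eta$, and applying the upper $F$-RIP bound blockwise (each $F(F^Th)_{T_j}\in\Sigma_{F,k}$) reduces the head to $2\eta$ plus a $\delta$-weighted multiple of the same tail sum.

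The crux, and the step I expect to be the genuine obstacle, is closing the loop, i.e.\ controlling $\|(F^Th)_{T_0}\|_1 \le \sqrt{k}\,\|(F^Th)_{T_{01}}\|_2$ in terms of quantities already bounded. Here the frame redundancy bites: the restriction $(F^Th)_{T_{01}}$ to a coordinate block generally leaves the range of $F^T$, so the convenient identity $\|x\|_2 = \|Fx\|_2$ fails and Parseval supplies only the \emph{wrong-direction} inequality $\|F(F^Th)_{T_{01}}\|_2 \le \|(F^Th)_{T_{01}}\|_2$. I would get around this through the inner-product identity
\[
\|(F^Th)_{T_{01}}\|_2^2 = \langle (F^Th)_{T_{01}}, F^T h\rangle = \langle F(F^Th)_{T_{01}}, h\rangle,
\]
and then estimate the right-hand side by splitting $h = \sum_j F(F^Th)_{T_j}$ and using $F$-RIP-based restricted-orthogonality bounds (of the form $|\langle Au, Av\rangle - \langle u,v\rangle|\le \delta\|u\|_2\|v\|_2$ for $F$-sparse $u,v$ with disjoint coefficient support) together with the tube bound $\|Ah\|_2\le2\eta$. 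This produces a self-bounding inequality $\|(F^Th)_{T_{01}}\|_2 \le \theta(\delta)\,\|(F^Th)_{T_{01}}\|_2 + (\text{noise and tail terms})$, which can be rearranged only when the accumulated constant satisfies $\theta(\delta)<1$; tracking these constants is exactly what forces the stringent threshold $\delta<0.08$. Finally I would back-substitute the resulting bound on $\|(F^Th)_{T_{01}}\|_2$ into the head and tail estimates and collect terms to obtain \eqref{equ:thmfrip} with constants $D_1,D_2$ depending only on $\delta$.
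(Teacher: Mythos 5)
Your skeleton (feasibility tube, cone condition from optimality, shelling of $T_0^c$, the Parseval reconstruction $h=FF^Th$, and lower/upper \ref{frip} on head and tail blocks) is exactly the one used in the source \cite{CENR11} that this survey cites for the theorem. The problem is the step you yourself flag as the crux. Your proposed restricted-orthogonality argument transplants the classical Cand\`es RIP proof, but that proof relies on the \emph{exact} orthogonality $\langle h_{T_{01}}, h_{T_j}\rangle=0$ of disjointly supported coordinate blocks, which is precisely what fails for a redundant $F$: disjoint coefficient supports do not make $\langle F(F^Th)_{T_{01}}, F(F^Th)_{T_j}\rangle$ vanish. The polarization consequence of \ref{frip} only controls $|\langle Au,Av\rangle-\langle u,v\rangle|$, so after applying it you are still left with the true cross terms, which you can only bound by Cauchy--Schwarz as $\|(F^Th)_{T_{01}}\|_2\|(F^Th)_{T_j}\|_2$ --- an $O(1)$ coefficient, not an $O(\delta)$ one. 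Consequently your self-bounding inequality has $\theta(\delta)\geq 1+\sqrt{(1+\delta)/(1-\delta)}\geq 2$ once the tail (whose shelling bound $k^{-1/2}\|(F^Th)_{T_0}\|_1\leq\|(F^Th)_{T_{01}}\|_2$ re-enters with coefficient $1$) and the head are combined. No choice of $\delta$ makes this less than $1$; the obstruction is structural, not a matter of tracking constants down to $0.08$. (A secondary issue: polarization for $u\in\Sigma_{F,2k}$, $v\in\Sigma_{F,k}$ needs \ref{frip} of order $3k$, exceeding the assumed $2k$.)

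The fix used in \cite{CENR11} is both simpler and different in two ways. First, the loop is closed by the \emph{analysis-side} isometry of the Parseval frame: $\|F^Th\|_2^2=\langle FF^Th,h\rangle=\|h\|_2^2$, hence $\|(F^Th)_{T_0}\|_2\leq\|F^Th\|_2=\|h\|_2$. (Your own identity $\|(F^Th)_{T_{01}}\|_2^2=\langle (F^Th)_{T_{01}},F^Th\rangle$ gives this in one line via Cauchy--Schwarz --- you did not need to split $h$ into blocks at all.) This converts the estimate into a self-bounding inequality in $\|h\|_2$. Second, since the resulting coefficient of $\|h\|_2$ is still $\sqrt{k/M}\bigl(1+\sqrt{(1+\delta)/(1-\delta)}\bigr)$ when the tail blocks $T_1,T_2,\dots$ have size $M$, one must take $M$ to be a sufficiently large constant multiple of $k$ (so the shelling gain $\sqrt{k/M}$ defeats the $O(1)$ factor), and the theorem's order-$2k$ hypothesis with $\delta<0.08$ is then recovered from the higher-order condition via the standard lemma bounding $\delta_{ck}$ in terms of $\delta_{2k}$. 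Your proof with size-$k$ blocks and the restricted-orthogonality closure does not go through as written.
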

An interesting phenomenon with Theorem \ref{thm:frip} is that there is a mismatch between signal model and sensing matrix requirement. The condition \ref{frip} requires $A$ to act as a near isometry on $\Sigma_{F,s}$, so one hopes that   with appropriately chosen $s$ and $\delta$, \ref{frip} is able to provide a performance guarantee in the form of \eqref{equ:performance}. However, \eqref{equ:thmfrip} only guarantees the stable recovery of analysis sparse signals. It has been a long standing question  whether \ref{frip} is able to guarantee the performance of the $\ell_1$-synthesis method \eqref{equ:l1s}. We answer this question negatively in this book chapter, and it is essentially due to the lack of involvement of some $F$-related-norm in the inequality \ref{frip}.

Let us make things simpler for a moment and let $F=D$ be an $d\times d$ invertible diagonal matrix. In this case, $\Sigma_{D, k}=\Sigma_k$, which means \ref{rip} is equivalent to $D$-RIP-$(k,\delta)$. However, the $\ell_1$-synthesis method is highly sensitive to $D$, even when reduced to a diagonal matrix. To be precise, we have $\Delta_{D,A,\eta}(y)=\Delta_{1, AD, \eta}(y)$, whose success of recovering all signals in $\Sigma_k$ requires $AD$ to satisfy \ref{nsp} by Theorem \ref{thm:nsp}, but this cannot be true for all $D$ by Lemma \ref{lem:nsp}.

The above argument can be generalized to full spark  frame $F$ as shown below.
\begin{theorem} \label{thm:dripbad}
Assume $F$ is of full spark, then $A$ satisfying $F$-RIP-$(2k,\delta)$ does not guarantee the exact recovery of all $F$-sparse signals through the $\ell_1$-synthesis method $\Delta_{F, A, 0}$, regardless of how small $\delta$ is.
\end{theorem}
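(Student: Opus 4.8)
The plan is to exploit a structural asymmetry between the two sides of the claimed implication. The property \eqref{frip} constrains $A$ only through its action on the signal set $\Sigma_{F,2k}$, whereas the success of the $\ell_1$-synthesis decoder \eqref{equ:l1s} is governed, through \Cref{thm:fnsp} together with \Cref{rem:equiv}, by whether $AF$ satisfies \eqref{nsp}. The decisive observation is that $\Sigma_{F,2k}$ is invariant under a rescaling of the frame columns while \eqref{nsp} of $AF$ is not: for any invertible diagonal matrix $D$ one has $\Sigma_{FD,2k}=\Sigma_{F,2k}$, because $Dx$ runs through all $2k$-sparse vectors exactly as $x$ does. Consequently $A$ satisfies $F$-RIP-$(2k,\delta)$ if and only if it satisfies $(FD)$-RIP-$(2k,\delta)$ with the identical constant $\delta$, so \eqref{frip} is completely blind to the column scaling to which \eqref{equ:l1s} is sensitive, mirroring the diagonal warm-up preceding the statement.

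Building on this, I would first fix a full spark frame $F_0$ and a matrix $A$ such that $A$ has $F_0$-RIP-$(2k,\delta)$ with $\delta$ as small as desired and, at the same time, $AF_0$ satisfies \eqref{nsp}, so that synthesis succeeds for the pair $(A,F_0)$. I would then apply \Cref{lem:nsp} to $B=AF_0$: since $AF_0$ has \eqref{nsp}, there is a diagonal matrix $D$, invertible because its diagonal entries are strictly positive in the construction of \Cref{lem:nsp}, for which $AF_0D$ fails \eqref{nsp}. Setting $F:=F_0D$, the frame $F$ remains full spark, as scaling columns by nonzero factors preserves the linear independence of every $d$-element subset of columns. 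By the invariance above, $A$ still satisfies $F$-RIP-$(2k,\delta)$ with the same small $\delta$, yet $AF=AF_0D$ violates \eqref{nsp}; by \Cref{rem:equiv} this means $A$ fails \eqref{fnsp}, and then \Cref{thm:fnsp} shows that $\Delta_{F,A,0}$ cannot recover every $z\in\Sigma_{F,k}$. Since $\delta$ entered nowhere, the conclusion is independent of how small $\delta$ is.

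The step requiring the most care is producing the base pair $(A,F_0)$ that simultaneously has a small $F_0$-RIP constant and makes $AF_0$ satisfy \eqref{nsp}. The cheapest choice is $A=I_d$, which satisfies $F$-RIP-$(2k,0)$ for every frame $F$; then one only needs a full spark $F_0$ whose matrix satisfies \eqref{nsp}, which a generic frame (for instance a random Gaussian $F_0$) does with high probability. If one prefers a genuinely compressive $A$ with $m<d$, the difficulty is that \eqref{frip} is a near-isometry condition on the signal set while \eqref{nsp} of $AF_0$ is a condition on $\ker(AF_0)$; I would reconcile them by choosing $F_0$ to be a random full spark frame that also acts as a near-isometry on $2k$-sparse coefficient vectors and $A$ a random matrix with small $F_0$-RIP, so that composing the two near-isometries shows $AF_0$ obeys an ordinary restricted isometry property of order $2k$, hence \eqref{nsp} (recall that RIP implies RNSP, which in turn implies \eqref{nsp}, as recorded after \Cref{thm:equality}). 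Either route supplies the base pair, and the remaining verifications, namely that $D$ is invertible and that full spark survives $F_0\mapsto F_0D$, are routine.
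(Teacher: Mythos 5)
Your proof follows essentially the same route as the paper's: the engine in both cases is the pair of observations that $\Sigma_{FD,k}=\Sigma_{F,k}$ for every invertible diagonal $D$ (so \ref{frip} is blind to column rescaling of the frame) and that, by \Cref{thm:fnsp} and \Cref{rem:equiv}, success of $\Delta_{F,A,0}$ on all of $\Sigma_{F,k}$ forces \eqref{nsp} of $AF$, which \Cref{lem:nsp} shows cannot survive every rescaling. The only real difference is presentational: the paper runs this as a one-line contradiction, while you make it constructive by first producing a base pair $(A,F_0)$ with $AF_0$ satisfying \eqref{nsp} and then rescaling. Your compressive instantiation of the base pair (a random full spark $F_0$ with an ordinary RIP on $2k$-sparse coefficient vectors and a random $A$ with small $F_0$-RIP, so that $AF_0$ inherits a RIP and hence \eqref{nsp}) is sound.

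Your ``cheapest choice'' $A=I_d$, however, does not work. For injective $A$ the feasible set of $\Delta_{F,A,0}(Az_0)$ is the singleton $\{z_0\}$, so the decoder returns $z_0$ exactly for every signal and every frame; no rescaling of $F$ can make it fail. Equivalently, \eqref{fnsp} quantifies over $\ker(A)\setminus\{0\}$, which is empty for $A=I_d$, so it holds vacuously, and the equivalence with \eqref{nsp} of $AF$ in \Cref{rem:equiv} is only meaningful in the underdetermined regime: for injective $A$ one has $\ker(AF)=\ker(F)$, and condition (iii) can fail while condition (i) holds vacuously. Applying \Cref{rem:equiv} with $A=I_d$ therefore yields a false conclusion whenever $F$ itself fails \eqref{nsp}. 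Drop that route and keep the compressive one; with that repair the argument is complete and matches the paper's.
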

\begin{proof} Fix a non-singular diagonal matrix $D$.
If $A$ satisfying $F$-RIP-$(2k,\delta)$ does guarantee $\Delta_{F, A, 0}(Az_0)=z_0$ for all $z_0\in\Sigma_{F,k}$, then $A$ satisfying $FD$-RIP-$(2k,\delta)$ also guarantees $\Delta_{FD, A, 0}(Az_0)=z_0$ for all $z_0\in\Sigma_{FD,k} = \Sigma_{F, k}$, which is equivalent to $A$ satisfying $FD$-NSP-$k$ by Theorem \ref{thm:fnsp},  and further equivalent to $AFD$ satisfying NSP-$k$ by Remark \ref{rem:equiv}.

However, by Lemma \ref{lem:nsp}, we can pick $D$ such that $AFD$ does not satisfy NSP-$k$, which is a contradiction.
\end{proof}

\section{Sensing Matrix Design}\label{sec:random}
After having characterized abstract conditions for which $\ell_1$-synthesis decoders are able to robustly identify $F$-$k$-sparse vectors from undersampled measurements in \Cref{sec:recoveryguarantees:frames}, we address now the question which particular sensing matrices $A$ are suitable and satisfy conditions as laid out in \Cref{def:basisp} or \Cref{def:Fp}. Apart from the choice of the sensing matrix (question Q2), we are also shedding light on the minimum number $m$ of rows of $A$ (number of measurements) that is sufficient to enable successful recovery. 
In particular, we seek to find desirable $A$  \textbf{with $m$ being as small as possible, while fixing $n$, the number of frame vectors, $d$, the ambient signal dimension,  and $k$, the sparsity level.} 

\subsection{Sensing Matrices for Identity Matrix $F$} \label{sec:sensing:identity}
A core theme in the theory of compressed sensing problems has been the fact that in order to obtain recovery guarantees that hold uniformly across all signals in the desired sparse signal set such as $\Sigma_{F,k}$ while at the same time keeping the number of measurements $m$ close to the complexity of the signal set, random matrix ensembles are more suitable than deterministic designs for $A$. In this section, we present some key results that apply in the standard compressed sensing setting where the frame matrix is the identity matrix such that $F = I_d$. 

While deterministically designed sensing matrices can achieve empirical performance on par with the one of ideal random designs based on Gaussian sensing matrices \cite{Monajemi-2013Deterministic}, it has been notoriously hard to obtain deterministic sensing matrices with $m$ rows that are able to reconstruct $k$-sparse signals in a setting where $m$ depends almost linearly on $k$. The best known results for deterministically constructed sensing matrices requires $A$ to have $m=O(k^2)$ rows using coherence-based analysis~\cite{BG09} or $m = (k^{2-\epsilon})$ rows for some $\epsilon >0$ using a RIP-based analysis \cite{Bourgain-2011ExplicitRIP}. In the latter case, however, we also require the sensing matrix to be almost square such that $d^{1-\epsilon} \leq m$. We refer to \cite{Bandeira-2013Road,Clum-2022Derandomized,Arjoune-2018Performance}, \cite[Chapter 5]{Vidyasagar-2019CS} for an overview for proof techniques and open questions. 

On the other hand, allowing for random designs of $A$, we can achieve much better results such that already $m=O(k\log(d/k))$ measurements are sufficient for robust and stable recovery, if entries, rows, or columns of $A$ are drawn independently from some suitable distributions. Key results  for different distributional assumptions have been obtained in \cite{D06,CT06,CRT06,Baraniuk-2008Simple,MPT08,Kol11,F14,MendelsonLearning15,ML17,AK22}, the implications of which we report below. The order $\Omega(k \log(d/k))$ is known to be necessary, as shown by \cite{Foucart-2010Gelfand}, \cite[Chapter 10]{FR13}, cf. \Cref{thm:lower:bound} below.

\vspace{0.75mm}
\paragraph{Sub-Gaussian Measurements} The most well-studied class of random matrices suitable for compressed sensing, which also includes matrices with Gaussian i.i.d. entries, are sub-Gaussian sensing matrices \cite{MPT08} with rows or columns that are independent sub-Gaussian random vectors, which can be defined as follows. 
\begin{definition}[\cite{T15,Ver18:High-Dimensional-Probability}] \label{def:subG} A random vector $\phi$ is called a \emph{sub-Gaussian vector} with parameter $\sigma$
if there exists  a constant $\sigma > 0$ such that 
 \[
 \Pr\left(|\langle\phi, z\rangle|\geq t\right)\leq 2\exp(-t^2/(2\sigma^2))
 \]
  for each $t \geq 0$ and for every $\ell_2$-unit norm vector $z$.
\end{definition}
Furthermore, we call a random vector \emph{centered} if $\Eb [ \phi] = 0$, and we call it \emph{isotropic} if $\Eb[\phi \phi^{\top}] = I$ where $I$ is the identity matrix \cite[Definition 3.2.1]{Ver18:High-Dimensional-Probability}.

\begin{theorem}[{\cite[Theorem 5.65]{Ver10},\cite[Theorem 9.2]{FR13}}]\label{thm:subrip}
Let $A$ be an $(m\times d)$  matrix with independent sub-Gaussian rows or sub-Gaussian columns $\phi$ (in the latter case, we further assume that $\|\phi\|_2 = \sqrt{m}$ almost surely) of parameter $\sigma$.
 Then there exist constants $C,c>0$, depending only on $\sigma$ from Definition \ref{def:subG}, such that $\frac{1}{\sqrt{m}}A$ satisfies \ref{rip} with probability at least $1-2 \exp(-c m \delta^2)$ provided that
$$m\geq C\delta^{-2} k\log(e d/k).$$
\end{theorem}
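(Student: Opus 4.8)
The plan is to establish \ref{rip} for $\frac{1}{\sqrt m}A$ by the classical three-step argument: (i) prove that for each \emph{fixed} $k$-sparse unit vector the normalized energy $\frac{1}{m}\|Az\|_2^2$ concentrates sharply around $\|z\|_2^2$; (ii) upgrade this pointwise estimate to a uniform estimate over an entire $k$-dimensional coordinate subspace by a covering-net argument; and (iii) take a union bound over all $\binom{d}{k}$ choices of support. Throughout, it suffices to treat vectors supported on a fixed index set $T$ with $|T|\le k$ and of unit $\ell_2$-norm, since \ref{rip} is equivalent to $\big|\frac{1}{m}\|Az\|_2^2-\|z\|_2^2\big|\le \delta$ for all $z\in\Sigma_k$, and by homogeneity we may restrict to the unit sphere $S_T$ of each subspace $\R^T$.

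For the pointwise step in the row case, write $\frac{1}{m}\|Az\|_2^2=\frac{1}{m}\sum_{i=1}^m\langle\phi_i,z\rangle^2$ as a sum of $m$ i.i.d. terms. Definition \ref{def:subG} makes each $\langle\phi_i,z\rangle$ sub-Gaussian of parameter $\sigma$, hence each squared term $\langle\phi_i,z\rangle^2$ is \emph{sub-exponential}, and isotropy of the rows gives $\Eb[\langle\phi_i,z\rangle^2]=\|z\|_2^2=1$. The centered variables $\langle\phi_i,z\rangle^2-1$ are mean-zero sub-exponential with norm controlled by $\sigma^2$, so Bernstein's inequality yields
\begin{equation*}
\Pr\!\left(\Big|\tfrac{1}{m}\|Az\|_2^2-1\Big|\ge t\right)\le 2\exp\!\big(-c\,m\min(t^2,t)\big),
\end{equation*}
with $c$ depending only on $\sigma$. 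For $t=\delta/2\le 1$ this is the desired $2\exp(-c\,m\,\delta^2/4)$ concentration.

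For the net step, fix $T$ with $|T|\le k$ and choose an $\varepsilon$-net $\mathcal N$ of $S_T$ with $|\mathcal N|\le (1+2/\varepsilon)^k\le(3/\varepsilon)^k$ for, say, $\varepsilon=1/4$. Applying the pointwise bound with $t=\delta/2$ and a union bound over $\mathcal N$ controls the deviation simultaneously at all net points; the standard ``net-to-sphere'' lemma then transfers this to all of $S_T$ at the cost of adjusting the constant in front of $\delta$. Finally, a union bound over the $\binom{d}{k}\le(ed/k)^k$ supports makes the total failure probability at most
\begin{equation*}
2\exp\!\Big(k\log(ed/k)+k\log(3/\varepsilon)-c\,m\,\delta^2/4\Big),
\end{equation*}
which is bounded by $2\exp(-c'm\delta^2)$ precisely when $m\ge C\delta^{-2}k\log(ed/k)$, absorbing the $\log(3/\varepsilon)$ term into the constant $C$. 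The column case follows the same scheme after observing that $\frac{1}{m}\|Az\|_2^2=\sum_{j}z_j^2\frac{\|\phi_j\|_2^2}{m}+\frac{1}{m}\sum_{j\ne j'}z_jz_{j'}\langle\phi_j,\phi_{j'}\rangle$; the normalization $\|\phi_j\|_2=\sqrt m$ forces the diagonal contribution to equal $\|z\|_2^2$ exactly, so one only needs to show the off-diagonal quadratic form concentrates at $0$, again via a Bernstein-type estimate followed by the same net and support union bounds.

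I expect the main obstacle to be the pointwise concentration step: converting the sub-Gaussian tail of Definition \ref{def:subG} into a usable sub-exponential tail for the squares and tracking that the resulting Bernstein constants depend only on $\sigma$ (and not on $z$, $T$, or the dimensions). A secondary subtlety is the column case, where the cross terms $\langle\phi_j,\phi_{j'}\rangle$ are not independent across pairs $(j,j')$, so the clean i.i.d. Bernstein argument must be replaced by a decoupling or Hanson--Wright-type bound; once that pointwise estimate is in place, the net and union-bound bookkeeping is routine and produces the stated sample complexity $m\ge C\delta^{-2}k\log(ed/k)$.
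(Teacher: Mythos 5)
Your proposal is correct and follows essentially the same route as the sources the paper cites in lieu of its own proof: the paper defers to \cite[Theorem 9.2]{FR13} and \cite[Theorem 5.65]{Ver10}, whose arguments are exactly your three-step scheme (Bernstein concentration of $\frac1m\|Az\|_2^2$ for the row case, an $\varepsilon$-net over each coordinate sphere, and a union bound over the $\binom{d}{k}$ supports, with the column case handled by the diagonal/off-diagonal split and a decoupling/Hanson--Wright estimate). One caveat worth making explicit: your pointwise step uses $\Eb[\langle\phi_i,z\rangle^2]=\|z\|_2^2$, i.e., isotropy of the rows (and centeredness in the column case), which is not literally contained in Definition \ref{def:subG} but is a necessary hypothesis implicitly intended in the theorem and assumed in both cited references.
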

Combined with Theorem \ref{thm:rip}, \Cref{thm:subrip} implies that a sub-Gaussian matrix, with $m$ at least on the order of $k\log(d/k)$, with large probability, allows $\ell_1$-minimization $\Delta_{1,A,\eta}$ to recover nearly sparse signals stably and robustly. Conversely, $m=O(k\log(d/k))$ is also needed for any decoder to recover nearly sparse signals, as seen in \Cref{thm:lower:bound}, which can be shown using results about Gelfand widths of $\ell_1$-balls \cite{Garnaev84,Foucart-2010Gelfand}.

\begin{theorem}[{\cite[Proposition 10.7]{FR13}}] \label{thm:lower:bound}
Let $1<p\leq2$. Suppose that there exist a matrix $A\in\R^{m\times d}$ and a decoder $\Delta$ such that for any $z \in \mathbb{R}^d$,
$$\|z-\Delta(Az)\|_p\leq\frac{C}{k^{1-1/p}}\sigma_k(z),$$
then $m\geq c_1k\log(ed/k)$ for some $c_1, c_2>0$ only depending on $C$.
\end{theorem}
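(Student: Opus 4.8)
The plan is to reduce the mixed instance-optimality hypothesis to an upper bound on the $m$-th Gelfand width of the $\ell_1$-ball and then to invoke the known sharp lower bound for that width.

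First I would extract a null-space inequality. Setting $z=0$ in the hypothesis gives $\|\Delta(0)\|_p\le \frac{C}{k^{1-1/p}}\sigma_k(0)=0$, so $\Delta(0)=0$. For any $v\in\ker(A)$ we have $Av=0=A\cdot 0$ and hence $\Delta(Av)=\Delta(0)=0$; applying the hypothesis to $z=v$ yields
\[
\|v\|_p\le\frac{C}{k^{1-1/p}}\sigma_k(v)\le\frac{C}{k^{1-1/p}}\|v\|_1,\qquad\forall\,v\in\ker(A),
\]
where the last step uses $\sigma_k(v)\le\|v\|_1$. Since $\dim\ker(A)\ge d-m$, restricting to $\|v\|_1\le1$ bounds the $m$-th Gelfand width of $B_1^d$ in $\ell_p^d$,
\[
d^m(B_1^d,\ell_p^d):=\inf_{\substack{L\subseteq\R^d\\ \dim L\ge d-m}}\ \sup_{x\in L\cap B_1^d}\|x\|_p\le\frac{C}{k^{1-1/p}}.
\]

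Next I would invoke the classical lower bound for Gelfand widths of $\ell_1$-balls, namely the Garnaev--Gluskin estimate and its $\ell_p$-version from \cite{Garnaev84,Foucart-2010Gelfand}: there is a constant $c>0$, depending only on $p$, such that for $1\le m\le d$
\[
d^m(B_1^d,\ell_p^d)\ge c\,\min\Big\{1,\big(\tfrac{\log(ed/m)}{m}\big)^{1-1/p}\Big\}.
\]
This is the deep external input and the principal obstacle of the argument; rather than reprove it I would cite it, since establishing it from scratch needs either a volumetric duality computation or a Kashin-type random-subspace construction. In the nontrivial regime $\log(ed/m)\le m$ the minimum equals its second term, and combining the two width estimates gives $c\big(\log(ed/m)/m\big)^{1-1/p}\le C\,k^{-(1-1/p)}$; raising both sides to the power $p/(p-1)$ produces
\[
m\ge c'\,k\,\log(ed/m),\qquad c'=(c/C)^{p/(p-1)}.
\]

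Finally I would convert $\log(ed/m)$ into $\log(ed/k)$. Writing $t=m/k$, the previous display gives $\log(ed/m)\le t/c'$, and with $\log t\le t$ we obtain $\log(ed/k)=\log(ed/m)+\log t\le t(1+1/c')$, hence $m=tk\ge\frac{c'}{1+c'}k\log(ed/k)$, which is the claim with $c_1=c'/(1+c')$ depending only on $C$ (and the fixed $p$). The boundary cases are routine: when $m\ge d$ the conclusion is immediate because $k\mapsto k\log(ed/k)$ is increasing on $[1,d]$ with maximal value $d$; the degenerate regime $\log(ed/m)>m$ forces $k\le(C/c)^{p/(p-1)}$ to be bounded, and this small-$k$ situation is handled by the finer analysis of \cite{Foucart-2010Gelfand,FR13}. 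The essential content of the theorem therefore rests entirely on the width estimate quoted above.
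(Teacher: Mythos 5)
Your proposal is correct and follows essentially the same route the paper relies on---the paper gives no proof of its own, citing \cite[Proposition 10.7]{FR13} and remarking that the result ``can be shown using results about Gelfand widths of $\ell_1$-balls'' \cite{Garnaev84,Foucart-2010Gelfand}---namely the null-space reduction ($\Delta(0)=0$, hence $\|v\|_p\leq C k^{-(1-1/p)}\sigma_k(v)\leq Ck^{-(1-1/p)}\|v\|_1$ on $\ker(A)$) to an upper bound on $d^m(B_1^d,\ell_p^d)$, the Garnaev--Gluskin lower bound for that width, and the conversion of $\log(ed/m)$ into $\log(ed/k)$, all of which you carry out correctly. Your deferral of the degenerate regime $\log(ed/m)>m$ (which can only occur when $k$ is bounded by a constant depending on $C$, and where the pure width comparison genuinely yields nothing, so that the finer analysis of \cite{Foucart-2010Gelfand,FR13} is needed) is consistent with the scope of the original statement in \cite{FR13}, whose handling of exactly this regime is presumably the origin of the otherwise unused constant $c_2$ in the paper's formulation of the theorem.
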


The methodology for proving \Cref{thm:subrip} for independent rows has been developed in \cite{MPT08} using a concentration inequality for sub-Gaussian vectors. For the case of independent sub-Gaussian columns, it is convenient to use extremal singular value estimates \cite[Theorem 5.65]{Ver10} and for the special case when $A$ is Gaussian, results with sharper bounds can be obtained in this manner~\cite{CT06}.

Apart from Gaussians, the class of sub-Gaussian distributions of \Cref{def:subG} includes also random vectors defined via independent Bernoulli entries, bounded entries or random vectors uniformly distributed on a unit sphere.

Given \Cref{theorem:RNSP:guarantee}, the optimal number of measurements can also be achieved by proving that sub-Gaussian matrices satisfy a robust null space property such as RNSP-$(q,k,\rho,\tau)$. In fact, this can be considered as a superior route because the robust NSP is weaker than the RIP for the same sparsity level $k$, and hence possibly admits a broader class of random matrices. 

\vspace{0.75mm}
\paragraph{Heavy-Tailed Measurements}
For designing sensing matrices based on random designs, it is a natural question to ask what the weakest possible assumption on the distribution of the random vector $\phi$ contained in its rows are. Using properties such as \ref{nsp} and \ref{rnsp}, it has been possible to show that heavier-tailed matrices with just sub-exponential, but not sub-Gaussian tail decay exhibit similar guarantees as sub-Gaussian sensing matrices, despite the fact that they cannot satisfy a RIP-$(k,\delta)$ unless their number of rows scales as $m = \Omega( k \log^2(e d/k))$ \cite{Adamczak-2011Restricted}. 
A noticeable example for such is matrices with i.i.d. entries that follow a Weibull distribution with scale parameters between $1 \leq r \leq 2$ (which include the Laplace distribution as a special case) and which is sub-exponential \cite{F14}. It has been shown that for such matrices, the RNSP of order $k$ holds with large probability with $m=O(k\log(ed/k)$~\cite{F14}, making such matrices candidates for optimal performance of the decoder $\Delta_{1,A,\eta}$. For exact recovery, i.e., the \ref{nsp} of $A$ is satisfied, it was shown in \cite[Theorem 7.3]{Kol11} that $m=O(k\log(ed/k)$ measurements of a random ensemble defined from vectors distributed as other sub-exponential distributions (including log-concave distributions) are sufficient.

Optimal compressed sensing recovery guarantees for heavy-tailed distributions were subsequently developed and generalized \cite{ML17,DLR18,AK22} using the small-ball method (due to Shahar Mendelson and Vladimir Koltchinskii \cite{Mendelson-2014Remark,Koltchinskii-2015Bounding,MendelsonLearning15}) based on the observation \cite[Theorem B]{ML17} that while the lower inequality in the property \ref{rip} is easily fulfilled for heavy-tailed distribution satisfying a non-degeneracy assumption stating that not the entire distributional mass is located around the origin, it is sufficient to satisfy the upper RIP-inequality for $1$-sparse vectors. 

Instead of sub-Gaussianity, we state below two weaker assumptions that still enable respective sensing matrices to achieve stable and robust recovery from a minimal amount of measurements via $\Delta_{1,A,\eta}$.
\begin{definition}[{Distributions satisfying small-ball assumption, \cite[Definition 1.4]{ML17}, \cite[Inequality (9)]{DLR18}, \cite{AK22}}] \label{def:small-ball} 
We say the distribution of a random vector $\phi \in \mathbb{R}^d$ \emph{satisfies a small-ball assumption on the set $S \subset \R^d$ with constants $u,c > 0$} if 
\[
\inf_{x\in S}  \Pr(|\langle \phi, x\rangle| \geq u) \geq c.
\]
\end{definition}

\begin{definition}[{Distributions with $K$ well-behaved entrywise moments, \cite[Theorem A(1)]{ML17}, \cite[Inequality (12)]{DLR18}, \cite[Assumption 3]{AK22}}] \label{def:finite-moment} 
We say the distribution of a random vector $\phi = (\phi_1,\ldots,\phi_d) \in \mathbb{R}^d$ has \emph{$K$ well-behaved entrywise moments with constants $\alpha \geq 1/2$ and $\lambda > 0$} if for each $i=1,\ldots,d$ and for all $2\leq p\leq K$,
\begin{equation} \label{eq:tail:decay}
\| \phi_i \|_{L^p} \leq \lambda p^{\alpha}.
\end{equation}
\end{definition}
For a random variable $X$, we define $\|X\|_{L^p}:=\Eb(|X|^p)^{1/p}$.
For sensing matrices $A$ with independent rows that are centered and satisfy \Cref{def:small-ball} and \Cref{def:finite-moment}, we state below the result with the weakest known distributional assumptions that still lead to robust recovery via $\Delta_{1,A,\eta}$ in the optimal measurement regime. We refer to \cite[Section 7.2]{FR13} for results quantifying the distributional tail decay based on moment bounds such as \cref{eq:tail:decay}.

\begin{theorem}[{\cite[Variation on Theorem 4.1]{AK22}}] \label{thm:AK22:41} Let $a_i/\sqrt{m}$, $i\in [m]$ be independent rows of the measurement matrix $A$ drawn from centered distributions satisfying \Cref{def:small-ball} on the set 
\[
S_{k,\rho}^2 :=\left\{v\in \mathbb{R}^d: \exists T\subset [d] \textrm{ with } |T|=k \text{ such that }  \|v_T\|_2 \geq \frac{\rho}{\sqrt{k}}\|v_{T^c}\|_1  \right\} \cap \mathbb{S}_{\ell_2}^{d-1}
\]
with constants $u,c >0$ and \Cref{def:finite-moment} with constants $K = \log( e d / k)$, $c > 0$ and $\alpha \geq 1/2$ and $\lambda > 0$, respectively. Then as long as the number of measurements $m$ satisfies
\begin{equation} \label{eq:measurement:bound:finitemomentresult}
m \gtrsim \max\left\{k\log(ed/k),\log^{\max \{2\alpha-1, 1\}}(ed/k)\right\},
\end{equation}
it holds that with probability of at least $1-e^{-\Omega(m)}$, the decoder $ \Delta_{1, A,\eta}$ provides stable and robust recovery of all $z \in \R^d$ from $y=Az+w$ with $\|w\|_2\leq \eta$ such that
\begin{equation} \label{eq:weakmoment:recoveryguarantee}
\|z- \Delta_{1, A,\eta}(A z + w)\|_p \leq C \frac{\sigma_s(z)}{k^{1-1/p}} + D k^{\frac{1}{p}-\frac{1}{2}} \eta,
\end{equation}
for each $1 \leq p \leq 2$, where the constants $C, D> 0$ only depend on $u,c,\alpha$ and $\lambda$.
\end{theorem}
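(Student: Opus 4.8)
The plan is to reduce \eqref{eq:weakmoment:recoveryguarantee} to the deterministic machinery of \Cref{theorem:RNSP:guarantee}: the guarantee \eqref{eq:weakmoment:recoveryguarantee} is precisely its conclusion \eqref{equ:thm:rnsp:q} with $q=2$, so it suffices to show that $A$ satisfies \eqref{rnsp} with some fixed $\rho<1$ and finite $\tau$ with probability $1-e^{-\Omega(m)}$. First I would record that this robust null space property is equivalent to a single uniform lower bound $\inf_{v\in S_{k,\rho}^2}\|Av\|_2\ge\kappa>0$. Indeed, by $1$-homogeneity and because the worst index set in \eqref{rnsp} is always the top-$k$ support, any $z$ either has $z/\|z\|_2\notin S_{k,\rho}^2$ --- in which case $\|z_T\|_2\le\frac{\rho}{\sqrt k}\|z_{T^c}\|_1$ for every $|T|\le k$ (extending undersized $T$ to size $k$) and \eqref{rnsp} holds trivially --- or $z/\|z\|_2\in S_{k,\rho}^2$, in which case $\|z_T\|_2\le\|z\|_2\le\kappa^{-1}\|Az\|_2$. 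Thus the whole statement collapses to producing an $m$-independent constant $\kappa$ lower bounding $\|Av\|_2$ over the sphere-cone $S_{k,\rho}^2$, after which one takes $\tau=\kappa^{-1}$.

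This infimum is exactly what Mendelson's small-ball method \cite{Mendelson-2014Remark,Koltchinskii-2015Bounding} is designed to bound. Writing $Q_{2\xi}(S)=\inf_{v\in S}\Pr(|\langle a,v\rangle|\ge 2\xi)$ and the mean empirical width $\mathcal{W}_m(S)=\Eb\sup_{v\in S}\big\langle \frac{1}{\sqrt m}\sum_{i=1}^m\varepsilon_i a_i,\,v\big\rangle$ with $\varepsilon_i$ independent Rademacher signs, the method yields --- with the $1/\sqrt m$ from the row normalization $a_i/\sqrt m$ already incorporated --- for every $t>0$ and with probability at least $1-e^{-t^2/2}$,
\[
\inf_{v\in S}\|Av\|_2\ \ge\ \xi\, Q_{2\xi}(S)\ -\ \frac{2\,\mathcal{W}_m(S)}{\sqrt m}\ -\ \frac{\xi t}{\sqrt m}.
\]
Taking $2\xi=u$, the centered small-ball hypothesis of \Cref{def:small-ball} on $S_{k,\rho}^2$ gives $Q_{2\xi}(S_{k,\rho}^2)\ge c$, and choosing $t\asymp\sqrt m$ produces the probability $1-e^{-\Omega(m)}$ while keeping the last term $O(1)$. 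Hence $\kappa\gtrsim uc$ follows --- and the reduction above closes --- as soon as the normalized width obeys $\mathcal{W}_m(S_{k,\rho}^2)\lesssim uc\,\sqrt m$.

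Everything therefore rests on bounding $\mathcal{W}_m(S_{k,\rho}^2)$, which I expect to be the main obstacle. The geometric input is cheap: every unit vector in $S_{k,\rho}^2$ satisfies $\|v\|_1\le \frac{1+\rho}{\rho}\sqrt k$, so $S_{k,\rho}^2$ lies in the compressible body $\{\|v\|_2\le1,\ \|v\|_1\le \frac{1+\rho}{\rho}\sqrt k\}$, whose support function at $g=\frac{1}{\sqrt m}\sum_i\varepsilon_i a_i$ is comparable, up to absolute constants, to $\|g_{(\Theta(k))}\|_2$, the $\ell_2$-mass of the $\Theta(k)$ largest coordinates of $g$. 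In the sub-Gaussian regime Dudley's chaining bounds this by $\sqrt{k\log(ed/k)}$; here I have only the $K=\log(ed/k)$ entrywise moments of \Cref{def:finite-moment}, so exponential tail bounds must be replaced by a moment method. Concretely, symmetrization together with a Rosenthal/Khintchine estimate gives $\|g_j\|_{L^p}\lesssim \sqrt p\,\lambda$ for all $2\le p\le K$, valid once $m\gtrsim \log^{\max\{2\alpha-1,1\}}(ed/k)$ --- which is exactly the role of the second term in \eqref{eq:measurement:bound:finitemomentresult}: it ensures the heavy-tailed Rosenthal correction, whose size is dictated by the moment-growth exponent $\alpha$, stays below the Gaussian-like $\sqrt p$ term. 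Feeding these $L^p$-bounds into a quantile control of the order statistics of $g$, the decisive calibration is that $K=\log(ed/k)$ renders the quantile factors of the form $(d/\ell)^{1/K}$ bounded for all ranks $\ell\gtrsim k$, which is precisely what converts the per-entry moment bound into the sharp $\mathcal{W}_m(S_{k,\rho}^2)\lesssim\lambda\sqrt{k\log(ed/k)}$. Combined with the dominant requirement $m\gtrsim k\log(ed/k)$ this makes $\mathcal{W}_m(S_{k,\rho}^2)/\sqrt m$ as small as the small-ball step demands, and the constants $C,D$ in \eqref{eq:weakmoment:recoveryguarantee} emerge from $\rho$ and $\tau=\kappa^{-1}$ through \Cref{theorem:RNSP:guarantee}. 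The genuinely delicate point, where I expect the bulk of the technical effort, is executing this order-statistic/moment bound so that only the stated lower-order $\log^{\max\{2\alpha-1,1\}}(ed/k)$ surcharge appears, rather than an extra $\log$ factor on the optimal $k\log(ed/k)$ rate.
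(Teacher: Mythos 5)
Your proposal is correct and follows essentially the same route as the paper's proof sketch in \Cref{sec:proof:sketch}: reduce \eqref{eq:weakmoment:recoveryguarantee} to \Cref{theorem:RNSP:guarantee} by establishing an RNSP via a uniform lower bound over $S_{k,\rho}^2$ (the content of \Cref{prop:robustnsp:implication}), apply Mendelson's small-ball bound (\Cref{small_ball_method}) with $t\asymp\sqrt m$, and control the mean empirical width through the compressible-body/convex-hull geometry of $S_{k,\rho}^2$ and an order-statistics moment bound with $K=\log(ed/k)$ moments (the paper's cited \cite[Lemma 5.1]{AK22}). Your identification of where the $\log^{\max\{2\alpha-1,1\}}(ed/k)$ term in \eqref{eq:measurement:bound:finitemomentresult} enters --- the Rosenthal-type correction when verifying the $L^p$-growth of the averaged Rademacher sum --- matches the underlying argument as well.
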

The recovery guarantee \eqref{eq:weakmoment:recoveryguarantee} matches the RIP-based ones and the best known one of \eqref{equ:thm:rnsp:q}, and its proof can be written by establishing a RNSP-$(2,k,\rho,\tau)$ with high probability, combined with \Cref{theorem:RNSP:guarantee}. We refer to \Cref{sec:proof:sketch} for a proof sketch.

The assumption on a bound of only $\log( e d / k)$ leading moments in \Cref{thm:AK22:41} is weaker than the ones of the similar results \cite[Corollary 8]{DLR18} and \cite[Theorem A]{ML17} (in the latter case, with a little caveat): Corollary 8 of \cite{DLR18} required \emph{independence} of the coordinates $\phi_1,\ldots,\phi_d$ of the random vector $\phi$ constituting the rows of the sensing matrix, and furthermore, requires $K = \log(d)$ moments with the growth condition \eqref{eq:tail:decay} instead of $\log(e d/ k)$, as well as the small-ball condition of \Cref{def:small-ball} on the entire $\ell_2$-sphere $\mathbb{S}_{\ell_2}^{d-1}$ instead of only on $S_{k,\rho}^2$\footnote{On the other hand, a close inspection of their proof technique shows that a restriction to $S_{k,\rho}^2$ would also work.}. Furthermore, \Cref{thm:AK22:41} has the smaller exponent of $2\alpha -1$ instead of $4 \beta -1$ in \cite[Theorem A]{ML17} in the measurement bound \eqref{eq:measurement:bound:finitemomentresult}, and also requires $O(\log(d))$ well-behaved moments; the small-ball condition of \cite[Theorem A]{ML17} is, on the other hand, weaker than the one of \Cref{thm:AK22:41} as \Cref{def:small-ball} is only required on the set $\{z \in \R^d : \|z\|_0 \leq k\} \cap \mathbb{S}_{\ell_2}^{d-1}$, which is a subset of $S_{k,\rho}^2$.

We note that a small-ball assumption as in \Cref{def:small-ball} is rarely restrictive: It essentially means that not too much probability mass is assigned to $0$ in the distributional marginals defined by the set $S$, and can be derived from weak moment bounds in marginal direction such as $\|\langle \phi, t \rangle \|_{L^{2+\epsilon}} \leq \kappa \|\langle \phi, t \rangle \|_{L^{2}}$ or $\|\langle \phi, t \rangle \|_{L^{2}} \leq \kappa \|\langle \phi, t \rangle \|_{L^{1}}$ for all $t \in S$, or directly from \Cref{def:finite-moment} if the entries of $\phi$ are independent (cf. proof of \cite[Theorem 3.2]{AK22}, in the case of  $\phi$ is isotropic, via the Paley-Zygmund inequality \cite[Lemma 7.16]{FR13}.

\Cref{def:finite-moment} for $\alpha = 1/2$ can be interpreted as requiring the first $K$ moments to grow like the moments of sub-Gaussian random variables, however, moments beyond the $K$-th moment do not even need to exist. Examples for design matrices covered by \Cref{thm:AK22:41} include Student-$t$ random variables of degree $K=\lceil \log( e d / k)\rceil$, which do not have finite moments beyond  degree $K$.

\vspace{0.75mm}
\paragraph{Structured Random Matrices}
In engineering applications of compressed sensing, the usage of fully randomized measurement matrices--Gaussian, sub-Gaussian or heavy-tailed ones--is in most prohibitive due to physical and hardware constraints \cite{Krahmer-2014structured,Adcock-2021Compressive}. Depending on the application, a specifically structured sensing process is required. For example, $A$ could correspond to partial Fourier measurements for magnetic resonance imaging~\cite{LDP07}, to a partial circular convolution for radar imaging~\cite{RRT12, H10} and deblurring~\cite{MRW18}, or binary measurements \cite{Thesing-2021Non}. 

The work by Rudelson and Vershynin~\cite{RV08} states that a random selection of an order of  $O(k\log^3(k)\log(d))$ Fourier measurements are enough to recover signals in $\Sigma_k$. This number is improved in \cite{CDTW18}, but not likely to reach the optimal number $k\log(d/k)$. Another remarkable result by Krahmer and Ward~\cite{KrahmerWard-2014} uses the analysis sparsity model \eqref{equ:analysis} with $G$ being the finite difference operator. They show that $O(k\log^3(k)\log^5(d))$ random partial Fourier measurements are sufficient.

For subsampled convolution, the sensing matrix $A$ can be expressed as
$Az = (z*c)_{\Omega}$, where $c$ is the generator and $\Omega$ indicates the index for subsampling. It is shown in~\cite{MRW18}  that if $c$ is  sub-Gaussian,  then $O(k\log(d/k))$ subsampled random convolution measurements (if $k\lesssim \sqrt{d/\log(d)}$) are enough to satisfy the RNSP, hence ensure the success of $\Delta_{1,A,\eta}$. However, it is worth noting that the signals must be sparse in the canonical basis and it remains an open problem to obtain the optimal number of measurements for signals sparse in any orthonormal basis. The proof technique utilizes the small-ball property and moment estimates of sub-Gaussian vectors. It would  be very interesting to generalize their results for heavier-tailed generators.

\subsection{Sensing Matrices for General Frames $F$}
Compared to the setup studied in \Cref{sec:sensing:identity}, sensing matrix design in the presence of general rectangular frame matrices $F \in \R^{d \times n}$ has been less explored. In this section, we will restrict ourselves to the case that $F$ is full spark, motivated by Remark \ref{rem:equiv}.
Full spark matrices include many common dictionaries (Wavelet, Curvelet, Discrete Cosine Transform, etc.) and will be obtained with probability $1$ if $F$ is randomly drawn from some continuous distribution family. 

Given a full spark dictionary $F$, the goal is to find sensing matrices $A$ that have as few rows as possible while still ensuring exact recovery of signals in $\Sigma_{F,k}$ via the $\ell_1$-synthesis decoder $\Delta_{F, A,\eta}$. 
Due to Remark \ref{rem:equiv}, a premise for such $A$ to exist is $F$ satisfying  NSP-$k$. In other words, if a full spark $F$ does not have NSP-$k$, then the exact recovery of signals in $\Sigma_{F,k}$ is impossible with $\ell_1$-synthesis. This is because Remark \ref{rem:equiv} states that when $F$ is full spark, $AF$ must satisfy the NSP-$k$ for uniform recovery of signals in $\Sigma_{F,k}$ due to \Cref{thm:fnsp}, which in turn means $F$ itself must satisfy the NSP-$k$.

To be able to obtain robust and stable recovery guarantees, we work with a slightly stronger version of the $\ell_2$-robust null space property RNSP-$(2,k,\rho,\tau)$ of \Cref{def:basisp} defined in \Cref{def:robust}, which is convenient to work with for deriving robust recovery results of the $\ell_1$-synthesis decoder $\Delta_{F, A,\eta}$.

\begin{definition}(Robust-NSP* \cite{DLR18,AK22})\label{def:robust} Given $1 \leq q \leq 2$, $F$ is said to fulfill the $\ell_q$-robust null space property* of order $k$ with constant $0<\rho<1$ and $\tau>0$, also called RNSP*-$(q,k,\rho,\tau)$, if 
\begin{equation} \label{eq:robustNSPstar:lower}
\inf_{v\in S_{k,\rho}^q} \|Fv\|_2 \geq \tau^{-1}
\end{equation}
where 
\[
S_{k,\rho}^q :=\left\{v\in \mathbb{R}^n: \exists T\subset [n] \textrm{ with } |T|=k \text{ such that }  \|v_T\|_q \geq \frac{\rho}{k^{1-1/q}}\|v_{T^c}\|_1  \right\} \cap \mathbb{S}_{\ell_q}^{n-1}.
\]
\end{definition} 

By a compactness argument, it follows that there exists a $\tau >0$ such that $F$ satisfies RNSP*-$(2,k,\rho,\tau)$ provided that $F$ satisfies a \emph{stable} null space property, i.e., there exists a $0 < \rho < 1$ such that $\|v_T\|_2 \leq \frac{\rho}{\sqrt k}\|v_{T^c}\|_1$ for all $v \in \textrm{Ker}(F)\backslash\{0\}$ and all $k$-sparse support sets $T \subset [n]$. This condition can in turn be used to derive optimal recovery guarantees for equality-constrained basis pursuit (by considering $F$ as the sensing matrix) from exact measurements and is applicable to both sparse and approximately sparse vectors \cite[Chapter 4]{FR13}. The RNSP*-$(2,k,\rho,\tau)$ with a fixed constant $\tau$ for random frames $F$ can be established with high probability if the rows of $F$ are i.i.d. and drawn from well-behaved distributions and $d \gtrsim  k\log (n/k)$ \cite[Corollary 8]{DLR18}, \cite[Theorem 4.1]{AK22}. 

In fact, it is easy to verify that the RNSP*-$(q,k,\rho,\tau)$ implies the RNSP-$(q,k,\rho,\tau)$ for each set of parameters.
\begin{proposition} \label{prop:robustnsp:implication}
If a matrix $F \in \R^{d \times n}$ fulfills RNSP*-$(q,k,\rho,\tau)$, i.e., the $\ell_q$-robust null space property* of order $k$ with constant $0<\rho<1$ and $\tau>0$ from \Cref{def:robust}, then it also fulfills the RNSP-$(q,k,\rho,\tau)$ of \Cref{def:basisp}.
\end{proposition}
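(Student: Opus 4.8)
The plan is to unwind both definitions and show that the single inequality \eqref{eq:robustNSPstar:lower}, which is stated on the normalized set $S_{k,\rho}^q \subset \mathbb{S}_{\ell_q}^{n-1}$, is exactly what is needed to certify the inequality \eqref{rnsp} defining RNSP-$(q,k,\rho,\tau)$ for \emph{all} $v \in \R^n$ and all support sets $T$ with $|T| \leq k$. The key observation is that the set $S_{k,\rho}^q$ collects precisely the unit-norm vectors that \emph{violate} the defining cone inequality $\|v_T\|_q \leq \frac{\rho}{k^{1-1/q}}\|v_{T^c}\|_1$. Thus the proof should split into two complementary cases according to whether a given $v$ (rescaled to have $\ell_q$-norm one) lands in $S_{k,\rho}^q$ or not.

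First I would fix an arbitrary nonzero $v \in \R^n$ and an arbitrary index set $T$ with $|T| \leq k$, and consider the normalized vector $\bar v := v / \|v\|_q$. In the \textbf{easy case}, where $\|v_T\|_q \leq \frac{\rho}{k^{1-1/q}}\|v_{T^c}\|_1$ already holds, the inequality \eqref{rnsp} follows immediately since its right-hand side is nonnegative: we have $\|v_T\|_q \leq \frac{\rho}{k^{1-1/q}}\|v_{T^c}\|_1 \leq \frac{\rho}{k^{1-1/q}}\|v_{T^c}\|_1 + \tau\|Fv\|_2$. In the \textbf{main case}, where $\|v_T\|_q > \frac{\rho}{k^{1-1/q}}\|v_{T^c}\|_1$, the normalized vector $\bar v$ satisfies $\|\bar v_T\|_q \geq \frac{\rho}{k^{1-1/q}}\|\bar v_{T^c}\|_1$ (by homogeneity of both sides in $v$), so $\bar v \in S_{k,\rho}^q$. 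Applying \eqref{eq:robustNSPstar:lower} to $\bar v$ gives $\|F\bar v\|_2 \geq \tau^{-1}$, and rescaling by $\|v\|_q$ yields $\|Fv\|_2 \geq \tau^{-1}\|v\|_q$.

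It remains to combine this lower bound on $\|Fv\|_2$ with the trivial bound $\|v_T\|_q \leq \|v\|_q$ to produce the target inequality. Indeed, in the main case we obtain
\begin{equation*}
\|v_T\|_q \leq \|v\|_q \leq \tau \|Fv\|_2 \leq \frac{\rho}{k^{1-1/q}}\|v_{T^c}\|_1 + \tau\|Fv\|_2,
\end{equation*}
where the last step again uses nonnegativity of the dropped term. This establishes \eqref{rnsp} in both cases and for every admissible $v$ and $T$. Since the RNSP-$(q,k,\rho,\tau)$ of \Cref{def:basisp} is stated with $F$ playing the role of the sensing matrix $A$, this completes the argument.

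I do not expect any serious obstacle here; the statement is essentially a reformulation. The only point requiring minor care is the scaling bookkeeping in the main case — verifying that the cone-membership condition is genuinely scale-invariant (so that passing to $\bar v$ is legitimate) and that the normalization $\|\bar v\|_q = 1$ is what makes $\bar v$ eligible for the infimum in \eqref{eq:robustNSPstar:lower}. The inequality $\|v_T\|_q \leq \|v\|_q$ is immediate since $T$ selects a subvector. This "scale out, apply the infimum bound, scale back in" maneuver is the entire content of the proof.
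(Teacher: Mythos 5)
Your proposal is correct and is essentially the paper's own argument: the paper splits on whether $\|Fv\|_2 < \|v\|_q/\tau$ and deduces cone non-membership by contraposition, while you split on the cone condition and deduce the lower bound on $\|Fv\|_2$ directly, but the two case analyses are logically identical and conclude via the same chain $\|v_T\|_q \leq \|v\|_q \leq \tau\|Fv\|_2$. No gaps beyond the (harmless, also implicit in the paper) remark that a set $T$ with $|T|<k$ can be enlarged to size exactly $k$ by monotonicity of both sides of the cone inequality.
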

We recall the simple argument from the proof of \cite[Theorem 3]{DLR18}.
\begin{proof}[{Proof of \Cref{prop:robustnsp:implication}}]
	Assume that $F \in \R^{d \times n}$ fulfills RNSP*-$(q,k,\rho,\tau)$. Let us first assume $v \in \R^n$ be such that $\|F v\|_2 < \|v\|_q / \tau$. In this case, we see that 
	\[
	\Big\|F \frac{v}{\|v\|_q}\Big\|_2 < \frac{1}{\tau},
	\]
	which means that $v / \|v\|_q \in \mathbb{S}_{\ell_q}^{n-1}$ cannot be contained in the set $S_{k,\rho}$, which implies that for any subset $T \subset [n]$ with $|T| = k$, it holds that
	\[
	\|v_T\|_q < \frac{\rho}{k^{1-1/q}}\|v_{T^c}\|_1 \leq \frac{\rho}{k^{1-1/q}}\|v_{T^c}\|_1 + \tau \|F v\|_2.
	\]
	On the other hand, if $v \in \R^n$ is such that $\|F v\|_2 \geq \|v\|_q / \tau$, then it holds that
	\[
	\|v_T\|_q \leq \|v\|_q \leq \tau \|F v\|_2 \leq  \frac{\rho}{k^{1-1/q}}\|v_{T^c}\|_1 +\tau \|F v\|_2.
	\]
	Thus, the defining inequality of RNSP-$(q,k,\rho,\tau)$ holds for each $v \in \R^n$.
\end{proof}

For any frame $F$ satisfying the RNSP-$(2,k,\rho,\tau)$, it can be shown that robust and stable recovery guarantees of the $\ell_1$-synthesis decoder $\Delta_{F, A, \eta}$ hold for a minimal number of rows of $A$ if $A$ is a sub-Gaussian sensing matrix \cite{CCL20}.

\begin{theorem}[{see \cite[Corollary 3.6]{CCL20}, \cite{AK22}}] \label{thm:Cor36:CCL20}
Suppose that rows of $A$ are i.i.d., sub-Gaussian vectors with parameters $\sigma$ and that $F$ is of  full spark satisfying RNSP*-$(2,k,\rho,\tau)$ and its columns satisfy $\max\{\|f_i\|^2_2: i\in [n]\} \leq \theta = O(1)$. Then as long as the number of measurements $m$ satisfies
\[
m \gtrsim  k\log (n/k),
\]
the $\ell_1$-synthesis decoder $ \Delta_{F, A,\eta}$ provides stable and robust recovery of both the coefficient vector $x \in \R^n$ and the signal $z \in \R^d$ from $y=Az+w= AFx+w$ with $\|w\|_2\leq \eta$ such that
\begin{equation} \label{eq:coefficient:synthesis:subGaussian}
\|x - \Delta_{1, A F,\eta}(A F x + w)\|_2 \lesssim \frac{\sigma_k(x)}{\sqrt{k}}+ \frac{\tau}{\sigma}\eta
\end{equation}
and
\begin{equation} \label{eq:signal:synthesis:subGaussian}
\|z - \Delta_{F, A,\eta}(A z + w)\|_2 \lesssim \|F\|_2 \left(\frac{\sigma_k(x)}{\sqrt{k}} + \frac{\tau}{\sigma}\eta \right)
\end{equation}
with probability at least $1-e^{-\Omega(m)}$.
\end{theorem}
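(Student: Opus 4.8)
The plan is to reduce both bounds to a robust null space property for the \emph{effective} sensing matrix $AF \in \R^{m\times n}$ and then invoke \Cref{theorem:RNSP:guarantee}. The first reduction is purely structural. Since $z = Fx$ and, by \eqref{equ:l1s}, $\Delta_{F,A,\eta}(y) = F\bigl(\Delta_{1,AF,\eta}(y)\bigr)$, setting $\hat{x} := \Delta_{1,AF,\eta}(AFx+w)$ gives $z - \Delta_{F,A,\eta}(Az+w) = F(x-\hat{x})$, whence $\|z-\Delta_{F,A,\eta}(Az+w)\|_2 \leq \|F\|_2\|x-\hat{x}\|_2$. Thus the signal estimate \eqref{eq:signal:synthesis:subGaussian} follows immediately from the coefficient estimate \eqref{eq:coefficient:synthesis:subGaussian}, and it suffices to prove the latter.

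For the coefficient estimate I would show that, with probability at least $1-e^{-\Omega(m)}$, the matrix $AF$ satisfies RNSP*-$(2,k,\rho,\tau')$ with $\tau' \lesssim \tau/\sigma$. \Cref{prop:robustnsp:implication} then promotes this to RNSP-$(2,k,\rho,\tau')$ for $AF$ (note that $\rho<1$ is inherited from the hypothesis on $F$), and \Cref{theorem:RNSP:guarantee} applied to the sensing matrix $AF$ with $p=q=2$ yields \eqref{eq:coefficient:synthesis:subGaussian}: its stability term is $\sigma_k(x)/\sqrt k$ and its robustness term is $\tau'\eta \sim (\tau/\sigma)\eta$. By \Cref{def:robust}, verifying RNSP*-$(2,k,\rho,\tau')$ for $AF$ is exactly the uniform lower bound
\[
\inf_{v\in S_{k,\rho}^2} \|AFv\|_2 \;\geq\; (\tau')^{-1},
\]
a one-sided isometry for $A$ on the image $F\bigl(S_{k,\rho}^2\bigr)\subset \R^d$ of the cone $S_{k,\rho}^2 \subset \R^n$.

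The heart of the proof is this lower bound, which I would establish by Mendelson's small-ball method. With the rows of $A$ normalized as $a_i/\sqrt m$ (matching the convention of \Cref{thm:AK22:41}), the sub-Gaussian---hence small-ball---behavior of the marginals $\langle a_i, Fv\rangle$ produces, with the claimed failure probability, an estimate of the schematic form
\[
\inf_{v\in S_{k,\rho}^2} \|AFv\|_2 \;\gtrsim\; \sigma \inf_{v\in S_{k,\rho}^2}\|Fv\|_2 \;-\; \frac{C\,\mathcal{W}}{\sqrt m},
\qquad
\mathcal{W} := \Eb \sup_{v\in S_{k,\rho}^2} \langle g, Fv\rangle,
\]
where $g \in \R^d$ is a standard Gaussian vector. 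The hypothesis that $F$ satisfies RNSP*-$(2,k,\rho,\tau)$ bounds the first term from below via $\inf_{v\in S_{k,\rho}^2}\|Fv\|_2 \geq \tau^{-1}$, so everything comes down to dominating the Gaussian-width term $\mathcal{W}$. This is where the bounded-column hypothesis $\max_i\|f_i\|_2^2\leq\theta=O(1)$ is used: points of $S_{k,\rho}^2$ are $\ell_1$-compressible (essentially $k$-sparse), so $\mathcal{W} = \Eb\sup_{v}\langle F^{\top}g, v\rangle$ is governed by the largest $k$ coordinates of $F^{\top}g$, whose entries $\langle f_i, g\rangle$ are centered Gaussians with variance $\|f_i\|_2^2\leq\theta$; a maximal inequality together with a covering of the compressible cone then yields $\mathcal{W}\lesssim \sqrt{\theta}\,\sqrt{k\log(n/k)}$. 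Hence $\mathcal{W}/\sqrt m \lesssim \sqrt{k\log(n/k)/m}$ is at most a small multiple of $\sigma/\tau$ exactly when $m\gtrsim k\log(n/k)$, so the lower bound survives with $\tau'\lesssim \tau/\sigma$.

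I expect the main obstacle to be precisely this width estimate and its coupling with the small-ball lower bound: one must confirm that $S_{k,\rho}^2$ has the $\ell_1$-compressibility responsible for the $\sqrt{k\log(n/k)}$ scaling (so that the ambient dimension $d$ does not enter), carry the column bound $\theta$ correctly through the chaining/maximal-inequality step, and keep track of the $1/\sqrt m$ row normalization so that the final robustness constant is $\tau/\sigma$ rather than an unwanted factor of $\sqrt m$. By contrast, the two reductions in the first two paragraphs are routine given the already-established \Cref{prop:robustnsp:implication} and \Cref{theorem:RNSP:guarantee}.
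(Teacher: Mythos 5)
Your proposal is correct and follows essentially the same route as the paper's proof sketch in \Cref{sec:proof:sketch}: reduce the signal bound to the coefficient bound via $\|F(x-\hat x)\|_2\leq\|F\|_2\|x-\hat x\|_2$, establish RNSP*-$(2,k,\rho,\tau')$ for $AF$ by lower bounding $\inf_{v\in S_{k,\rho}^2}\|AFv\|_2$ with Mendelson's small-ball method (\Cref{small_ball_method}), control the small-ball probability using the hypothesis $\inf_{v\in S_{k,\rho}^2}\|Fv\|_2\geq\tau^{-1}$, bound the width term by $\sqrt{k\log(n/k)}$ via the compressibility of $S_{k,\rho}^2$ (the paper uses the inclusion $S_{k,\rho}^2\subset(2+\rho^{-1})\conv(Z_k)$ and the order statistics of $F^{\top}V$), and conclude with \Cref{prop:robustnsp:implication} and \Cref{theorem:RNSP:guarantee}. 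The only cosmetic difference is that the paper invokes Paley--Zygmund for the small-ball constant and the majorizing measure theorem for the Gaussian width where you propose a covering/maximal-inequality argument, which amounts to the same estimate.
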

\Cref{thm:Cor36:CCL20} differs from \cite[Corollary 3.6]{CCL20} in that the latter uses a RNSP*-$(1,k,\rho,\tau)$ assumption on $F$ (under the name \emph{stable NSP}) instead of RNSP*-$(2,k,\rho,\tau)$. Following \cite{AK22}, we use here the RNSP*-$(2,k,\rho,\tau)$, together with \Cref{theorem:RNSP:guarantee} and \Cref{prop:robustnsp:implication}, to obtain the optimal order dependence of the summand measuring the sparsity order misfit through $\sigma_k(x)$, for which the RNSP*-$(1,k,\rho,\tau)$ is not sufficient.
 
Akin to the case of a signal model $\Sigma_{I_d,k}$ of $k$-sparse vectors in the standard basis, it has turned out that heavy-tailed sensing matrices $A$ beyond the sub-Gaussian can also lead to similar guarantees.  In \cite{AK22}, the conclusions of \Cref{thm:Cor36:CCL20} have been generalized to include sensing matrices with independent rows whose distribution is only required to have $O(\log(e n / s))$ well-behaved moments. In particular, we use the following definition, which is a generalization of \Cref{def:finite-moment}.

\begin{definition}[{Distributions with $K$ well-behaved spherical moments}] \label{def:finite-moment:spherical} We say the distribution of a random vector $\phi \in \mathbb{R}^d$ has \emph{$K$ well-behaved spherical moments} with constants $\alpha \geq 1/2$ and $\lambda > 0$ in direction of a $\ell_2$-unit norm vector $a\in \mathbb{S}^{d-1}$ if for all $2\leq p\leq K$,
\begin{equation} \label{eq:finite-moment:bound:spherical}
\|\langle \phi,a \rangle\|_{L^p} \leq \lambda p^{\alpha}.
\end{equation}
\end{definition} 
For measurement matrices whose rows are drawn from distributions satisfying Definition \ref{def:finite-moment:spherical}, the resulting recovery guarantee applicable at the minimal number of rows is stated in the following theorem.

\begin{theorem}[{\cite[Variant of Theorem 3.1]{AK22}}] \label{thm:frame:heavytailed}
 Let $a_i/\sqrt{m}$, $i\in [m]$ be the rows of the measurement matrix $A$ and the full spark frame $F \in \R^{d \times n}$ be given. Suppose the $a_i$ are independent realizations from some centered distributions satisfying \Cref{def:small-ball} for $\mathbb{S}_{\ell_2}^{n-1}$ with constants $u, c > 0$, and that \Cref{def:finite-moment:spherical} holds in the direction of all frame columns $f_j/\|f_j\|_2$, $j=1,\ldots n$, up to $K=\Omega\left(\log\left(\frac{en}{k}\right) \right)$ moments with moment growth parameter  $\alpha\geq 1/2$ and constant $\lambda >0$. Suppose $F$ satisfies the RNSP*-$(2,k,\rho,\tau)$ and its columns satisfy $\max\{\|f_i\|^2_i: i\in [n]\} \leq \theta$. Then as long as the number of measurements $m$ satisfies
\[
m \gtrsim \max\left\{k\log(en/k),\log^{\max \{2\alpha-1, 1\}}(en/k)\right\},
\]
we have with probability at least $1-e^{-\Omega(m)}$, the $\ell_1$-synthesis  decoder $ \Delta_{F, A,\eta}$ provides stable and robust recovery of both the coefficient vector and the signal from $y=Az+w= AFx+w$ with $\|w\|_2\leq \eta$ such that
\[
\|x- \Delta_{1, A F,\eta}(A F x + w)\|_2 \lesssim \frac{\sigma_s(x)}{\sqrt k} + \tau\eta
\]
and
\[
\|z - \Delta_{F, A,\eta}(A z + w)\|_2 \lesssim \|F\|_2 \left(\frac{\sigma_s(x)}{\sqrt k} + \tau\eta \right).
\] 
\end{theorem}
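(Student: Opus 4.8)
The plan is to deduce the whole statement from a robust null space property for the \emph{effective coefficient sensing matrix} $M := \frac{1}{\sqrt m}AF \in \R^{m\times n}$, whose rows are $\frac{1}{\sqrt m}F^T a_i$, and then to invoke \Cref{theorem:RNSP:guarantee}. First I would reduce the signal bound to the coefficient bound: by the synthesis identity \eqref{equ:l1s} the recovered signal is $\Delta_{F,A,\eta}(y) = F\,\widehat x$ with $\widehat x = \Delta_{1,AF,\eta}(y)$, while the ground truth is $z = Fx$, so $z - \Delta_{F,A,\eta}(Az+w) = F(x-\widehat x)$ and hence $\|z - \Delta_{F,A,\eta}(Az+w)\|_2 \le \|F\|_2\,\|x-\widehat x\|_2$. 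The signal estimate therefore follows from the coefficient estimate, and it remains to control $\|x-\widehat x\|_2$. For that I would apply \Cref{theorem:RNSP:guarantee} with $p=q=2$ to $M$, once it is shown that $M$ satisfies RNSP-$(2,k,\rho',\tau')$ with some $\rho'<1$ on an event of probability at least $1-e^{-\Omega(m)}$; here the $1/\sqrt m$ factor is already carried by the rows $a_i/\sqrt m$, so that $\|Mv\|_2^2 = \frac1m\sum_{i=1}^m \langle a_i, Fv\rangle^2$, and the row normalisation converts the $\ell_2$-noise term accordingly.

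By the argument underlying \Cref{prop:robustnsp:implication}, establishing RNSP-$(2,k,\rho,\tau')$ for $M$ reduces to the uniform lower bound $\inf_{v\in S_{k,\rho}^2}\|Mv\|_2 \ge (\tau')^{-1}$ over the cone $S_{k,\rho}^2\subset\R^n$ of \Cref{def:robust}. I would prove this by Mendelson's small-ball method. Fixing the threshold $u$, the elementary inequality $\|Mv\|_2^2 \ge \frac{u^2}{m}\sum_{i=1}^m \mathbf{1}\{|\langle a_i, Fv\rangle|\ge u\}$ reduces matters to showing that this empirical indicator average stays close to its mean $\Pr(|\langle a, Fv\rangle|\ge u)$, uniformly over the cone. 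The two distributional hypotheses play complementary roles at this point: the small-ball assumption (\Cref{def:small-ball}) yields a non-trivial marginal lower bound at a threshold proportional to $u\|Fv\|_2$, while the RNSP*-$(2,k,\rho,\tau)$ of $F$ supplies the indispensable non-degeneracy $\|Fv\|_2\ge \tau^{-1}$ for every $v\in S_{k,\rho}^2$. Without the latter the marginals $\langle a, Fv\rangle$ could concentrate at $0$ and the method would collapse, which is precisely why $F$ is assumed full spark and robustly non-singular on the relevant cone. Together they produce a strictly positive floor of order $u^2 c$ for the mean.

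The main work, and the step I expect to be the principal obstacle, is bounding the deviation of this empirical process uniformly over $S_{k,\rho}^2$, that is, controlling the Rademacher mean width
\[
W_m := \Eb\,\sup_{v\in S_{k,\rho}^2}\Big\langle F^T\sum_{i=1}^m \varepsilon_i a_i,\; v\Big\rangle .
\]
Two features make this delicate. First, the geometry: $S_{k,\rho}^2$ is, up to constants, the cone of effectively $k$-sparse vectors, whose Gaussian width scales like $\sqrt{k\log(en/k)}$, and this is what ultimately forces $m\gtrsim k\log(en/k)$. Second, and crucially, the multipliers $\langle a, f_j\rangle$ are only heavy-tailed: I would run a chaining/covering argument in which the admissible chaining depth for a set of complexity $k\log(en/k)$ probes the marginal moments of $\langle a, f_j/\|f_j\|_2\rangle$ only up to order $\sim\log(en/k)$, exactly the range controlled by \Cref{def:finite-moment:spherical} with $K=\Omega(\log(en/k))$. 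The normalisation constant $\theta$ bounding $\|f_i\|_2^2$ enters when rescaling these moment bounds to the unnormalised columns, and the exponent $\max\{2\alpha-1,1\}$ together with the second term $\log^{\max\{2\alpha-1,1\}}(en/k)$ in the measurement bound emerge from this moment accounting, exactly as in the $F=I_d$ analogue \Cref{thm:AK22:41}.

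Finally, I would balance $W_m/\sqrt m$ against the positive floor from the second paragraph: for $m$ in the assumed range the deviation is driven strictly below the floor on an event of probability at least $1-e^{-\Omega(m)}$, which delivers $\inf_{v\in S_{k,\rho}^2}\|Mv\|_2\ge(\tau')^{-1}$ and hence RNSP-$(2,k,\rho',\tau')$ with $\rho'<1$. Feeding this into \Cref{theorem:RNSP:guarantee} gives the coefficient estimate, and the reduction of the first paragraph then upgrades it to the signal estimate with the factor $\|F\|_2$, completing both bounds. The heart of the argument is thus the heavy-tailed multiplier chaining in the third paragraph; everything else is either a deterministic reduction or a direct appeal to the already-established robust null space machinery.
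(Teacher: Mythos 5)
Your overall architecture matches the paper's: reduce the signal bound to the coefficient bound via $\|F(x-\widehat x)\|_2 \le \|F\|_2\,\|x-\widehat x\|_2$; establish a robust null space property for $AF$ through a uniform lower bound on $\inf_{v\in S_{k,\rho}^2}\|AFv\|_2$ via Mendelson's small-ball method (\Cref{small_ball_method} applied with $S = FS_{k,\rho}^2$); use the RNSP*-$(2,k,\rho,\tau)$ of $F$ to keep the small-ball probability $Q_{2u}(FS_{k,\rho}^2;\phi)$ bounded below (your observation that without $\|Fv\|_2\ge\tau^{-1}$ the marginals could degenerate is exactly the role this hypothesis plays); and finish with \Cref{prop:robustnsp:implication} and \Cref{theorem:RNSP:guarantee}. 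All of that is what the paper's proof sketch in \Cref{sec:proof:sketch} does.

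The one place you diverge --- and it is the crux, as you yourself flag --- is the bound on the mean empirical width $W_m(FS_{k,\rho}^2,\phi)$. You propose a chaining/covering argument whose depth is tuned to the available moments. The paper instead avoids chaining altogether: it invokes the inclusion $S_{k,\rho}^2 \subset (2+\rho^{-1})\conv(Z_k)$ with $Z_k$ the set of unit-norm $k$-sparse vectors (from \cite[Lemma 2]{DLR18}), so that the supremum over the cone collapses to $\sup_{x\in Z_k}\langle x, F^{\top}V\rangle = \bigl(\sum_{i=1}^k ((F^{\top}V)_i^*)^2\bigr)^{1/2}$, an order-statistics norm of the $n$ scalar variables $\langle f_j, V\rangle$. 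That quantity is then controlled by the dedicated order-statistics lemma \cite[Lemma 5.1]{AK22}, which consumes precisely the moments $p\le 2\log(n/k)$ of each marginal --- this is exactly where \Cref{def:finite-moment:spherical} with $K=\Omega(\log(en/k))$ and the column-norm bound $\theta$ enter. This reduction is what makes the theorem work at such a small moment budget; a generic chaining argument for a multiplier process whose multipliers have only $O(\log(en/k))$ finite moments is not known to deliver the optimal width bound, and that obstruction is the very reason the small-ball/order-statistics machinery of Mendelson and Lecu\'e was developed. So while your plan correctly identifies where the difficulty sits and which hypotheses must be spent there, the mechanism you sketch for that step is not the one that closes the argument; to complete the proof you would need to replace the chaining step by the convex-hull/order-statistics reduction (or an equivalent device).
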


The first three properties in \Cref{def:Fp} are set up to recover the signal only, without necessarily recovering the sparse coefficient. Supported by numerical experiments and a geometric non-uniform phase transition analysis, it has been argued in \cite{MBKW23} that a recovery of $z \in \Sigma_{F,k}$ is possible robustly in setups where the sparse coefficient vector cannot be identified, which in some sense extends the scope of applicability of $\ell_1$-synthesis techniques.

The results presented in this section, \Cref{thm:Cor36:CCL20} and \Cref{thm:frame:heavytailed}, although achieving the optimal number of measurements in a sense, both require the null space property on $F$, and thus recover the signal and coefficient simultaneously. This is essentially due to the challenge of analyzing the null space type properties in Definition \ref{def:Fp} directly, and the existing work has resorted to applying properties in Definition \ref{def:basisp} to $AF$. To show certain class of random sensing matrices satisfy \ref{frnsp} or a variation of it is still an open problem.

Following a different viewpoint, supported by numerical experiments and a geometric non-uniform phase transition analysis, it has been argued in \cite{MBKW23} that a robust recovery of $z \in \Sigma_{F,k}$ is possible  in setups where the sparse coefficient vector cannot be identified, which in some sense extends the scope of applicability of $\ell_1$-synthesis techniques. A current limitation of the analysis in \cite{MBKW23} is that compared to the above theorems it only applies to sub-Gaussian sensing matrices $A$. It remains an open problem to extend such a signal-centric non-uniform analysis to a broader class of random and structured sensing matrices.

\subsection{A Proof Sketch For $\ell_1$-Synthesis Recovery Guarantees} \label{sec:proof:sketch}
We now briefly sketch some of the key steps that can be taken to prove recovery guarantees for $\ell_1$-synthesis decoders of the type $\Delta_{F, A,\eta}$ as presented in \Cref{thm:Cor36:CCL20}, \Cref{thm:frame:heavytailed} and \Cref{thm:AK22:41}.

The starting point is based on the idea that once we consider \Cref{def:robust} of the RNSP*-$(q,k,\rho,\tau)$ of the measurement-frame matrix product $A F$, we can use a technique for lower bounding non-negative empirical processes developed in the series of papers \cite{Mendelson-2014Remark,Koltchinskii-2015Bounding,MendelsonLearning15} to obtain non-trivial lower bound of $\inf_{v \in S_{k,\rho}^2} \|AFv\|_2$. This gives rise to \eqref{eq:robustNSPstar:lower} with a bounding constant that holds with high probability over the draws of $A$, which would correspond to an RNSP*-$(2,k,\rho,\tau)$ for $A F$ and enables robust recovery guarantees for $\ell_1$-synthesis through \Cref{prop:robustnsp:implication}.
\begin{proposition}[{\cite[Theorem 1.5]{Koltchinskii-2015Bounding}, \cite[Proposition 5.1]{T15}}]
\label{small_ball_method}
Fix a set $S \subset \mathbb{R}^d$. Let $\phi \in \mathbb{R}^d$ be a random vector and let $\Phi \in \mathbb{R}^{m\times d}$ be a random matrix whose rows are i.i.d copies of $\phi$. Then, for any $t>0$ and $u>0$,
\begin{equation} \label{eq:prop:sb}
    \inf_{x\in S} \|\Phi x\|_2 \ge u \sqrt{m} Q_{2u}(S;\phi) -2 W_m(S;\phi) - u t,
\end{equation}
with probability at least $1-e^{-t^2/2}$, where $
W_m(S,\phi) := \mathbb{E}\sup_{x\in S} \Big\langle x, \frac{1}{\sqrt{m}} \sum_{i=1}^m \varepsilon_i \phi_i \Big\rangle
$
is the \emph{mean empirical width} of $S$ with respect to $\phi$ if $\varepsilon_1,\ldots,\varepsilon_m$ are independent Rademacher random variables (attaining values $\pm 1$ with probability $1/2$), and $Q_{u}(S;\phi):= \inf_{x \in S} \Pr(|\langle \phi ,x \rangle| \ge u)$ a small-ball probability bound of $\phi$ with respect to the set $S$ (cf. also \Cref{def:small-ball}).
\end{proposition}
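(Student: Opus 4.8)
The plan is to run the classical small-ball argument of Mendelson and Koltchinskii in three movements: a deterministic pointwise bound that trades the Euclidean norm for a threshold count, a reduction of the resulting empirical process to the mean width $W_m$ via symmetrization and contraction, and a bounded-differences concentration step. First I would fix the threshold $u$ and record the deterministic inequality, valid for every $x\in S$,
\[
\|\Phi x\|_2 \;\ge\; u\sqrt{N(x)} \;\ge\; \frac{u}{\sqrt m}\,N(x),\qquad N(x):=\sum_{i=1}^m \mathbf{1}\{|\langle \phi_i,x\rangle|\ge u\},
\]
where the first inequality keeps only the coordinates exceeding $u$ and the second uses $N(x)\le m$. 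Since the hard count $N(x)$ is not amenable to contraction, I would then linearize it by a ramp (soft indicator) $\psi_u$ that is $(1/u)$-Lipschitz, vanishes at $0$, and obeys $\mathbf{1}\{|s|\ge 2u\}\le \psi_u(s)\le \mathbf{1}\{|s|\ge u\}$. This yields $\|\Phi x\|_2\ge \frac{u}{\sqrt m}\sum_i \psi_u(\langle \phi_i,x\rangle)$ pointwise, while the lower envelope gives $\mathbb{E}\,\psi_u(\langle\phi,x\rangle)\ge \Pr(|\langle\phi,x\rangle|\ge 2u)\ge Q_{2u}(S;\phi)$, which supplies the main term.

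Then I would write, uniformly in $x$, that $\sum_i\psi_u(\langle\phi_i,x\rangle)\ge m\,Q_{2u}(S;\phi) - Z$ with the centered supremum $Z:=\sup_{x\in S}\sum_i\big(\mathbb{E}\,\psi_u(\langle\phi,x\rangle)-\psi_u(\langle\phi_i,x\rangle)\big)$, and control $\mathbb{E} Z$ by symmetrization, $\mathbb{E} Z\le 2\,\mathbb{E}\sup_{x\in S}\sum_i\varepsilon_i\psi_u(\langle\phi_i,x\rangle)$, followed by the Rademacher contraction principle. Because $\psi_u$ is $(1/u)$-Lipschitz with $\psi_u(0)=0$, contraction strips off $\psi_u$ at the cost of the factor $1/u$, leaving $\mathbb{E} Z\le \frac{2\sqrt m}{u}\,W_m(S;\phi)$. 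Multiplying through by $u/\sqrt m$ reproduces exactly the $-2W_m$ term in expectation, that is, $\mathbb{E}\, G\ge u\sqrt m\,Q_{2u}-2W_m$ for $G:=\frac{u}{\sqrt m}\inf_{x\in S}\sum_i\psi_u(\langle\phi_i,x\rangle)$.

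Finally I would upgrade to high probability by applying McDiarmid's bounded-differences inequality to $G$. Altering a single $\phi_j$ changes each summand by at most $1$ (as $\psi_u\in[0,1]$) and hence changes the infimum, and therefore $G$, by at most $u/\sqrt m$; with $m$ coordinates this gives $G\ge \mathbb{E}\,G - ut$ with probability at least $1-e^{-t^2/2}$. Combining with $\inf_{x\in S}\|\Phi x\|_2\ge G$ and the expectation bound above yields \eqref{eq:prop:sb}. The main obstacle is the middle movement: the quantity that naturally lower-bounds $\|\Phi x\|_2$ involves hard indicators, to which neither contraction nor clean concentration applies, and the entire argument hinges on interposing the Lipschitz surrogate $\psi_u$, calibrated so that its lower envelope recovers the $2u$-small-ball probability while its Lipschitz constant $1/u$ precisely cancels the $u/\sqrt m$ prefactor to leave the clean $2W_m$ term.
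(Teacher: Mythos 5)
Your proposal is correct and follows essentially the same route the paper sketches (and attributes to \cite[Section 5.5]{T15}): lower-bound $\|\Phi x\|_2$ by $\tfrac{u}{\sqrt m}\sum_i \mathbf{1}\{|\langle\phi_i,x\rangle|\ge u\}$, interpose the $(1/u)$-Lipschitz surrogate squeezed between the $u$- and $2u$-indicators, control the centered supremum by symmetrization and contraction to get the $2W_m$ term, and finish with bounded differences. The only cosmetic difference is your opening step ($\ell_2$-truncation plus $\sqrt{N}\ge N/\sqrt m$ instead of the paper's $\|\Phi x\|_2\ge m^{-1/2}\|\Phi x\|_1$), which lands on the identical intermediate bound.
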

\Cref{small_ball_method} can be shown by estimating, for fixed $x\in S$, that $\|\Phi x\|_2 \geq  \frac{1}{\sqrt{m}} \|\Phi x\|_1 =\frac{1}{\sqrt{m}} \sum_{i=1}^m |\langle \phi_i,x\rangle | \geq \frac{u}{\sqrt{m}}\sum_{i=1}^m \mathds{1}_{\{|\langle \phi_i,x\rangle | \geq u\}}$ where $\phi_1,\ldots,\phi_m$ are independent copies of $\phi$ and $\mathds{1}_{B}$ a $1$-$0$ random variable indicating whether the $B$ takes places or not, comparing the resulting bound with $u \sqrt{m} Q_{2u}(S;\phi) $ and using the bounded differences inequality \cite[Section 6.1]{Boucheron-book2013}; a transparent proof can be found in \cite[Section 5.5]{T15}.

With \Cref{small_ball_method} applied to $\Phi= A$ and $S= F S_{k,\rho}^2$, it remains to lower bound $Q_{2u}(F S_{k,\rho}^2;\phi)$ for appropriately chosen constant $u$ and upper bound the empirical mean width $W_m(F S_{k,\rho}^2,\phi)$.

For the former, we can calculate that
\begin{equation*}
\begin{split}
&Q_{2u}(F S_{k,\rho}^2;\phi) = \inf_{x \in F S_{k,\rho}^2} \Pr(|\langle \phi ,x \rangle| \ge u) = \inf_{v\in S_{k,\rho}^2} \Pr(|\langle \phi ,F v \rangle| \ge u) \\
&= \inf_{v\in S_{k,\rho}^2} \Pr\Big(\Big|\Big\langle \phi ,\frac{F v}{\|F v\|_2} \Big\rangle\Big| \ge \frac{u}{\|F v\|_2}  \Big) \geq \inf_{v\in S_{k,\rho}^2} \Pr\Big(\Big|\Big\langle \phi ,\frac{F v}{\|F v\|_2} \Big\rangle\Big| \ge \tau u  \Big)
\end{split}
\end{equation*}
using the assumption that $F$ satisfies RNSP*-$(2,k,\rho,\tau)$ (the last inequality is not needed for the case $F = I_d$). The latter term can be further lower bounded by the constant $c$ from \Cref{def:small-ball} since  $\frac{F v}{\|F v\|_2} \in \mathbb{S}_{\ell_2}^{n-1}$ in the case of \Cref{thm:frame:heavytailed}, and lower bounded by an absolute constant through the Paley-Zygmund inequality \cite[Lemma 7.16]{FR13} in the sub-Gaussian case of \Cref{thm:Cor36:CCL20}.

For the empirical mean width of the set $F S_{k,\rho}^2$, we can use the result \cite[Lemma 2]{DLR18} that implies the geometric inclusion $S_{k,\rho}^2 \subset (2 + \rho^{-1}) \conv\big(Z_k\big)$ with $Z_k :=\{z \in \R^n : \|z\|_0 \leq k, \|z\|_2 =1\}$, where $\conv(\cdot)$ denote the convex hull of the respective set. With the notation $V:= m^{-1/2}\sum_{i=1}^m \varepsilon_i\phi_i$, this leads to estimate that
\begin{equation*}
\begin{split}
	&W_m(F S_{k,\rho}^2,\phi) = \mathbb{E}\sup_{x\in F S_{k,\rho}^2} \big\langle x, V \big\rangle =  \mathbb{E}\sup_{x\in S_{k,\rho}^2} \big\langle x, F^{\top}V \big\rangle \\
	&\leq (2 + \rho^{-1}) \mathbb{E}\sup_{x\in \conv(Z_k)} \big\langle x, F^{\top}V \big\rangle \\
	&=  (2 + \rho^{-1}) \mathbb{E}\sup_{x \in Z_k} \big\langle x, F^{\top}V \big\rangle = (2 + \rho^{-1})  \left(\sum_{i=1}^k  ((F^{\top}V)_i^*)^2 \right)^{1/2},
\end{split}
\end{equation*}
using the fact that the supremum of the linear form over a set and over its convex hull coincide; for a vector $z=(z_i,\ldots,z_n)$, we denoted the $i$-th largest coordinate in absolute value as $z_i^*$. In the case of sub-Gaussian vectors, it is possible to invoke the majorizing measure theorem \cite[Theorem 2.7.2 and Theorem 2.10.1]{Talagrand-2021Upper} to bound $\mathbb{E}\sup_{x \in Z_k} \big\langle x, F^{\top}V \big\rangle$ by a constant times the Gaussian width of the set $F Z_k$, as used in \cite{CCL20}, which can be bounded explicitly by properties of the Gaussian distribution.

For heavy-tailed distributions, however, one can use the weak moment assumption on inner products of the $\phi_i$ with the columns $f_j$ for the frame matrix $F$ (\Cref{def:finite-moment:spherical}) in conjunction with the following elementary lemma applied to $z=(\langle f_1 ,V\rangle,\ldots,\langle f_n ,V\rangle)$, which was shown in \cite{AK22} and which relaxes the assumptions of \cite[Lemma 6.5]{MendelsonLearning15}.
\begin{lemma}[{Bound on Order Statistics Norm, \cite[Lemma 5.1]{AK22}}] 
There exists an absolute constant $C > 0$ for which the following holds. Let $k \in \N$. Assume that $z_1,\ldots,z_n$ are centered random variables with  variance $1$ that fulfill for each $i \in [n]$ that for every $p \leq 2 \log(n/k)$, $\|z_i\|_{L_p} \leq \lambda \sqrt{p}$.
Then
\begin{equation*}
\mathbb{E}\left[\sum_{i=1}^k \left(z_i^{*}\right)^2\right]^{1/2}\leq C \lambda \sqrt{ k\log\left(\frac{n}{k}\right)},
\end{equation*}
where $z_i^{*}$ denotes the $i$-th coordinate of the non-increasing rearrangement of the vector $z = (|z_1|,\ldots,|z_n|)$.
\end{lemma}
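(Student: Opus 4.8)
The plan is to strip off the nonlinearity by Jensen's inequality and then reduce the resulting first moment to a single, carefully calibrated truncation. Since $t\mapsto \sqrt{t}$ is concave, Jensen gives
\[
\mathbb{E}\Big[\sum_{i=1}^k (z_i^{*})^2\Big]^{1/2}\leq \Big(\mathbb{E}\sum_{i=1}^k (z_i^{*})^2\Big)^{1/2},
\]
so it suffices to prove $\mathbb{E}\sum_{i=1}^k (z_i^{*})^2\lesssim \lambda^2 k\log(n/k)$. The advantage of this reduction is that the surviving quantity is linear in the coordinatewise squares, and hence can be controlled using \emph{only the marginal moment hypotheses}; crucially, no independence among the $z_i$ is assumed or needed anywhere below.

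The second step is the deterministic order-statistics inequality, valid for every threshold $u>0$,
\[
\sum_{i=1}^k (z_i^{*})^2\leq ku+\sum_{i=1}^n (z_i^2-u)_+,
\]
which holds pointwise because each top square obeys $(z_i^{*})^2 = u+((z_i^{*})^2-u)\leq u+((z_i^{*})^2-u)_+$, and the sum of the positive parts over the top $k$ coordinates is dominated by the sum over all $n$ coordinates. Taking expectations and using linearity gives $\mathbb{E}\sum_{i=1}^k (z_i^{*})^2\leq ku+\sum_{i=1}^n \mathbb{E}(z_i^2-u)_+$.

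The third step controls each truncated second moment via a layer-cake identity and Markov's inequality applied at the \emph{top} admissible moment order $P:=2\log(n/k)$. Using $\Pr(|z_i|>t)\leq \mathbb{E}|z_i|^P/t^P\leq (\lambda\sqrt{P}/t)^P$, I would estimate
\[
\mathbb{E}(z_i^2-u)_+=\int_u^\infty \Pr(|z_i|>\sqrt{s})\,ds\leq \int_u^\infty\Big(\frac{\lambda^2 P}{s}\Big)^{P/2}ds.
\]
The decisive choice is the truncation level $u=e\lambda^2 P$, which ties the threshold to the number of good moments: the integral then evaluates to $\frac{e\lambda^2 P}{P/2-1}\,e^{-P/2}=\frac{e\lambda^2 P}{P/2-1}\cdot\frac{k}{n}$, since $e^{-P/2}=(n/k)^{-1}$. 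Summing the $n$ contributions produces a factor $n\cdot(k/n)=k$, so $\sum_{i=1}^n \mathbb{E}(z_i^2-u)_+=O(\lambda^2 k)$, whereas the leading term is $ku=e\lambda^2 kP=2e\lambda^2 k\log(n/k)$. Adding the two bounds, using $\log(n/k)\geq 1$, and taking the square root delivers the claimed estimate.

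I expect the main obstacle to be exactly this calibration of the truncation level against the capped moment budget. Because moments beyond order $P=2\log(n/k)$ are not assumed to exist, one cannot send $p\to\infty$ to manufacture a genuine (unbounded) sub-Gaussian tail; the tail bound is only polynomial of degree $P$ past the threshold. The whole argument hinges on choosing $u\asymp \lambda^2\log(n/k)$ so that simultaneously (i) the leading term $ku$ already matches the target order $\lambda^2 k\log(n/k)$, and (ii) the residual tail integral carries exactly the decay $e^{-P/2}=k/n$ that offsets the summation over all $n$ coordinates. A secondary technicality is the regime where $n/k$ is bounded, so that $P$ is near $2$ and the factor $(P/2-1)^{-1}$ degrades; there I would instead invoke the trivial bound $\sum_{i=1}^k (z_i^{*})^2\leq \sum_{i=1}^n z_i^2$ together with $\mathbb{E}\sum_{i=1}^n z_i^2 = n\lesssim k\log(n/k)$, so that the two regimes combine into a single absolute constant $C$.
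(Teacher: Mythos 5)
Your argument is correct, and it is worth noting that the chapter itself contains no proof of this lemma: it is quoted verbatim from \cite[Lemma 5.1]{AK22}, whose proof (in the tradition of \cite[Lemma 6.5]{MendelsonLearning15}, which it relaxes) controls each order statistic separately. That route bounds $\Pr(z_j^{*} \geq t) \leq \frac{1}{j}\sum_{i=1}^n \Pr(|z_i| \geq t)$ by a counting argument (Markov applied to the number of exceedances, valid without independence), invokes the moment hypothesis at an order adapted to $j$, and then integrates the tail and sums over $j \leq k$. Your route is genuinely different and more economical: Jensen, the pointwise truncation $\sum_{i\leq k}(z_i^{*})^2 \leq ku + \sum_{i=1}^n (z_i^2-u)_+$, and a single application of Markov at the top admissible order $P = 2\log(n/k)$ with the calibrated threshold $u = e\lambda^2 P$, so that $e^{-P/2} = k/n$ exactly offsets the sum over all $n$ coordinates. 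Both arguments use only the marginal moment bounds and no independence, which is the point of the lemma; the per-order-statistic route yields finer information (individual bounds of the form $z_j^{*} \lesssim \lambda\sqrt{\log(n/j)}$ when more moments are available), while yours reaches the stated aggregate bound with less bookkeeping and explicit constants.

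One caveat concerns your boundary regime. Your fallback asserts $n \lesssim k\log(n/k)$ whenever $n/k$ is bounded, but this fails as $n/k \to 1$ (for $n = k+1$ one has $k\log(n/k) = O(1)$ while $n$ is large). However, in that regime the lemma as stated is itself false: when $n < ek$, the hypothesis does not even cover $p = 2$, and variables taking values $\pm\epsilon^{-1/2}$ with probability $\epsilon/2$ each (with $\epsilon = 1/n$) satisfy the moment condition with arbitrarily small $\lambda$ while the left-hand side is of order $\sqrt{n}$. So the statement implicitly assumes $2\log(n/k) \geq 2$, i.e. $n \geq ek$; under that reading your proof is complete once you split as follows: for $P \geq 3$ your main computation applies with $(P/2-1)^{-1} \leq 2$, and for $2 \leq P \leq 3$ your fallback is valid because $\log(n/k) \geq 1$ gives $n \leq e^{3/2}k \leq e^{3/2}k\log(n/k)$, while $\lambda \geq 1/\sqrt{2}$ follows from the variance-one assumption combined with the $p=2$ moment bound. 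Making this case split and the implicit assumption explicit would close the only gap in the write-up.
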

Putting these proof ingredients together is suitable to establish \Cref{thm:AK22:41,thm:Cor36:CCL20,thm:frame:heavytailed}.

\vspace{0.75mm}
\paragraph{Open Questions} To the best of our knowledge, results on recovery guarantees for dictionary or frame sparse signals for structured design matrices for minimum measurement numbers are rare and hardly existent, especially for $\ell_1$-synthesis decoders. It would be interesting to investigate if some of the presented techniques can be adapted to analyze designs based on partial bounded orthogonal systems or partial circulant matrices in the frame-sparse setup. A generalization of the analysis of subsampled convolutions of~\cite{MRW18} to this setup, potentially even with heavy-tailed generator vector, might be in reach due to the similarity of their techniques, but remains open at this point.

While not a focus of this chapter, we note that also for $\ell_1$-analysis decoders such as $\Lambda_{F, A,\eta}$ from \eqref{equ:l1a}, there are only few strong results about structured random measurements available, which are based on the \eqref{frip} of \Cref{def:Fp}, such as those of \cite{Krahmer-2015Compressive}. It would be interesting to explore whether NSP-based proof strategies such as presented in this chapter leads to guarantees applicable for fewer measurements than $F$-RIP based ones for $\ell_1$-analysis decoders.

\section{Outlook}
In this chapter, we presented results about the interplay of decoder, sensing matrix properties enabling robust recovery and discussed the question of sensing matrix design in this context, for the problem of identifying signals that are linear combinations of few dictionary atoms or frame vectors. We also presented several open research directions. By focusing on convex decoders, our survey cannot do justice to the analysis of practically very relevant non-convex and iterative decoders; however, the comparatively mature theory of convex decoders can be seen as a starting point for further developing the analysis of such decoders as well.

Related to the sparse recovery problems we studied are also low-rank matrix recovery problems \cite{Davenport16,CaiWeiExploiting18,chen_chi18,ChiLuChen19,Fuchs-2022Proof}, non-linear variants of which have gained renewed interest in the machine learning community \cite{Hu-2022LoRA,Zeng-2024ExpressiveLoRA,Jang-2024LoRA}. While robust recovery guarantees based on (robust) null space properties are also available for such problems \cite{Yi-2020Necessary}, existing ones are not applicable in a uniform manner for many important special cases such as entrywise sampling (matrix completion). However, we note that relevant problem variants can be considered as \emph{Schatten-$1$-synthesis} or \emph{Schatten-$1$-analysis} problems in analogy to the concepts presented here. Examples are the Euclidean distance geometry problems, whose low-rank optimization perspective can be considered as a nuclear (or Schatten-$1$) norm minimization problem with respect to a synthesis operator that maps Gram matrices to pairwise distance matrices \cite{Dokmanic-2015,Tasissa-2018EDG}, and low-rank (block-)Hankel/Toeplitz recovery problems \cite{ChenChi-2014,Jin-2016General,Cai-2019Fast}, which can be framed as low-rank recovery problems with respect to a Hankel or Toeplitz-type analysis operator. Exploring the benefits of this perspective remains for future investigations.

\section*{Acknowledgements}
XC is partially funded by NSF grant 2307827.

\bibliographystyle{amsplain}
\bibliography{ref_24_08} 

\providecommand{\bysame}{\leavevmode\hbox to3em{\hrulefill}\thinspace}
\providecommand{\MR}{\relax\ifhmode\unskip\space\fi MR }
% \MRhref is called by the amsart/book/proc definition of \MR.
\providecommand{\MRhref}[2]{%
  \href{http://www.ams.org/mathscinet-getitem?mr=#1}{#2}
}
\providecommand{\href}[2]{#2}
\begin{thebibliography}{100}

\bibitem{AK22}
Pedro Abdalla and Christian K{\"u}mmerle, \emph{Dictionary-sparse recovery from
  heavy-tailed measurements}, Information and Inference: A Journal of the IMA
  \textbf{11} (2022), no.~4, 1501--1526.

\bibitem{Adamczak-2011Restricted}
Radoslaw Adamczak, Alexander~E. Litvak, Alain Pajor, and Nicole
  Tomczak-Jaegermann, \emph{Restricted isometry property of matrices with
  independent columns and neighborly polytopes by random sampling},
  Constructive Approximation \textbf{34} (2011), 61--88.

\bibitem{Adcock-2021Compressive}
Ben Adcock and Anders~C. Hansen, \emph{Compressive imaging: Structure,
  sampling, learning}, Cambridge University Press, 2021.

\bibitem{ACP11}
Akram Aldroubi, Xuemei Chen, and Alex Powell, \emph{Stability and robustness of
  $\ell_q$ minimization using null space property}, Proceedings of the 9th
  International Conference on Sampling Theory and Applications (SampTA),
  Singapore, 2011.

\bibitem{ACP12}
Akram Aldroubi, Xuemei Chen, and Alexander~M. Powell, \emph{Perturbations of
  measurement matrices and dictionaries in compressed sensing}, Appl. Comput.
  Harmon. Anal. \textbf{33} (2012), no.~2, 282--291.

\bibitem{BCM12}
Boris Alexeev, Jameson Cahill, and Dustin~G. Mixon, \emph{Full {S}park
  {F}rames}, J. Fourier Anal. Appl. \textbf{18} (2012), no.~6, 1167--1194.
  \MR{3000979}

\bibitem{Arjoune-2018Performance}
Youness Arjoune, Naima Kaabouch, Hassan El~Ghazi, and Ahmed Tamtaoui, \emph{A
  performance comparison of measurement matrices in compressive sensing},
  International Journal of Communication Systems \textbf{31} (2018), no.~10,
  e3576.

\bibitem{Bandeira-2013Road}
Afonso~S. Bandeira, Matthew Fickus, Dustin~G. Mixon, and Percy Wong, \emph{The
  road to deterministic matrices with the restricted isometry property},
  Journal of Fourier Analysis and Applications \textbf{19} (2013), no.~6,
  1123--1149.

\bibitem{Baraniuk-2008Simple}
Richard Baraniuk, Mark Davenport, Ronald DeVore, and Michael Wakin, \emph{A
  simple proof of the restricted isometry property for random matrices},
  Constructive approximation \textbf{28} (2008), 253--263.

\bibitem{Baraniuk-2017Compressive}
Richard Baraniuk and Philippe Steeghs, \emph{Compressive sensing: A new
  approach to seismic data acquisition}, The Leading Edge \textbf{36} (2017),
  no.~8, 642--645.

\bibitem{Boucheron-book2013}
St{\'e}phane Boucheron, G{\'a}bor Lugosi, and Pascal Massart,
  \emph{Concentration inequalities: A nonasymptotic theory of independence},
  Oxford university press, 2013.

\bibitem{Bourgain-2011ExplicitRIP}
Jean Bourgain, Stephen Dilworth, Kevin Ford, Sergei Konyagin, and Denka
  Kutzarova, \emph{{Explicit constructions of RIP matrices and related
  problems}}, Duke Mathematical Journal \textbf{159} (2011), no.~1, 145 -- 185.

\bibitem{BG09}
Jean Bourgain and M.~Z. Garaev, \emph{On a variant of sum-product estimates and
  explicit exponential sum bounds in prime fields}, Mathematical Proceedings of
  the Cambridge Philosophical Society, vol. 146, Cambridge University Press,
  2009, pp.~1--21.

\bibitem{Brugiapaglia-2018Robustness}
Simone Brugiapaglia and Ben Adcock, \emph{Robustness to unknown error in sparse
  regularization}, IEEE Transactions on Information Theory \textbf{64} (2018),
  no.~10, 6638--6661.

\bibitem{CCW16}
Jameson Cahill, Xuemei Chen, and Rongrong Wang, \emph{The gap between the null
  space property and the restricted isometry property}, Linear Algebra and its
  Applications \textbf{501} (2016), 363--375.

\bibitem{CM21}
Jameson Cahill and Dustin~G. Mixon, \emph{Robust width: A characterization of
  uniformly stable and robust compressed sensing}, Excursions in Harmonic
  Analysis, Volume 6: In Honor of John Benedetto's 80th Birthday (2021),
  343--371.

\bibitem{CCS08}
Jian-Feng Cai, Raymond~H Chan, and Zuowei Shen, \emph{A framelet-based image
  inpainting algorithm}, Applied and Computational Harmonic Analysis
  \textbf{24} (2008), no.~2, 131--149.

\bibitem{Cai-2019Fast}
Jian-Feng Cai, Tianming Wang, and Ke~Wei, \emph{{Fast and provable algorithms
  for spectrally sparse signal reconstruction via low-rank Hankel matrix
  completion}}, Applied and Computational Harmonic Analysis \textbf{46} (2019),
  no.~1, 94--121.

\bibitem{CaiWeiExploiting18}
Jian-Feng Cai and Ke~Wei, \emph{Exploiting the structure effectively and
  efficiently in low-rank matrix recovery}, Processing, Analyzing and Learning
  of Images, Shapes, and Forms \textbf{19} (2018), 21 pp.

\bibitem{CZ13}
T.~Tony Cai and Anru Zhang, \emph{Sparse representation of a polytope and
  recovery of sparse signals and low-rank matrices}, IEEE transactions on
  information theory \textbf{60} (2013), no.~1, 122--132.

\bibitem{Cai-2013Sparse}
\bysame, \emph{Sparse representation of a polytope and recovery of sparse
  signals and low-rank matrices}, IEEE transactions on information theory
  \textbf{60} (2013), no.~1, 122--132.

\bibitem{Decode}
Emmanuel Cand\`es and Terence Tao, \emph{Decoding by linear programming}, IEEE
  Transactions on Information Theory \textbf{51} (2005), no.~12, 4203--4215.

\bibitem{CT06}
\bysame, \emph{Near optimal signal recovery from random projections and
  universal encoding strategies}, IEEE Transactions on Information Theory
  \textbf{52} (2006), 5406--5425.

\bibitem{CENR11}
Emmanuel~J. Cand\`es, Yonina~C. Eldar, Deanna Needell, and Paige Randall,
  \emph{Compressed sensing with coherent and redundant dictionaries}, Applied
  and Computational Harmonic Analysis \textbf{31} (2011), no.~1, 59--73.

\bibitem{Candes-2015Phase}
Emmanuel~J. Cand\`es, Yonina~C. Eldar, Thomas Strohmer, and Vladislav
  Voroninski, \emph{Phase retrieval via matrix completion}, SIAM Review
  \textbf{57} (2015), no.~2, 225--251.

\bibitem{CRT06}
Emmanuel~J. Cand\`es, Justin~K. Romberg, and Terence Tao, \emph{Stable signal
  recovery from incomplete and inaccurate measurements}, Communications on Pure
  and Applied Mathematics \textbf{59} (2006), no.~8, 1207--1223.

\bibitem{CCL20}
Peter~G. Casazza, Xuemei Chen, and Richard~G. Lynch, \emph{Preserving
  injectivity under subgaussian mappings and its application to compressed
  sensing}, Applied and Computational Harmonic Analysis \textbf{49} (2020),
  no.~2, 451--470.

\bibitem{Chen-2001Atomic}
Scott~Shaobing Chen, David~L. Donoho, and Michael~A. Saunders, \emph{Atomic
  decomposition by basis pursuit}, SIAM review \textbf{43} (2001), no.~1,
  129--159.

\bibitem{C24}
Xuemei Chen, \emph{A unified recovery of structured signals using atomic norm},
  Information and Inference: A Journal of the IMA \textbf{13} (2024), no.~1,
  iaad050.

\bibitem{CWW14}
Xuemei Chen, Haichao Wang, and Rongrong Wang, \emph{A null space analysis of
  the $\ell_1$-synthesis method in dictionary-based compressed sensing},
  Applied and Computational Harmonic Analysis \textbf{37} (2014), no.~3,
  492--515.

\bibitem{Chen-2015Incoherence}
Yudong Chen, \emph{Incoherence-optimal matrix completion}, IEEE Transactions on
  Information Theory \textbf{61} (2015), no.~5, 2909--2923.

\bibitem{chen_chi18}
Yudong Chen and Yuejie Chi, \emph{{Harnessing Structures in Big Data via
  Guaranteed Low-Rank Matrix Estimation: Recent Theory and Fast Algorithms via
  Convex and Nonconvex Optimization}}, {IEEE} Signal Process. Mag. \textbf{35}
  (2018), no.~4, 14--31.

\bibitem{ChenChi-2014}
Yuxin Chen and Yuejie Chi, \emph{{Robust Spectral Compressed Sensing via
  Structured Matrix Completion}}, IEEE Trans. Inf. Theory \textbf{60} (2014),
  no.~10, 6576--6601.

\bibitem{ChiLuChen19}
Yue M.~Lu Chi, Yuejie and Yuxin Chen, \emph{{Nonconvex Optimization Meets
  Low-Rank Matrix Factorization: An Overview}}, IEEE Trans. Signal Process.
  \textbf{67} (2019), no.~20, 5239--5269.

\bibitem{CDTW18}
Abdellah Chkifa, Nick Dexter, Hoang Tran, and Clayton Webster, \emph{Polynomial
  approximation via compressed sensing of high-dimensional functions on lower
  sets}, Mathematics of Computation \textbf{87} (2018), no.~311, 1415--1450.

\bibitem{Clum-2022Derandomized}
Charles Clum and Dustin~G. Mixon, \emph{Derandomized compressed sensing with
  nonuniform guarantees for $\ell_1$ recovery}, Journal of Fourier Analysis and
  Applications \textbf{28} (2022), no.~2, 35.

\bibitem{CDD09}
Albert Cohen, Wolfgang Dahmen, and Ronald DeVore, \emph{Compressed sensing and
  best k-term approximation}, Journal of the American Mathematical Society
  \textbf{22} (2009), 211--231.

\bibitem{DNW}
Mark~A. Davenport, Deanna Needell, and Michael~B. Wakin, \emph{Signal space
  cosamp for sparse recovery with redundant dictionaries}, IEEE Transactions on
  Information Theory \textbf{59} (2012), no.~10, 6820--6829.

\bibitem{Davenport16}
Mark~A. Davenport and Justin Romberg, \emph{An overview of low-rank matrix
  recovery from incomplete observations}, IEEE J. Sel. Topics Signal Process.
  \textbf{10} (2016), 608--622.

\bibitem{DLR18}
S.~{Dirksen}, G.~{Lecu\'{e}}, and H.~{Rauhut}, \emph{On the gap between
  restricted isometry properties and sparse recovery conditions}, IEEE Trans.
  Inf. Theory \textbf{64} (2018), no.~8, 5478--5487.

\bibitem{Dokmanic-2015}
Ivan Dokmani\'c, Reza Parhizkar, Juri Ranieri, and Martin Vetterli,
  \emph{{Euclidean Distance Matrices: Essential theory, Algorithms, and
  Applications}}, IEEE Signal Processing Magazine \textbf{32} (2015), no.~6,
  12--30.

\bibitem{D06}
David Donoho, \emph{{Compressed Sensing}}, IEEE Transactions on Information
  Theory \textbf{52} (2006), no.~4, 1289--1306.

\bibitem{Don18}
David Donoho, \emph{Blackboard to bedside: How high-dimensional geometry is
  transforming the mri industry}, Notices of the AMS \textbf{65} (2018), no.~1,
  40--44.

\bibitem{DE03}
David Donoho and Michael Elad, \emph{Optimally sparse representation in general
  (nonorthogonal) dictionaries via $l_1$ minimization}, Proceedings of the
  National Academy of Science \textbf{100} (2003), no.~5, 2197--202.

\bibitem{Donoho-1992Signal}
David~L. Donoho and Benjamin~F. Logan, \emph{Signal recovery and the large
  sieve}, SIAM Journal on Applied Mathematics \textbf{52} (1992), no.~2,
  577--591.

\bibitem{Dorfler-2022Bridging}
Florian D{\"o}rfler, Jeremy Coulson, and Ivan Markovsky, \emph{Bridging direct
  and indirect data-driven control formulations via regularizations and
  relaxations}, IEEE Transactions on Automatic Control \textbf{68} (2022),
  no.~2, 883--897.

\bibitem{EladAharon-2006ImageDenoising}
Michael Elad and Michal Aharon, \emph{Image denoising via sparse and redundant
  representations over learned dictionaries}, IEEE Transactions on Image
  Processing \textbf{15} (2006), no.~12, 3736--3745.

\bibitem{Elad07}
Michael Elad, Peyman Milanfar, and Ron Rubinstein, \emph{Analysis versus
  synthesis in signal priors}, Inverse Problems \textbf{23} (2007), no.~3,
  947--968.

\bibitem{Fazel_hindi_boyd04}
Maryam Fazel, Haitham~A. Hindi, and Stephen~P. Boyd, \emph{Rank minimization
  and applications in system theory}, Proceedings of the 2004 American Control
  Conference, no.~4, 2004, pp.~3273--3278.

\bibitem{FWHS16}
Christoph Forman, Jens Wetzl, Carmel Hayes, and Michaela Schmidt,
  \emph{Compressed sensing: a paradigm shift in mri}, MAGNETOM Flash (2016),
  19.

\bibitem{F14}
Simon Foucart, \emph{Stability and robustness of $\ell_1$-minimizations with
  weibull matrices and redundant dictionaries}, Linear Algebra and its
  Applications \textbf{441} (2014), 4--21.

\bibitem{Foucart-2016Dictionary}
\bysame, \emph{Dictionary-sparse recovery via thresholding-based algorithms},
  Journal of Fourier Analysis and Applications \textbf{22} (2016), 6--19.

\bibitem{Foucart-2009Lq}
Simon Foucart and Ming-Jun Lai, \emph{Sparsest solutions of underdetermined
  linear systems via $\ell_q$-minimization for $0<q \leq 1$}, Applied and
  Computational Harmonic Analysis \textbf{26} (2009), no.~3, 395--407.

\bibitem{Foucart-2010Gelfand}
Simon Foucart, Alain Pajor, Holger Rauhut, and Tino Ullrich, \emph{The gelfand
  widths of $\ell_p$-balls for $0< p \leq 1$}, Journal of Complexity
  \textbf{26} (2010), no.~6, 629--640.

\bibitem{FR13}
Simon Foucart and Holger Rauhut, \emph{A mathematical introduction to
  compressive sensing}, Birkh\"auser/Springer, New York, 2013.

\bibitem{Fuchs-2022Proof}
Tim Fuchs, David Gross, Peter Jung, Felix Krahmer, Richard Kueng, and Dominik
  St{\"o}ger, \emph{Proof methods for robust low-rank matrix recovery},
  Compressed Sensing in Information Processing, Springer, 2022, pp.~37--75.

\bibitem{Garnaev84}
Andrey~Y. Garnaev and Efim~Davydovich Gluskin, \emph{On widths of the euclidean
  ball}, Journal of Soviet Mathematics \textbf{30} (1984), 200--204.

\bibitem{Giryes-2014Greedy}
Raja Giryes, Sangnam Nam, Michael Elad, R{\'e}mi Gribonval, and Mike~E Davies,
  \emph{Greedy-like algorithms for the cosparse analysis model}, Linear Algebra
  and its Applications \textbf{441} (2014), 22--60.

\bibitem{Giryes-2015Greedy}
Raja Giryes and Deanna Needell, \emph{Greedy signal space methods for
  incoherence and beyond}, Applied and Computational Harmonic Analysis
  \textbf{39} (2015), no.~1, 1--20.

\bibitem{Gluskin-1989Extremal}
Efim~Davydovich Gluskin, \emph{Extremal properties of orthogonal
  parallelepipeds and their applications to the geometry of banach spaces},
  Mathematics of the USSR-Sbornik \textbf{64} (1989), no.~1, 85.

\bibitem{GN03}
R{\'e}mi Gribonval and Morten Nielsen, \emph{Sparse representations in unions
  of bases}, IEEE Transactions on Information Theory \textbf{49} (2003),
  no.~12, 3320--3325.

\bibitem{Gross-2011Recovering}
David Gross, \emph{Recovering low-rank matrices from few coefficients in any
  basis}, IEEE Transactions on Information Theory \textbf{57} (2011), no.~3,
  1548--1566.

\bibitem{Guedon-2022Geometry}
Olivier Gu{\'e}don, Felix Krahmer, Christian K{\"u}mmerle, Shahar Mendelson,
  and Holger Rauhut, \emph{On the geometry of polytopes generated by
  heavy-tailed random vectors}, Communications in Contemporary Mathematics
  \textbf{24} (2022), no.~03, 2150056.

\bibitem{Hastie-2015statistical}
Trevor Hastie, Robert Tibshirani, and Martin Wainwright, \emph{Statistical
  learning with sparsity}, Monographs on Statistics and Applied Probability
  \textbf{143} (2015), no.~143, 8.

\bibitem{H10}
Jarvis Haupt, Waheed~U Bajwa, Gil Raz, and Robert Nowak, \emph{Toeplitz
  compressed sensing matrices with applications to sparse channel estimation},
  IEEE transactions on information theory \textbf{56} (2010), no.~11,
  5862--5875.

\bibitem{Her10}
Felix~J. Herrmann, \emph{Randomized sampling and sparsity: Getting more
  information from fewer samples}, Geophysics \textbf{75} (2010), no.~6,
  WB173--WB187.

\bibitem{Hu-2022LoRA}
Edward~J. Hu, Yelong Shen, Phillip Wallis, Zeyuan Allen-Zhu, Yuanzhi Li, Shean
  Wang, Lu~Wang, and Weizhu Chen, \emph{Lo{RA}: Low-rank adaptation of large
  language models}, International Conference on Learning Representations
  (ICLR), 2022.

\bibitem{Jang-2024LoRA}
Uijeong Jang, Jason~D. Lee, and Ernest~K. Ryu, \emph{Lo{RA} training in the
  {NTK} regime has no spurious local minima}, Forty-first International
  Conference on Machine Learning (ICML), 2024.

\bibitem{Jaspan-2015compressed}
Oren~N Jaspan, Roman Fleysher, and Michael~L Lipton, \emph{Compressed sensing
  mri: a review of the clinical literature}, The British journal of radiology
  \textbf{88} (2015), no.~1056, 20150487.

\bibitem{Jin-2016General}
Kyong~Hwan Jin, Dongwook Lee, and Jong~Chul Ye, \emph{{A general framework for
  compressed sensing and parallel MRI using annihilating filter based low-rank
  Hankel matrix}}, IEEE Transactions on Computational Imaging \textbf{2}
  (2016), no.~4, 480--495.

\bibitem{Jin-2015Annihilating}
Kyong~Hwan Jin and Jong~Chul Ye, \emph{{Annihilating filter-based low-rank
  Hankel matrix approach for image inpainting}}, IEEE Transactions on Image
  Processing \textbf{24} (2015), no.~11, 3498--3511.

\bibitem{KT07}
Boris~S Kashin and Vladimir~N Temlyakov, \emph{A remark on compressed sensing},
  Mathematical notes \textbf{82} (2007), 748--755.

\bibitem{Kol11}
Vladimir Koltchinskii, \emph{{Oracle Inequalities in Empirical Risk
  Minimization and Sparse Recovery Problems: Ecole d'Et{\'e} de
  Probabilit{\'e}s de Saint-Flour XXXVIII-2008}}, vol. 2033, Springer,
  Heidelberg, 2011.

\bibitem{Koltchinskii-2015Bounding}
Vladimir Koltchinskii and Shahar Mendelson, \emph{Bounding the smallest
  singular value of a random matrix without concentration}, Int. Math. Res.
  Not. IRMN \textbf{2015} (2015), no.~23, 12991--13008.

\bibitem{koren_bell_volinsky}
Yehuda Koren, Robert Bell, and Chris Volinsky, \emph{Matrix factorization
  techniques for recommender systems}, Computer \textbf{42} (2009), no.~8,
  30--37.

\bibitem{Krahmer-2015Compressive}
Felix Krahmer, Deanna Needell, and Rachel Ward, \emph{Compressive sensing with
  redundant dictionaries and structured measurements}, SIAM Journal on
  Mathematical Analysis \textbf{47} (2015), no.~6, 4606--4629.

\bibitem{Krahmer-2014structured}
Felix Krahmer and Holger Rauhut, \emph{Structured random measurements in signal
  processing}, GAMM-Mitteilungen \textbf{37} (2014), no.~2, 217--238.

\bibitem{KrahmerWard-2014}
Felix Krahmer and Rachel Ward, \emph{Stable and robust sampling strategies for
  compressive imaging}, IEEE Transactions on Image Processing \textbf{23}
  (2014), no.~2, 612--622.

\bibitem{Kutyniok-2012Sompactly}
Gitta Kutyniok, Jakob Lemvig, and Wang-Q. Lim, \emph{Compactly supported
  shearlets}, Approximation Theory XIII: San Antonio 2010 (New York, NY)
  (Marian Neamtu and Larry Schumaker, eds.), Springer, New York, 2012,
  pp.~163--186.

\bibitem{Levy-1981Reconstruction}
Shlomo Levy and Peter~K Fullagar, \emph{Reconstruction of a sparse spike train
  from a portion of its spectrum and application to high-resolution
  deconvolution}, Geophysics \textbf{46} (1981), no.~9, 1235--1243.

\bibitem{LLMLY12}
Yulong Liu, Shidong Li, Tiebin Mi, Hong Lei, and Weidong Yu, \emph{Performance
  analysis of $\ell_1$-synthesis with coherent frames}, 2012 IEEE International
  Symposium on Information Theory Proceedings, IEEE, 2012, pp.~2042--2046.

\bibitem{Log65}
Benjamin~Franklin Logan, \emph{Properties of high-pass signals}, Ph.D. thesis,
  Columbia University, 1965.

\bibitem{LDP07}
Michael Lustig, David Donoho, and John~M. Pauly, \emph{{Sparse MRI: The
  application of compressed sensing for rapid MR imaging}}, Magnetic Resonance
  in Medicine \textbf{58} (2007), no.~6, 1182--1195.

\bibitem{Mairal-2008Supervised}
Julien Mairal, Jean Ponce, Guillermo Sapiro, Andrew Zisserman, and Francis
  Bach, \emph{Supervised dictionary learning}, Advances in Neural Information
  Processing Systems (NIPS) \textbf{21} (2008), 1033--1040.

\bibitem{Mallat-1999Wavelet}
St{\'e}phane Mallat, \emph{A wavelet tour of signal processing}, Elsevier,
  1999.

\bibitem{MBKW23}
Maximilian M{\"a}rz, Claire Boyer, Jonas Kahn, and Pierre Weiss, \emph{Sampling
  rates for $\ell_1$-synthesis}, Foundations of Computational Mathematics
  \textbf{23} (2023), no.~6, 2089--2150.

\bibitem{Mendelson-2014Remark}
Shahar Mendelson, \emph{A remark on the diameter of random sections of convex
  bodies}, Geometric Aspects of Functional Analysis: Israel Seminar (GAFA)
  2011-2013, Springer, 2014, pp.~395--404.

\bibitem{MendelsonLearning15}
\bysame, \emph{Learning without concentration}, J. ACM \textbf{62} (2015),
  no.~3, 21:1--21:25.

\bibitem{ML17}
Shahar Mendelson and Guillaume Lecu\'{e}, \emph{Sparse recovery under weak
  moment assumptions}, J. Eur. Math. Soc. \textbf{19} (2017), no.~3, 881--904.

\bibitem{MPT08}
Shahar Mendelson, Alain Pajor, and Nicole Tomczak-Jaegermann, \emph{Uniform
  uncertainty principle for bernoulli and subgaussian ensembles}, Constructive
  Approximation \textbf{28} (2008), 277--289.

\bibitem{MRW18}
Shahar Mendelson, Holger Rauhut, and Rachel Ward, \emph{Improved bounds for
  sparse recovery from subsampled random convolutions}, Ann. Appl. Probab.
  \textbf{28} (2018), no.~6, 3491--3527.

\bibitem{Monajemi-2013Deterministic}
Hatef Monajemi, Sina Jafarpour, Matan Gavish, Stat 330/CME~362 Collaboration,
  David~L. Donoho, Sivaram Ambikasaran, Sergio Bacallado, Dinesh Bharadia,
  Yuxin Chen, Young Choi, et~al., \emph{Deterministic matrices matching the
  compressed sensing phase transitions of gaussian random matrices},
  Proceedings of the National Academy of Sciences \textbf{110} (2013), no.~4,
  1181--1186.

\bibitem{Papyan-2018Theoretical}
Vardan Papyan, Yaniv Romano, Jeremias Sulam, and Michael Elad,
  \emph{Theoretical foundations of deep learning via sparse representations: A
  multilayer sparse model and its connection to convolutional neural networks},
  IEEE Signal Processing Magazine \textbf{35} (2018), no.~4, 72--89.

\bibitem{PelegElad-2013}
Tomer Peleg and Michael Elad, \emph{Performance guarantees of the thresholding
  algorithm for the cosparse analysis model}, IEEE Transactions on Information
  Theory \textbf{59} (2013), no.~3, 1832--1845.

\bibitem{RRT12}
Holger Rauhut, Justin Romberg, and Joel~A Tropp, \emph{Restricted isometries
  for partial random circulant matrices}, Applied and Computational Harmonic
  Analysis \textbf{32} (2012), no.~2, 242--254.

\bibitem{DMP1}
Holger Rauhut, Karin Schnass, and Pierre Vandergheynst, \emph{Compressed
  sensing and redundant dictionaries}, IEEE Transactions on Information Theory
  \textbf{54} (2008), no.~5, 2210--2219.

\bibitem{Ravishankar2019image}
Saiprasad Ravishankar, Jong~Chul Ye, and Jeffrey~A Fessler, \emph{Image
  reconstruction: From sparsity to data-adaptive methods and machine learning},
  Proceedings of the IEEE \textbf{108} (2019), no.~1, 86--109.

\bibitem{recht2013parallel}
Benjamin Recht and Christopher R{\'e}, \emph{Parallel stochastic gradient
  algorithms for large-scale matrix completion}, Mathematical Programming
  Computation \textbf{5} (2013), no.~2, 201--226.

\bibitem{Rish-2014Sparse}
Irina Rish and Genady Grabarnik, \emph{Sparse modeling: Theory, algorithms, and
  applications}, CRC Press, 2014.

\bibitem{RV08}
Mark Rudelson and Roman Vershynin, \emph{On sparse reconstruction from
  {F}ourier and {G}aussian measurements}, Communications on Pure and Applied
  Mathematics \textbf{61} (2008), 1025--1045.

\bibitem{Saab-2010Sparse}
Rayan Saab and {\"O}zg{\"u}r Y{\i}lmaz, \emph{Sparse recovery by non-convex
  optimization--instance optimality}, Applied and Computational Harmonic
  Analysis \textbf{29} (2010), no.~1, 30--48.

\bibitem{Santosa86}
Fadil Santosa and William~W. Symes, \emph{Linear inversion of band-limited
  reflection seismograms}, SIAM Journal on Scientific and Statistical Computing
  \textbf{7} (1986), no.~4, 1307--1330.

\bibitem{Sengupta-2017New}
Soumyadip Sengupta, Tal Amir, Meirav Galun, Tom Goldstein, David~W Jacobs, Amit
  Singer, and Ronen Basri, \emph{A new rank constraint on multi-view
  fundamental matrices, and its application to camera location recovery},
  Proceedings of the IEEE Conference on Computer Vision and Pattern Recognition
  (CVPR), 2017, pp.~4798--4806.

\bibitem{Shechtman-2015Phase}
Yoav Shechtman, Yonina~C Eldar, Oren Cohen, Henry~Nicholas Chapman, Jianwei
  Miao, and Mordechai Segev, \emph{Phase retrieval with application to optical
  imaging: A contemporary overview}, IEEE Signal Processing Magazine
  \textbf{32} (2015), no.~3, 87--109.

\bibitem{Talagrand-2021Upper}
Michel Talagrand, \emph{Upper and lower bounds for stochastic processes:
  Decomposition theorems}, Ergebnisse der Mathematik und ihrer Grenzgebiete. 3.
  Folge / A Series of Modern Surveys in Mathematics, vol.~60, Springer Science
  \& Business Media, 2021.

\bibitem{Tasissa-2018EDG}
Abiy Tasissa and Rongjie Lai, \emph{Exact reconstruction of euclidean distance
  geometry problem using low-rank matrix completion}, IEEE Transactions on
  Information Theory \textbf{65} (2018), no.~5, 3124--3144.

\bibitem{Thesing-2021Non}
Laura Thesing and Anders~Christian Hansen, \emph{Non-uniform recovery
  guarantees for binary measurements and infinite-dimensional compressed
  sensing}, Journal of Fourier Analysis and Applications \textbf{27} (2021),
  no.~2, 14.

\bibitem{Tibshirani-1996Lasso}
Robert Tibshirani, \emph{Regression shrinkage and selection via the lasso},
  Journal of the Royal Statistical Society: Series B (Methodological)
  \textbf{58} (2018), no.~1, 267--288.

\bibitem{Tillmann-2013Computational}
Andreas~M. Tillmann and Marc~E. Pfetsch, \emph{The computational complexity of
  the restricted isometry property, the nullspace property, and related
  concepts in compressed sensing}, IEEE Transactions on Information Theory
  \textbf{60} (2013), no.~2, 1248--1259.

\bibitem{Tirer-2020GeneralizingCoSaMP}
Tom Tirer and Raja Giryes, \emph{Generalizing cosamp to signals from a union of
  low dimensional linear subspaces}, Applied and Computational Harmonic
  Analysis \textbf{49} (2020), no.~1, 99--122.

\bibitem{T04}
Joel~A. Tropp, \emph{Greed is good: Algorithmic results for sparse
  approximation}, IEEE Transactions on Information theory \textbf{50} (2004),
  no.~10, 2231--2242.

\bibitem{T15}
\bysame, \emph{Convex recovery of a structured signal from independent random
  linear measurements}, Sampling Theory, a Renaissance: Compressive Sensing and
  Other Developments (2015), 67--101.

\bibitem{Ver10}
Roman Vershynin, \emph{Introduction to the non-asymptotic analysis of random
  matrices}, Compressed Sensing, Theory and Applications (Y.~Eldar and
  G.~Kutyniok, eds.), Cambridge University Press, Cambridge, 2012,
  pp.~210--268.

\bibitem{Ver18:High-Dimensional-Probability}
Roman Vershynin, \emph{High-dimensional probability: An introduction with
  applications in data science}, Cambridge Series in Statistical and
  Probabilistic Mathematics, vol.~47, Cambridge University Press, Cambridge,
  2018.

\bibitem{Vidyasagar-2019CS}
Mathukumalli Vidyasagar, \emph{An introduction to compressed sensing}, Society
  for Industrial and Applied Mathematics, Philadelphia, PA, 2019.

\bibitem{Wipf-2010Iterative}
David Wipf and Srikantan Nagarajan, \emph{Iterative reweighted $\ell_1$ and
  $\ell_2$ methods for finding sparse solutions}, IEEE Journal of Selected
  Topics in Signal Processing \textbf{4} (2010), no.~2, 317--329.

\bibitem{W10}
Przemyslaw Wojtaszczyk, \emph{Stability and instance optimality for gaussian
  measurements in compressed sensing}, Foundations of Computational Mathematics
  \textbf{10} (2010), 1--13.

\bibitem{Yi-2020Necessary}
Jirong Yi and Weiyu Xu, \emph{Necessary and sufficient null space condition for
  nuclear norm minimization in low-rank matrix recovery}, IEEE Transactions on
  Information Theory \textbf{66} (2020), no.~10, 6597--6604.

\bibitem{Zeng-2024ExpressiveLoRA}
Yuchen Zeng and Kangwook Lee, \emph{The expressive power of low-rank
  adaptation}, The Twelfth International Conference on Learning Representations
  (ICLR), 2024.

\bibitem{Zhao-2018Sparse}
Yun-Bin Zhao, \emph{Sparse optimization theory and methods}, 1st ed., CRC
  Press, 2018.

\bibitem{Zheng2017-LpL1}
Le~Zheng, Arian Maleki, Haolei Weng, Xiaodong Wang, and Teng Long, \emph{Does
  $\ell_{p}$-minimization outperform $\ell_{1}$-minimization?}, IEEE
  Transactions on Information Theory \textbf{63} (2017), no.~11, 6896--6935.

\end{thebibliography}
\end{document}